\newcommand{\ceil}[1]{{\lceil {#1} \rceil}}
\newcommand{\floor}[1]{{\lfloor {#1} \rfloor}}
\newcommand{\ignore}[1]{}
\DeclareMathOperator{\la}{la}
\renewcommand{\Pr}[1]{\operatorname{Pr}\mathopen{}\left( #1 \right)\mathclose{}}
\newcommand{\Expect}[1]{\operatorname*{E}\mathopen{}\left( #1 \right)\mathclose{}}
\newtheorem {theorem}{Theorem}[section]
\newtheorem {proposition}[theorem]{Proposition}
\newtheorem {lemma}[theorem]{Lemma}
\newtheorem {corollary}[theorem]{Corollary}
\newcommand{\mycase}[1]{\noindent{\bf Case #1\ }}
\newcommand{\mon}{{\text{mon}}}
\newcommand{\monW}{{\text{monW}}}
\newcommand{\treeroot}{{\text{root}}}
\newcommand{\children}{{\text{children}}}
\newcommand{\depth}{{\text{depth}}}
\newcommand{\id}{{\text{id}}}
\newcommand{\parent}{{\text{parent}}}
\newcommand{\field}[1]{\textup{GF}(#1)}
\newcommand{\x}{{\bf x}}
\newcommand{\y}{{\bf y}}
\newcommand{\z}{{\bf z}}
\newcommand{\ProblemFormat}[1]{{\sc #1}}
\newcommand{\ProblemName}[1]{\ProblemFormat{#1}\xspace}
\newcommand{\probHamilton}{\ProblemName{Hamiltonicity}}
\newcommand{\probHamPath}{\ProblemName{Hamiltonian Path}}
\newcommand{\probkPath}{\ProblemName{$k$-Path}}
\newcommand{\probkIST}{\ProblemName{$k$-IST}}
\newcommand{\heading}[1]{\medskip\noindent{\bf #1.\ }}%
\begin{document}

\title{Spotting Trees with Few Leaves\thanks{Work partially supported by the National Science Centre of Poland, grant number UMO-2013/09/B/ST6/03136 (\L{}K).}}
\author{Andreas Bj\"orklund\thanks{Lund University, Lund, Sweden}, Vikram Kamat\thanks{University of Warsaw, Poland}, {\L}ukasz Kowalik\thanks{University of Warsaw, Poland}, Meirav Zehavi\thanks{Technion IIT, Haifa, Israel}
}

\date{}

\maketitle

\renewcommand{\L}{\mathcal{L}}

\begin{abstract}
We show two results related to the \probHamilton and \probkPath algorithms in undirected graphs by Bj\"orklund [FOCS'10], and Bj\"orklund et al.,  [arXiv'10]. First, we demonstrate that the technique used can be generalized to finding some $k$-vertex tree with $l$ leaves in an $n$-vertex undirected graph in $O^*(1.657^k2^{l/2})$ time. It can be applied as a subroutine to solve the {\sc $k$-Internal Spanning Tree ($k$-IST)} problem in $O^*(\operatorname{min}(3.455^k, 1.946^n))$ time using polynomial space, improving upon previous algorithms for this problem.
In particular, for the first time we break the natural barrier of $O^*(2^n)$.
Second, we show that the iterated random bipartition employed by the algorithm can be improved whenever the host graph admits a vertex
 coloring with few colors; it can be an ordinary proper vertex coloring, a fractional vertex coloring, or a vector coloring.
 In effect, we show improved bounds for \probkPath and \probHamilton in any graph of maximum degree $\Delta=4,\ldots,12$ or with vector chromatic number at most~$8$.
\end{abstract}

\section{Introduction}

In this paper we study the {\sc $(k,l)$-Tree} problem. Given an undirected host graph $G$ on $n$ vertices, {\sc $(k,l)$-Tree} asks if $G$ contains a tree $T$ on $k$ vertices, such that the number of leaves in $T$ is exactly $l$. This problem is a natural generalization of the classic \probkPath problem: for $l=2$, the definitions of {\sc $(k,l)$-Tree} and \probkPath coincide.
For $k=n$ and $l=2$ we get the classic \probHamPath.
Furthermore, {\sc $(k,l)$-Tree} is tightly linked to the well-studied {\sc $k$-Internal Spanning Tree ($k$-IST)} problem, which asks if a given graph $G$ on $n$ vertices contains a spanning tree $T$ with at least $k$ internal vertices. Indeed, it is well-known that a yes-instance of {\sc $k$-IST} is a yes-instance of {\sc $(k+l,l)$-Tree} for some $l\leq k$, and vice versa (see Section~\ref{sec:kIST}).

Because of the connections to \probHamPath, the {\sc $(k,l)$-Tree}, {\sc $k$-IST} and \probkPath problems, even in bipartite graphs or in graphs of bounded degree 3, are NP-hard. In this paper we study parameterized algorithms, i.e., algorithms that attempt to solve NP-hard problems by confining the combinatorial explosion to a parameter $k$. More precisely, a problem is {\em fixed-parameter tractable (FPT)} with respect to a parameter $k$ if it can be solved in time $O^*(f(k))$ for some function $f$, where $O^*$ hides factors polynomial in the input size.
Our results have also consequences in the field of moderately exponential-time algorithms, where one aims at providing an $O(c^n)$-time algorithm, with the constant $c>1$ being as small as possible.

We develop an FPT algorithm for {\sc $(k,l)$-Tree} in general graphs that relies upon a non-trivial generalization of the technique underlying the \probHamilton and \probkPath algorithms in \cite{bjorklund-hamilton,BHKK-narrow-sieves}.
As a result, we break the natural barrier of $O^*(2^n)$ in the running time bound for the {\sc $k$-IST} problem.
Then, we conduct a thorough examination of the {\sc $(k,l)$-Tree} problem in special classes of graphs that admit a vertex coloring with few colors.
This in turn implies faster algoritms for \probHamilton, \probkPath and \probkIST in bounded degree graphs.
Apart from the classic vertex coloring, we consider fractional coloring and vector coloring, thus showing that the latter tool, famous in approximation algorithms, is also helpful in parameterized complexity.

\subsection{Related Work}

\begin{table}[center]
\centering
\begin{tabular}{|l|c|c|l|c|}
	\hline
	Reference 		                               & Det./Rand. & Space & Graph $G$     		 & Running Time          \\\hline \hline		
	Prieto et al.~\cite{kISP24klogk}						 & {\em det} 	& {\em poly} & Undirected 						 & $O^*(2^{O(k\log k)})$ \\\hline	
	Gutin al.~\cite{kIOB2klogk}	      					 & {\em det} 	& {\em exp} & Directed 							 & $O^*(2^{O(k\log k)})$ \\\hline
	Cohen et al.~\cite{kIOB49k}		      			   & {\em det} 	& {\em exp} & Directed 							 & $O^*(55.8^k)$         \\		
	 														                 & {\em rand} & {\em poly} & Directed 							 & $O^*(49.4^k)$         \\\hline	
	Fomin et al.~\cite{kIOB16k}		      			   & {\em det} 	& {\em exp} & Directed 							 & $O^*(16^{k+o(k)})$		 \\
																							 & {\em rand} & {\em poly} & Directed 							 & $O^*(16^{k+o(k)})$		 \\\hline	
	Fomin et al.~\cite{kISP8k}		      			   & {\em det} 	& {\em poly} & Undirected 						 & $O^*(8^k)$            \\\hline	
	Shachnai et al.~\cite{esarepresentative} 		 & {\em det}  & {\em exp} & Directed 					     & $O^*(6.855^k)$        \\\hline	
	Daligault~\cite{thesis11}, Zehavi~\cite{zehavi-ipec13}& {\em rand}&{\em poly}& Directed 		 & $O^*(4^k)$            \\\hline
	Li et al.~\cite{4kist}	    							   & {\em det} 	& {\em poly} & Undirected 						 & $O^*(4^k)$            \\\hline
	Zehavi~\cite{zehavi-arxiv}	      					 & {\em det} 	& {\em exp} & Directed 							 & $O^*(5.139^k)$        \\
		 														               & {\em rand} & {\em exp} & Directed 							 & $O^*(3.617^k)$        \\\hline	
	{\bf This paper}								             & {\bf rand} & {\bf poly} & {\bf Undirected} 			 & $\bf O^*(3.455^k)$    \\
\hline	
\end{tabular}\medskip
\caption{Known FPT algorithms for {\sc $k$-IST}.}
\label{tab:knownresults1}
\end{table}

The {\sc $k$-IST} problem, along with its directed version known as {\sc $k$-Internal Out-Branching ($k$-IOB}\footnote{In $k$-IOB problem, we need to decide if a directed graph $G$ contains a spanning tree $T$ with exactly one vertex of in-degree 0 and at least $k$ internal vertices.}), are of interest in database systems \cite{outbranchpatent} and water supply network design~\cite{kISPbounddeg}. Note that any algorithm for $k$-IOB works also for $k$-IST (after replacing every edge of the input graph by two oppositely oriented arcs).
Over the last 10 years $k$-IST and $k$-IOB were heavily researched, resulting in a number of algorithms using a variety of approaches.
We present the progress for FPT algorithms in Table~\ref{tab:knownresults1}.
For the special case of graphs of bounded degree $\Delta$, Zehavi~\cite{zehavi-ipec13} shows an algorithm for $k$-IOB running in time $O^*(4^{(1-\frac{\Delta+1}{2\Delta(\Delta-1)})k})$.
Another specialized algorithm, due to Raible et al.~\cite{kISPbounddeg}, solves $k$-IST in graphs of maximum degree 3 in time $O^*(2.137^k)$.
There also has been quite some interest in moderately exponential time algorithms for $k$-IST.
Unlike in many other graph problems, even an $O^*(2^n)$ algorithm is not completely trivial.
Algorithms achieving this bound were shown by Raible et al.~\cite{kISPbounddeg} (with exponential space) and Nederlof~\cite{NedkIST} (polynomial space).
Note that $O^*(2^n)$ is a natural barrier, as it corresponds to the number of all subsets of vertices. During recent years researchers managed to surpass this barrier for a number of problems, like {\sc Dominating Set} by Grandoni~\cite{domset-breaking}, {\sc Feedback Vertex Set} by Razgon~\cite{fvs-breaking} and {\sc Hamiltonicity} by Bj\"{o}rklund~\cite{bjorklund-hamilton}.
Raible et al.~\cite{kISPbounddeg} explicitly posed an open question asking whether the approach of Bj\"{o}rklund~\cite{bjorklund-hamilton} can be extended to $k$-IST either for general graphs or for graphs of large vertex cover.
Raible et al.~\cite{kISPbounddeg} were able to cross the $O^*(2^n)$ barrier for graphs of bounded degree --- they get an algorithm running in time $O^*((2^{\Delta+1}\!-\!1)^{\frac{n}{\Delta-1}})$ and present a further improvement to $O^*(1.862^n)$ when $\Delta=3$.


The {\sc $(k,l)$-Tree} problem was implicitly introduced by Cohen et al.~\cite{kIOB49k} as a tool for solving $k$-IOB.
They obtained a $O^*(6.14^k)$-time algorithm. Currently, the fastest randomized algorithms, due to Daligault~\cite{thesis11} and Zehavi~\cite{zehavi-ipec13}, run in time $O^*(2^k)$ and have a polynomial space complexity.
Note that we meet the $O^*(2^k)$ barrier again. 
For the deterministic case, Zehavi~\cite{zehavi-arxiv} (building upon~\cite{esarepresentative}) shows a $O^*(2.597^k)$-time algorithm for {\sc $(k,l)$-Tree} with an exponential space complexity.

The fundamental \probkPath problem is well-studied in the field of Parameterized Complexity.
In the past three decades, it enjoyed a race towards obtaining the fastest FPT algorithm (see \cite{colorcoding,BHKK-narrow-sieves,Bodlaender93,divandcol,representative,productFam,koutis-icalp08,Monien85,williams-ipl,zehavi-arxiv}).
Currently, the best randomized algorithm, due to Bj\"{o}rklund et al.~\cite{BHKK-narrow-sieves}, runs in time $O^*(1.657^k)$ and has a polynomial space complexity, and the best deterministic algorithm, due to Zehavi \cite{zehavi-arxiv} runs in time $O^*(2.597^k)$ and has an exponential space complexity.

The result in \cite{BHKK-narrow-sieves} extends Bj\"{o}rklund's $O^*(1.657^n)$ time algorithm for \probHamilton in \cite{bjorklund-hamilton}.
The same paper contains also a better bound of $O^*(2^{n/2})$ for bipartite graphs.
\probHamilton in graphs of bounded maximum degree $\Delta$ received a considerable attention beginning with the paper of Eppstein~\cite{Eppstein07},  followed by works of Iwama and Nakashima~\cite{iwama}, Gebauer~\cite{Gebauer08} and Bj\"{o}rklund et al.~\cite{BHKK-TSP}.
Currently the best algorithm for $\Delta=3$, due to Cygan et al.~\cite{cut-count}, runs in time $O^*(1.201^n)$ \cite{cut-count}, while prior to this work for $\Delta\ge 4$ the best bound was that of the general Bj\"orklund's $O^*(1.657^n)$-time algorithm.


\subsection{Our Contribution}
We obtain improved algorithms for {\sc $(k,l)$-Tree}, {\sc $k$-IST}, \probkPath and \probHamilton. All of our algorithms are randomized Monte-Carlo and have polynomial space complexities. Our contribution is twofold.

\heading{From Paths to Trees}
First, we develop an algorithm that solves {\sc $(k,l)$-Tree} in general graphs in time $O^*(1.657^k2^{\ell/2})$.
This can be seen as a generalization of the technique of Bj\"{o}rklund et al.~\cite{bjorklund-hamilton,BHKK-narrow-sieves} from detecting paths to detecting trees.
In the original technique one enumerates walks of length $k$ (rather than paths, which are walks without repeating vertices) and then uses an algebraic tool to sieve-out the walks which are not paths.
The algebraic tool is to design a polynomial which is a sum of monomials, each corresponding to a walk.
The trick is that thanks to the use of bijective labellings, the non-path walks can be paired-up so that both corresponding monomials are the same and thus cancel over a field of characteristic 2.
It is quite clear that in the tree case walks should be replaced by branching walks (see~\cite{nederlof-steiner}).
However, the main difficulty lies in the pairing argument, which requires a new labelling scheme and becomes much more delicate.
Indeed, this extension is non-trivial as it is exactly the topic of the open problem posed by Raible et al.~\cite{kISPbounddeg} mentioned in the previous section.
Our algorithm, similarly as in~\cite{BHKK-narrow-sieves} uses a random bipartition of the vertices.
The running time depends in a crucial way on a random variable (called the number of needed labels) in the resulting probability space.
The second difficulty was to determine the distribution of this variable (Lemma \ref{lem:prob}), and again this turned out to be much more tricky than in the path case.

The {\sc $(k,l)$-Tree} algorithm described above already implies an improved running time for {\sc $k$-IST} in general graphs.
However, it works much faster if the hidden spannning tree has few leaves.
We design a different strategy, based on finding a maximum matching, to accelerate the algorithm when it looks for solutions with many leaves.
By merging the two strategies, we obtain an $O^*(3.455^k)$ time algorithm for {\sc $k$-IST}.
This is the first $O^*((4-\Omega(1))^k)$-time algorithm that uses polynomial space.
It immediately implies a moderately exponential time algorithm running in time $O^*(1.946^n)$, which breaks the $O^*(2^n)$ barrier.

\begin{table}[center]
\centering
\begin{tabular}{||c||c||c||c||}
  \hline
  $\Delta$ & Running Time \\
  \hline
  $3$ &  $O^*(1.5705^{k})$ \\
  \hline
$4$ &  $O^*(2^{2k/3})=O^*(1.5874^{k})$ \\
  \hline
  $5, 6$ &  $O^*(2^{7k/10})=O^*(1.6245^{k})$ \\
  \hline
  $7, 8$ &  $O^*(2^{5k/7})=O^*(1.6406^{k})$ \\
  \hline
  $9, 10$ & $O^*(2^{13k/18})=O^*(1.6497^{k})$\\
  \hline
  $11, 12$ & $O^*(2^{8k/11})=O^*(1.6555^{k})$ \\
  \hline
\end{tabular} \quad \begin{tabular}{||c||c||c||c||}
  \hline
  $\chi_v(G)$ & Running Time \\
  \hline
  4 & $O^*(1.6199^{k})$\\
  \hline
  5 & $O^*(1.6356^{k})$\\
  \hline
  6 & $O^*(1.6448^{k})$\\
  \hline
  7 & $O^*(1.6510^{k})$\\
  \hline
  8 & $O^*(1.6554^{k})$\\
  \hline
\end{tabular}\smallskip

\smallskip
\caption{Running times of our \probkPath algorithm in bounded degree graphs (left), and bounded vector chromatic number (right).
Replacing $k$ by $n$ gives bounds for \probHamilton.}
\label{tab:boundeddegree}
\end{table}

\heading{Paths and Trees in Colored Graphs}
Next, we study the {\sc $(k,l)$-Tree} problem in graphs that admit a vertex coloring with $d$ colors.
This can be seen as an extension of Bj\"{o}rklund's $O^*(2^{n/2})$-time algorithm for \probHamilton in bipartite graphs~\cite{bjorklund-hamilton}.
However, the insight in~\cite{bjorklund-hamilton} was to find a small vertex cover of the hidden solution.
Here, we use a different insight: by choosing roughly half of the color classes we get a small subset of the hidden solution vertices which covers many (but not all) of its edges.
The resulting algorithm runs in $\displaystyle{O^*(2^{(1 - \frac{\floor{{d}/2} \ceil{{d}/2}}{d(d-1)})k})}$ time when $l=O(1)$ (see Section~\ref{sec:proper} for a more complicated bound in the general case).
For a graph of bounded degree $\Delta$ that is neither complete nor an odd cycle, it is possible to construct a proper $\Delta$-coloring in linear time (e.g., by Lov\'asz's proof of Brooks' theorem \cite{deltacol}).
This immediately results in a fast algorithm for {\sc $(k,l)$-Tree} in bounded degree graphs (see Table \ref{tab:boundeddegree} for the special cases of \probkPath and \probHamilton; the case $\Delta=3$ is solved by a special algorithm, see below), along with an improved algorithm for {\sc $k$-IST} in such graphs (see Table \ref{tab:knownresults3} in Section~\ref{sec:bounddeg}).

Fractional coloring is a well studied generalization of the classical vertex coloring.
Our algorithm for graphs of low chromatic number generalizes quite easily to the case of low fractional chromatic number.
This has consequences in improved algorithms for some special graph classes, e.g. an $O^*(1.571^k 1.274^{l})$-time algorithm for {\sc $(k,l)$-Tree} in  subcubic triangle-free graphs, or $O^*(1.571^k)$-time algorithm for \probkPath in general subcubic graphs. For subcubic graphs of even larger girth, we get further improved bounds.

Another relaxation of the classical coloring is vector coloring, known for its importance in approximation algorithms (e.g.~\cite{GW}).
Its important advantage is that, contrary to the classical or fractional coloring, a $(1+\epsilon)$-approximation can be found in polynomial time~\cite{Gartner}.
We provide an algorithm for {\sc $(k,l)$-Tree} that applies vector coloring.
It results in improved running time when the vector chromatic number is at most $8$.

\subsection{Notation}

Throughout the paper we consider undirected graphs.
For an integer $k$, by $[k]$ we denote the set $\{1, 2, \ldots, k\}$.
For a set $S$ and an integer $k$ by ${S \choose k}$ we denote the family of all subsets of $S$ of size $k$.
Let us write $V=V(G)=[n]$ for the vertex set and $E=E(G)$ for the edge set of the host graph $G$.

\section{Finding Trees on $k$ Vertices with $l$ Leaves}
\label{sec:general}

In this section we generalize the \probkPath algorithm by Bj\"orklund et al.~\cite{bjorklund-hamilton,BHKK-narrow-sieves} to finding subtrees with $k$ nodes including $l$ leaves. We focus on undirected graphs.
Throughout this section we assume we are given a fixed partition $V=V_1 \cup V_2$ of the vertices of the input graph.
To get the promised generalization, in Section~\ref{sec:prob} we will use a {\em random} bipartition, similarly as in~\cite{bjorklund-hamilton,BHKK-narrow-sieves}.
In the next sections we investigate other partitions.

\subsection{Branching Walks}

The notion of branching walk was introduced by Nederlof~\cite{nederlof-steiner}.
A mapping $h:V(T)\rightarrow V(G)$ is a {\em homomorphism} from a graph $T$ to the host $G$ if for all $\{a,b\}\in E(T)$ it holds that $\{h(a),h(b)\}\in E(G)$.
We adopt the convention of
calling the elements of $V(T)$ {\em nodes} and the elements
of $V(G)$ {\em vertices}.

A {\em branching walk} in $G$ is a pair $B=(T,h)$ where $T$ is
an unordered rooted tree and $h:V(T)\rightarrow V(G)$ is a homomorphism
from $T$ to $G$.

Let $B=(T,h)$ be a branching walk in $G$.
The walk {\em starts} from the vertex $h(1)$ in $G$.
The {\em size} of the walk is $|V(T)|$.
The walk is {\em simple} if $h$ is injective.
The walk is {\em weakly simple} if for any node $x\in V(T)$, the homomorphism $h$ is injective on children of $x$.
The walk is {\em U-turn-free} if for any node $a$ of $T$, every child $b$ of $a$ maps to a different vertex than the parent $c$ of $a$, i.e., $h(b)\ne h(c)$.

A {\em proper order} of $B$ is any permutation $\pi:V(T)\rightarrow\{1,\ldots,|V(T)|\}$ such that for every two nodes $a,b\in V(T)$,
\begin{enumerate}[$(i)$]
\item if $\depth(a) < \depth(b)$ then $\pi(a) < \pi(b)$,
\item if $a,b\ne\treeroot(T)$ and $\pi(\parent(a)) < \pi(\parent(b))$ then $\pi(a) < \pi(b)$,
\item if $a$ and $b$ are siblings and $h(a) < h(b)$ then $\pi(a) < \pi(b)$.
\end{enumerate}
The following proposition is immediate.
\begin{proposition}
\label{prop:properly}
Every weakly simple branching walk has exactly one proper order.
\end{proposition}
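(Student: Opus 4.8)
The plan is to prove existence and uniqueness separately, in each case by an induction on depth that peels the tree off layer by layer. Write $L_i=\{v\in V(T):\depth(v)=i\}$ for the $i$-th layer and $n_i=|L_i|$; note $L_0=\{\treeroot(T)\}$, so $n_0=1$. Condition $(i)$ alone already forces any proper order $\pi$ to send $L_i$ onto the consecutive block of positions $\{\,n_0+\dots+n_{i-1}+1,\ \dots,\ n_0+\dots+n_i\,\}$; in particular $\pi(\treeroot(T))=1$. So the whole content of the statement is that the linear order that $\pi$ induces \emph{within} each layer is pinned down.

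For uniqueness I would induct on $i$, the claim being that the order induced on $L_0\cup\dots\cup L_i$ is determined. The base case $i=0$ is trivial. For the step, take two distinct nodes $a,b\in L_{i+1}$; their parents lie in $L_i$. If $\parent(a)\ne\parent(b)$, then by the induction hypothesis $\pi(\parent(a))$ and $\pi(\parent(b))$ are determined and distinct, so $(ii)$ fixes the relative $\pi$-order of $a$ and $b$. If $\parent(a)=\parent(b)$, then $a,b$ are siblings, so weak simplicity gives $h(a)\ne h(b)$ and $(iii)$ fixes their relative order. Hence the order on $L_{i+1}$ is forced, which (together with the block-of-positions observation) completes the induction.

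For existence I would exhibit the order suggested by this analysis: concatenate the layers $L_0,L_1,\dots$ in order of increasing depth, and inside each $L_i$ order the nodes lexicographically, first by the (already fixed) position of the parent and then, among siblings, by $h$-value. Weak simplicity is exactly what makes this a total order on each layer, since siblings then receive distinct secondary keys. One then checks $(i)$--$(iii)$ directly: $(i)$ and $(iii)$ are immediate from the lexicographic rule, and $(ii)$ needs one small observation, namely that $\depth(a)<\depth(b)$ holds iff $\depth(\parent(a))<\depth(\parent(b))$, so the hypothesis of $(ii)$ can only be met when $a$ and $b$ are in the same layer or $\pi(a)<\pi(b)$ already holds for trivial (layer) reasons.

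The only mildly delicate point I anticipate is the bookkeeping in the inductive step: the induction hypothesis must be stated strongly enough — as a statement about the induced linear order on earlier layers, not merely about which block of positions they occupy — so that it can be fed back into $(ii)$ to pin down the next layer. It is also worth remarking why weak simplicity cannot be dropped: if two siblings $a\ne b$ had $h(a)=h(b)$, then none of $(i)$, $(ii)$, $(iii)$ would constrain their relative order and swapping them would yield a second proper order, so both existence of a total tie-break and uniqueness rely on it. Everything else is routine once the layer decomposition is set up.
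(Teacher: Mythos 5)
Your proof is correct, and since the paper merely declares this proposition ``immediate'' without giving any argument, your layer-by-layer BFS reasoning with the (parent-position, $h$-value) lexicographic tie-break is precisely the standard justification one would supply; in particular you correctly isolate weak simplicity as the hypothesis that guarantees the sibling tie-break is total.
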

We say that a weakly simple branching walk $B$ is properly ordered if $V(T)=\{1,\ldots,|V(T)|\}$ and the proper order from Proposition~\ref{prop:properly} is the identity function.

\subsection{Labeling}\label{sec:labeling}

For a tree $T$, by $L(T)$ we note the set of leaves of $T$ and by $I(T)$ we denote the set of internal vertices of $T$, i.e., $I(T)=V(T)\setminus L(T)$.

Like in~\cite{BHKK-narrow-sieves} or~\cite{BKK-motif-stacs} our crucial tool are {\em labelled} branching walks.
In~\cite{BKK-motif-stacs}, every node in the tree $T$ of a branching walk $(T,h)$ was assigned a label.
Here, similarly as in~\cite{BHKK-narrow-sieves}, we do not assign labels to some nodes, but we assign labels to some edges of $T$.
We define the set of {\em labellable elements} of a branching walk $B=(T,h)$ as

\[\la(B) = L(T) \cup (h(I(T)) \cap V_1) \cup \{uv \in E(T)\ :\ h(u),h(v) \in V_2\}. \]

Similarly, for a subtree $T$ of graph $G$, define

\[\la(T) = L(T) \cup (I(T) \cap V_1) \cup \{uv \in E(T)\ :\ u,v \in V_2\}. \]

We say that a branching walk $B=(T,h)$ is {\em admissible} when $B$ is weakly simple, U-turn-free, and properly ordered.
For nonnegative integers $k,l,r$, we also say that $B$ is {\em $(k,l,r)$-fixed} if $T$ has $k$ nodes and $l$ leaves, and $|\la(B)| = r$.

\subsection{The Polynomial}\label{sec:polynomial}

Let $r$ be an integer.
We use three kinds of variables in our polynomial.
First, for any edge $uv\in E(G)$, where $u<v$, we have a variable $x_{uv}$. For simplicity we will denote $x_{vu} = x_{uv}$.
Second, for each $q\in V\cup E$ and for each $l \in [r]$ we have a variable $y_{q, l}$.
Third, for each $v \in V(G)$, we have a variable $z_v$.
By $\x$ we denote the sequence of all $x_{uv}$-type variables, while by $\y$ we denote the sequence of all $y_{q,l}$-type variables.

For a branching walk $B=(T,h)$ and a labeling $\ell:\la(B)\rightarrow [|\la(B)|]$ we define the monomial

\[\mon(B,\ell) =z_{h(1)}\prod_{\substack{\{u,v\}\in E(T)\\ u<v}} x_{h(u),h(v)} \prod_{q\in \la(B)}y_{h(q),\ell(q)}\,,\]

where for $uv\in E(T)$, $h(uv)$ refers to edge $h(u)h(v)\in E(G)$. Next, we define a multivariate polynomial
$P_i$ with coefficients in a field of characteristic two by setting

\[
P_i = \sum_{\substack{B=(T,h)\\ \text{$B$ is admissible}\\ \text{$B$ is $(k,l,i)$-fixed}}}\ \sum_{\substack{\ell:\la(B)\rightarrow[|\la(B)|]\\\text{$\ell$ bijective}}} \mon(B,\ell)\,.
\]

Finally, let $P_{r\downarrow} = \sum_{i=2}^{r} P_i$.

\begin{lemma}
\label{lem:canceling}
The set of pairs of the form $(B,\ell)$, where $B$ is a non-simple, admissible and $(k,l,i)$-fixed branching walk and $\ell:\la(B)\rightarrow[|\la(B)|]$ is a bijection, can be partitioned into pairs, and the two monomials corresponding to each pair are identical.
\end{lemma}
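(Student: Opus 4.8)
The plan is to run the usual ``narrow sieves'' cancellation argument: construct a fixed-point-free involution $\phi$ on the set $\mathcal{N}$ of all pairs $(B,\ell)$ described in the statement, such that $\phi$ preserves the monomial $\mon(B,\ell)$. Given such a $\phi$, its orbits are the desired pairs --- each has size $2$, and its two members carry identical monomials.

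To define $\phi$ on an input $(B,\ell)$ with $B=(T,h)$ non-simple, I split into two regimes. \emph{Regime~1:} some two distinct labellable elements $q_1\ne q_2$ of $\la(B)$ have the same $h$-image (using $h(q)=q$ for a vertex $q$ and $h(uv)=h(u)h(v)$ for a tree edge $uv$). Then take the lexicographically first such pair, let $\ell'$ be $\ell$ with its values on $q_1$ and $q_2$ interchanged, and set $\phi(B,\ell)=(B,\ell')$. Since $h(q_1)=h(q_2)$, the factor $y_{h(q_1),\ell(q_1)}\,y_{h(q_2),\ell(q_2)}$ is unaffected, so $\mon$ is preserved and $\phi$ is visibly a fixed-point-free involution within this regime. \emph{Regime~2:} all elements of $\la(B)$ get pairwise distinct $h$-images, so the nodes responsible for non-simplicity are ``invisible'' to the labeling; here I must change $T$ itself. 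Because $B$ is admissible, weak simplicity forbids colliding siblings, the homomorphism condition forbids a node colliding with its parent, U-turn-freeness forbids a node colliding with its grandparent, and --- since the proper order respects depth --- a node cannot collide with any of its proper descendants. Hence, letting $j$ be the proper-order-first node whose $h$-image equals an earlier one and $j^\ast<j$ the proper-order-first node with $h(j^\ast)=h(j)$, the pair $\{j,j^\ast\}$ is either incomparable in $T$, or $j^\ast$ is a proper ancestor of $j$ at distance at least $3$.

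When $j$ and $j^\ast$ are incomparable I swap the pendant subtrees rooted at $j$ and at $j^\ast$ (detach each from its parent, reattach it below the other's parent) and then re-normalize the result to its unique proper order (Proposition~\ref{prop:properly}); call the outcome $B'$. Since $h(j)=h(j^\ast)$, $B'$ is again a homomorphism, $h(1)$ is untouched, and the multisets of $h$-images of tree edges, of leaves, of images in $V_1$ of internal nodes, and of edges with both endpoints in $V_2$ are all unchanged; so $B'$ is again $(k,l,i)$-fixed, the $z$-, $\x$- and $\y$-parts of the monomial are preserved, and the obvious correspondence $\la(B)\leftrightarrow\la(B')$ defines $\ell'$ with $\mon(B',\ell')=\mon(B,\ell)$. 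When $j^\ast$ is a proper ancestor of $j$, I instead use the tree analogue of the ``segment reversal'' used for \probkPath in~\cite{BHKK-narrow-sieves}: reverse the $j^\ast$-to-$j$ path in $T$, carrying along the subtrees that hang off its interior; $h(j^\ast)=h(j)$ again makes this a legitimate branching walk with the same edge-, leaf- and label-multisets. In both sub-cases I put $\phi(B,\ell)=(B',\ell')$.

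What remains is to check: (a) $B'$ is still admissible --- weak simplicity and (after re-normalization) proper-orderedness are routine, but U-turn-freeness at the re-attachment sites is not, and I expect it to follow from the minimality of $j$ together with properties $(i)$--$(iii)$ of the proper order, since a node creating a forbidden U-turn after the surgery would itself witness an earlier collision, contradicting the choice of $j$; (b) $B'$ is still non-simple and, since the $h$-image multiset of $\la$ is preserved, still in Regime~2, so the two regimes do not interact; (c) the canonical collision pair of $B'$ is again $\{j,j^\ast\}$, so a second application of the same surgery restores $B$; and (d) $B'\ne B$, so $\phi$ is fixed-point-free. The main obstacle is exactly (a) together with (c): arranging the surgery --- and especially its behaviour on the boundary cases (a collision involving the root, and the junction between the ``subtree swap'' and the ``path reversal'') --- so that it simultaneously stays inside the admissible, properly ordered branching walks and is an exact involution. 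This is also where the precise choice of $\la(B)$ (leaves, images in $V_1$ of internal nodes, and edges with both endpoints in $V_2$) has to be used, since it is what makes the $h$-image multiset of $\la$ an invariant of the surgery and thereby keeps the $\y$-part of $\mon$ fixed.
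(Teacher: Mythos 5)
Your high-level plan is the same as the paper's: partition via a fixed-point-free, monomial-preserving involution; handle collisions between two labellable elements by swapping their labels (your Regime~1 = the paper's Case~1); and for the remaining non-simple walks, apply a tree surgery --- reverse the path when the canonical colliding pair is ancestor/descendant, swap the two pendant subtrees when incomparable (the paper's Cases~2 and~3). You also correctly rule out siblings (weak simplicity), parent/child (homomorphism), and grandparent/grandchild (U-turn-freeness) as colliding pairs. So the architecture matches.

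The gap is exactly the thing you flag as ``the main obstacle'', and it is not a formality: it is the mathematical core of the proof, and your canonical choice makes it genuinely unclear whether it even holds. You pick the collision pair $\{j^\ast,j\}$ as ``proper-order-first repeat''. But the proper order is a BFS-like order depending on depths and on parent positions, and the subtree swap can change depths (since $j^\ast$ and $j$ need not lie at the same depth), so after the surgery the whole suffix of the order is reshuffled. In particular nodes that lived deep inside $T_j$ migrate to shallower positions, and a collision that was ``late'' in $B$ can overtake $\{j^\ast,j\}$ and become the proper-order-first repeat in $B''$ --- in which case a second application of your map would pick a different pair, and $\phi$ would not be an involution. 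You would need to prove this cannot happen, and nothing in your sketch does so. The paper avoids this entirely by making the canonical choice out of quantities that are \emph{intrinsic to $(B,\ell)$ and manifestly preserved by the surgery}: first it restricts to pairs whose common $h$-image (a vertex of $G$, not a tree index) is minimum; then it disambiguates using the derived labelling $\ell^\downarrow$ (in the ancestor case) or the $h$-images of the parents (in the incomparable case), and in the ancestor case further restricts to pairs $(u,v)\in A^{\ast\ast}$ whose connecting path contains no other node mapping to $h(u)$. Those choices are what let the paper carry out the short but essential arguments that $u'=u$, $v'=v$ after the surgery; you never introduce $\ell^\downarrow$ or $A^{\ast\ast}$ and your ``earlier collision would contradict the choice of $j$'' hand-wave does not survive the depth-reshuffling phenomenon. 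Likewise, your checks (a) that the surgery preserves admissibility (in particular U-turn-freeness and that no two edges of $T''$ map to the same edge of $G$), (b) that $B''$ stays in Regime~2, and (d) that $B''\neq B$, are all stated but not proved; the paper proves each, and (a) relies on first establishing that excluding Case~1 forces all tree edges to map to distinct edges of $G$, a key intermediate fact you do not observe. In short: right skeleton, but the argument that actually makes $\phi$ a well-defined involution is missing, and your particular canonical choice makes supplying that argument harder, not easier, than the paper's.
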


\begin{proof}
Fix a pair $(B, \ell)$ from the set in the claim.

\mycase{1.} There exist two elements $e_1,e_2\in \la(B)$ such that $h(e_1)=h(e_2)$.
We take the pair with lexicographically minimal value of $(\min\{\ell(e_1),\ell(e_2)\},\max\{\ell(e_1),\ell(e_2)\})$; note that there is a unique minimum since $\ell$ is injective.

We define $\ell':\la(B)\rightarrow[|\la(B)|]$ as follows:

\[
\ell'(x) = \begin{cases}
\ell(u) & \mbox{if } x=v, \\
\ell(v) & \mbox{if } x=u, \\
\ell(x) &\mbox{otherwise.}
\end{cases}
\]

Note that $\ell' \ne \ell$ because $\ell$ is injective.
In this way, to the labelled branching walk $(B,\ell)$ we assigned another labelled branching walk $(B,\ell')$.
Note that $\mon(B,\ell) = \mon (B,\ell')$.
Observe that $(B,\ell')$ is also admissible and $(k,l,i)$-fixed, and $B$ is non-simple and falls into Case 1.
Moreover, if we begin from $(B,\ell')$ and follow the same way of assigning labels, we get $(B,\ell)$.
Thus we have paired up all the non-simple labelled branching walks that fall into Case 1.
In what follows we assume that Case 1 does not apply.

Observe that once Case 1 is excluded, no two edges of $T$ map to the same edge of $G$.
Indeed, assume that there is a pair of edges $u_1v_1$ and $u_2v_2$ of $T$ that map to the same edge $uv\in E(G)$, with $h(u_1)=h(u_2)=u$ and $h(v_1)=h(v_2)=v$.
Then, either one of $u$ or $v$ is in $V_1$ or both are in $V_2$.
By the definition of $\la(B)$, in the prior case $u_1,u_2\in\la(B)$, or $v_1,v_2\in\la(B)$ and in the latter case $u_1v_1,u_2v_2\in\la(B)$.
In both situations Case 1 applies.

Define \[A = \{(u,v)\ :\ \text{$u,v \in V(T)$ and $h(u)=h(v)$}\}.\]

Since Case 1 does not apply, and $B$ is not simple, we know that for every pair $(u,v)\in A$ at most one of $u$, $v$ is in $\la(B)$.
Let \[A^* = \{(u,v) \in A\ :\ h(u) = \min_{(x,y)\in A} h(x)\}.\]

Consider any node $x \in V(T)$. Then, we define

\[\ell^\downarrow(x) = \begin{cases}
                       \ell(x) & \text{if $x\in\la(B)$,} \\
                       \min\{\ell(z)\ :\ z \in \children(x) \cap \la(B)\} & \text{if $\children(x) \cap \la(B) \ne \emptyset$,} \\
                       \min\{\ell(xz)\ :\ z \in \children(x)\}  & \text{otherwise.}
                      \end{cases} \]

%
%
%

\mycase{2.} There is a pair $(u, v) \in A^*$ such that $u$ is an ancestor of $v$.
Let us define a subset $A^{**}\subseteq A^*$ consisting of pairs $(u,v)\in A^*$ such that node $u$ is an ancestor of $v$, and for every node $x$ on the path from $v$ to the root, if $h(x)=h(u)$ then $x\in\{u,v\}$.
We choose the pair $(u,v) \in A^{**}$ such that the pair $(u,\ell^\downarrow(v))$ is lexicographically minimal.
Note that there is exactly one such pair. Indeed, consider another pair $(u',v')\in A^{**}$ with $(u',\ell^\downarrow(v'))=(u,\ell^\downarrow(v))$.
Clearly, $u=u'$.
If $v\ne v'$ then, since $\ell^\downarrow(v)=\ell^\downarrow(v')$, it means that $v$ is a child of $v'$ or vice versa.
However, $h(v)=h(v')$, so $h$ is not a homomorphism, a contradiction.

Let $T'$ be the tree obtained from $T$ by reversing the path from $u$ to $v$.
More precisely, consider the path $p_{uv}=uu_1\cdots u_tv$.
Remove the edges $uu_1$ and $u_tv$ from $T$, obtaining three trees:
$T_u$ (rooted at $\treeroot(T)$),
$T_p$ (rooted at $u_1$), and
$T_v$ (rooted at $v$).
Next, root $T_p$ at $v_t$, obtaining $T_p'$.
Finally, make the root of $T_p'$ the child of $u$ in $T_u$, and make $v$ the child of $u_1$ in $T_p'$.
The resulting tree is denoted by $T'$.

Note that the multigraphs $h(T')$ and $h(T)$ are equal.
Now we essentially want to leave $h$ unchanged; however the branching walk $B'=(T',h)$ is not properly ordered and we need to renumber the nodes.

Note that $B'$ is weakly simple since no two edges of $T$ (and hence also of $T'$) map to the same edge of $G$.
Hence, by Proposition~\ref{prop:properly}, there is exactly one proper order $\pi$ of $B'$.
Let $T''$ be obtained from $T'$ by replacing every node $x$ by $\pi(x)$.
Let us define homomorphism $h''$ by $h''(x)=h(\pi^{-1}(x))$ for every $x\in V(T'')$, and let $B''=(T'',h'')$.
Note that $T''$ has the same number of leaves and internal vertices as $T$ and $B''$ is properly ordered.
Also, $T''$ is weakly simple and U-turn-free, for otherwise $T''$ has two edges that map by $h''$ to the same edge $e$ of $G$, which implies that $T$ has two edges that map by $h$ to $e$, a contradiction.
Hence $B$ is admissible.
Note that there is a one-to-one correspondence between elements of $\la(B)$ and $\la(B'')$.
It follows that $B''$ is $(k,l,i)$-fixed.

Now let us define a labeling $\ell''$ of $B''$.
For every node $x\in \la(B'')$, we put $\ell''(x) = \ell(\pi^{-1}(x))$.
Note that $h(T)=h(T'')$ as multigraphs, and hence since in $T$ no two edges are mapped by $h$ to the same edge of $G$, the same holds for $T''$.
It follows that for every edge $xy$ of $T''$ there is a unique edge $x_0y_0$ of $T$ such that $h(xy)=h(x_0y_0)$.
For every edge $xy\in E(T'')\cap\la(B'')$ we pick the corresponding edge $x_0y_0$ of $T$, and we put $\ell''(xy)=\ell(x_0y_0)$.
The definitions of $\ell''$ and $h''$ imply that $\mon(B,\ell)=\mon(B'',\ell'')$.

We claim that $B''\ne B$.
Note that $v \ne u_1$, for otherwise $h$ is not a homomorphism.
Hence, the path $p_{uv}=uu_1\cdots u_tv$ has at least one internal vertex, i.e., $t\ge 1$.
Consider the walks $W=h(u),h(u_1),\ldots,h(u_t),h(v)$ and $W''=h(v),h(u_t),\ldots,h(u_1),h(u)$ in $G$.
If $B''=B$ then, in particular, $W=W''$. But this implies that $T$ is not U-turn-free, a contradiction.

Hence to $(B,\ell)$ we have assigned a different pair $(B'',\ell'')$, which also satisfies Case 2 but does not satisfy Case 1.
Let $(u',v')$ be the pair of nodes chosen in Case 2 for $(B'',\ell'')$.
Note that in $(B'',\ell'')$ we have $h''(u)=h''(v)=h''(u')=h''(v')$.
Since only the subtree of $u$ is changed, for every node $x\le u$ we have $\pi(x)=x$.
It follows that $u'=u$.
Note that $\ell''^\downarrow$ can differ from $\ell^\downarrow$ only for the vertices $u_1,\ldots,u_t$ on the path $p_{uv}$.
However, none of them maps to $h(u)$, because otherwise $(u,v)\not\in A^{**}$.
Hence, $v'=v$. Thus we have paired up all the non-simple, admissible and $(k,l,i)$-fixed labelled branching walks that fall into Case 2, but do not fall into Case 1.

\mycase{3.} There is a pair $(u,v) \in A^*$ such that neither $u$ is an ancestor of $v$ nor $v$ is an ancestor of $u$.
Assume w.l.o.g.\ that $h(\parent(u)) \le h(\parent(v))$.
If there are many such pairs $(u,v)$, we take the pair with lexicographically minimal value of $(h(\parent(u)),h(\parent(v)))$.
We claim there is exactly one such pair.
Indeed, consider another pair $(u',v')$. Then $h(u)=h(u')=h(v)=h(v')$ since both pairs are in $A^*$.
Also $h(\parent(u))=h(\parent(u'))$ and $h(\parent(v))=h(\parent(v'))$ by the minimality.
Then $u=u'$ for otherwise two edges of $T$ map to the same edge $h(u)h(\parent(u))$ of $G$.
Similarly, $v=v'$ for otherwise two edges of $T$ map to the same edge $h(v)h(\parent(v))$ of $G$.

Let $T_u$ and $T_v$ be the subtrees of $T$ rooted at $u$ and $v$, respectively.
Let $T'$ be the tree obtained from $T$ by swapping $T_u$ and $T_v$.
Note that $h(T')=h(T)$ as multigraphs.
Now we define a branching walk $B''=(T'',h'')$ by renumbering vertices of $T'$ using a proper order $\pi$, exactly as in Case 2.
By the same argument as in Case 2, $B''$ is admissible and $(k,l,i)$-fixed.
We also define a labeling $\ell''$ of $B''$ exactly in the same way as in Case 2.
It follows that $\mon(B,\ell) = \mon(B'',\ell'')$.

We claim that $B''\ne B$. Otherwise, there is an isomorphism $i:V(T_u)\rightarrow V(T_v)$ such that for every node $x\in V(T_x)$ we have $h(x)=h(i(x))$.
Pick any leaf $z\in V(T_u)$. Then $i(z)$ is also a leaf. Hence both $z$ and $i(z)$ are in $\la(B)$.
It follows that Case 1 applies to $z$ and $i(z)$, a contradiction.

It is clear that $(B'',\ell'')$ also satisfies Case 3 and it does not satisfy Case 1.
Assume it satisfies Case 2, i.e., there are nodes $(u^*, v^*) \in A^*$ such that $u^*$ is an ancestor of $v^*$.
Note that $h(u^*)=h(v^*)=h(u)=h(v)$.
We can assume that $u^*=u$ or $u^*=v$, for otherwise the pair $(u, v)$ satisfies Case 2 for the labelled branching walk $(B,\ell)$, a contradiction.
By symmetry assume $u^*=u$.
Then $\{v,v^*\}$ satisfies Case 2 for the labelled branching walk $(B,\ell)$, a contradiction again.
Hence to $(B,\ell)$ we have assigned a different pair $(B'',\ell'')$, which also satisfies Case 3 but does not satisfy the earlier cases.

We claim that if we begin from $(B'',\ell'')$ and follow the same way of assignment we get $(B,\ell)$ back.
Indeed, note that only the subtrees of $u$ and $v$ are changed, but since Case 2 is excluded, these subtrees contain no vertices that map to $h(u)$, except for $u$ and $v$. It follows that in $(B'',\ell')$ the pair $(\pi(u),\pi(v))$ is chosen.
After swapping the subtrees (and renumbering the vertices, redefining the labelling, etc.) we get $(B,\ell)$ back.

Thus we have paired up all the non-simple, admissible and $(k,l,i)$-fixed labelled branching walks that fall into Case 3, but do not fall into the earlier cases.
This finishes the proof.
\end{proof}

\begin{lemma}
\label{lem:polynomial}
The polynomial $P_{\downarrow r}$ is non-zero iff the input graph contains a subtree $T_G$ with $k$ nodes and $l$ leaves, such that $\la(T_G)\le r$.
\end{lemma}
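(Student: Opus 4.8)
The plan is to use the cancellation from Lemma~\ref{lem:canceling} to collapse the (huge) sum defining $P_{r\downarrow}$ to a sum over \emph{simple} branching walks only, and then to show that this residual sum is nonzero precisely when the wanted subtree exists. Before that I would record the dictionary between simple admissible branching walks and subtrees of $G$. If $B=(T,h)$ is a simple admissible $(k,l,i)$-fixed branching walk, then $h$ is injective, so $h$ is an isomorphism from the underlying unrooted tree of $T$ onto the subtree $h(T)$ of $G$; hence $h(T)$ has $k$ nodes and $l$ leaves, and $h$ restricts to bijections of $L(T),I(T),E(T)$ onto $L(h(T)),I(h(T)),E(h(T))$, which (also using that a simple $B$ maps distinct tree edges to distinct graph edges) yields a bijection of $\la(B)$ onto $\la(h(T))$; in particular $|\la(h(T))|=i$. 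Conversely, any subtree $T_G$ of $G$ on $k$ nodes with $l$ leaves and $|\la(T_G)|=i$ gives such a $B$: root $T_G$ at one of its vertices, renumber the nodes by the unique proper order of the resulting rooted tree (Proposition~\ref{prop:properly}), and let $h$ be the inclusion; then $B$ is weakly simple and U-turn-free because $h$ is injective, and properly ordered by construction, hence admissible, and $(k,l,i)$-fixed by the computation just made. Note $\la(T_G)\supseteq L(T_G)$, so $i\ge l$; in particular $i\ge 2$ whenever $l\ge 2$, while the degenerate cases $l\le 1$ force $k\le 1$ and are handled directly.

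Next I would apply Lemma~\ref{lem:canceling}: split the sum defining $P_i$ according to whether $B$ is simple, and observe that, since the coefficient field has characteristic two, Lemma~\ref{lem:canceling} says exactly that the part indexed by non-simple $B$ vanishes. Therefore
\[
P_{r\downarrow}\;=\;\sum_{i=2}^{r}\ \ \sum_{\substack{B=(T,h)\ \text{simple, admissible},\\ (k,l,i)\text{-fixed}}}\ \ \sum_{\substack{\ell:\la(B)\to[i]\\ \text{bijective}}}\ \mon(B,\ell).
\]
The heart of the argument is then to show that $(B,\ell)\mapsto\mon(B,\ell)$ is injective on the pairs indexing this sum. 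From a monomial $z_{h(1)}\prod_{uv\in E(T)}x_{h(u),h(v)}\prod_{q\in\la(B)}y_{h(q),\ell(q)}$ one reads off the start vertex $h(1)$ from the unique $z$-variable and, because $B$ is simple so that no two tree edges collide, the \emph{set} $\{h(u)h(v):uv\in E(T)\}$ of graph edges from the $x$-variables; this set is $E(h(T))$, and since $B$ is admissible, $T$ is recovered as $h(T)$ rooted at $h(1)$ with its nodes numbered by the unique proper order, so $B$ itself is determined. Knowing $B$ we know $\la(B)$ together with the map $q\mapsto h(q)$ on it, which is injective (on nodes since $h$ is, on edges since $B$ is simple, and a node-image and an edge-image never appear in the same $y$-variable); hence the $y$-part of the monomial determines $\ell$. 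Consequently every monomial appearing in the displayed sum has coefficient $1$, so $P_{r\downarrow}\ne 0$ iff that index set is nonempty, i.e.\ iff there is a simple admissible $(k,l,i)$-fixed branching walk for some $i\in\{2,\dots,r\}$ (a bijective $\ell$ always exists once $B$ is fixed). By the dictionary of the first paragraph this holds iff $G$ contains a subtree $T_G$ on $k$ nodes with $l$ leaves and $\la(T_G)\le r$, which is the claim.

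I expect the main obstacle to be the injectivity step, and inside it the reconstruction of the rooted, properly ordered tree $T$ from the mere edge \emph{set} of $h(T)$ together with the root $h(1)$: this is exactly where simplicity (so that the $x$-part is a set, not a multiset, of graph edges, and the $y$-part records $\ell$ faithfully) and admissibility (so that Proposition~\ref{prop:properly} pins down the node numbering) are both essential. A secondary point requiring care is keeping the rooted notion of leaf used in $(k,l,i)$-fixedness aligned with the unrooted notion entering $\la(T_G)$, together with the trivial boundary cases ($l\le 1$, or $r<2$, where $P_{r\downarrow}$ is the empty sum).
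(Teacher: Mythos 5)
Your proof takes essentially the same route as the paper's: Lemma~\ref{lem:canceling} kills the non-simple contributions, and then $(B,\ell)$ is recovered from $\mon(B,\ell)$ via the $z$-variable (root), the $x$-variables together with simplicity and proper ordering (to recover $T$ and $h$), and the $y$-variables (to recover $\ell$). You are somewhat more explicit than the paper about the dictionary between simple admissible walks and subtrees and about the injectivity of the monomial map on the residual index set, and you rightly flag the rooted-vs.-unrooted leaf alignment---the paper's ``pick an arbitrary vertex $v$'' should really be a non-leaf of $T_G$ so that the rooted leaf count is $l$ and $|\la(B)|=|\la(T_G)|$---but the underlying ideas coincide.
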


\begin{proof}
First assume $P_{\downarrow r}$ is non-zero.
It follows that $P_{\downarrow r}$ contains at least one monomial with non-zero coefficient.
By Lemma~\ref{lem:canceling}, every such monomial corresponds to a simple branching walk $B=(T,h)$, because all the monomials corresponding to non-simple branching walks cancel-out over a field of characteristic two. Hence $h(T)$ is the desired tree.

Let $T_G$ be a subtree of $G$ with $k$ nodes and $l$ leaves such that $\la(T)\le r$.
Pick an arbitrary vertex $v$ of $T_G$ and let $T_v$ be tree $T_G$ rooted at $v$.
By Proposition~\ref{prop:properly}, there is a proper order $\pi$ of the simple branching walk $(T_v,\id_{V(T_v})$.
Let $T$ denote the tree $T_v$ after replacing every vertex $x$ by $\pi(x)$.
Define a homomorphism $h:V(T)\rightarrow V(G)$ by putting $h(x)=\pi^{-1}(x)$.
Then $B=(T,h)$ is admissible.
Note that there is a one-to-one correspondence between $\la(T_G)$ and $\la(B)$.
It follows that $B$ is $(k,l,|\la(T_G)|)$-fixed.

Finally, choose an arbitrary bijection $\ell:\la(B)\rightarrow[\la(B)]$.
We must now have $P\not\equiv 0$ because we can uniquely reconstruct the pair $(B,\ell)$ from its monomial representation
$\mon(B,\ell)$: indeed, first recover the root vertex $v$ from the unique $z_v$-type indeterminate; second, recover $B=(T,h)$ from the $x_{a,b}$-type indeterminates using the fact that $B$ is simple, properly ordered, and starts from $v$; then recover $\ell$ from the $y_{x,l}$-type indeterminates using the fact that $\ell$ is injective.
\end{proof}

\subsection{Evaluating the Polynomial $P_{\downarrow r}$ in Time $O^*(2^r)$}
\label{sec:eval}

In this section we show that the polynomial $P_{\downarrow r}$ can be evaluated in a given point $(\x,\y,\z)$ in time $O^*(2^r)$.
Clearly, it suffices to show this bound for every polynomial $P_i$.
To this end, let us rewrite $P_i$ as a sum of $2^r$  polynomials such
that each of them can be evaluated in time polynomial in the input size.
For each $X\subseteq [r]$, let

\[
P_i^X= \sum_{\substack{B=(T,h)\\ \text{$B$ is admissible}\\ \text{$B$ is $(k,l,i)$-fixed}}}\ \sum_{\ell:\la(B)\rightarrow X} \mon(B,\ell)\,.
\]

Note that we do not assume that the labelings in the second summation are bijective.

\begin{lemma}
\label{lem:ie}
 $\displaystyle P_i = \sum_{X \subseteq{\{1,2,\ldots,k\}}} P_i^X.$
\end{lemma}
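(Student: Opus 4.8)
The plan is to prove the identity by the standard inclusion--exclusion over a field of characteristic two, which is exactly Ryser's expansion of the permanent specialised to $\field{2}$. The one structural fact that makes it work here is that every branching walk contributing to $P_i$ is $(k,l,i)$-fixed, hence has exactly $i$ labellable elements; so for such a walk a labeling of $\la(B)$ into an $i$-element set is a bijection precisely when it is surjective.

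First I would fix an admissible, $(k,l,i)$-fixed branching walk $B=(T,h)$ and track its contribution to both sides. On the left, $B$ contributes $\sum_{\ell}\mon(B,\ell)$ with $\ell$ ranging over the \emph{bijections} $\la(B)\to\{1,\dots,i\}$. On the right, $B$ contributes to $\sum_X P_i^X$ the quantity $\sum_{X\subseteq\{1,\dots,i\}}\sum_{\ell:\la(B)\to X}\mon(B,\ell)$, where now $\ell$ ranges over \emph{all} functions into $X$; here we use that $|\la(B)|=i$, so only labels from $\{1,\dots,i\}$ are ever relevant.

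Now I would interchange the two summations on the right: a fixed function $g:\la(B)\to\{1,\dots,i\}$ occurs as a map $\ell:\la(B)\to X$ for exactly those $X$ with $\operatorname{im}(g)\subseteq X\subseteq\{1,\dots,i\}$, and there are $2^{\,i-|\operatorname{im}(g)|}$ such sets. Since $\mon(B,\ell)$ depends on $\ell$ only through the indices $(\ell(q))_{q\in\la(B)}$, this gives
\[
\sum_{X\subseteq\{1,\dots,i\}}\ \sum_{\ell:\la(B)\to X}\mon(B,\ell)\ =\ \sum_{g\colon\la(B)\to\{1,\dots,i\}}2^{\,i-|\operatorname{im}(g)|}\,\mon(B,g).
\]
Over $\field{2}$ the coefficient $2^{\,i-|\operatorname{im}(g)|}$ vanishes unless $|\operatorname{im}(g)|=i$, i.e.\ unless $g$ is surjective, and a surjection between two $i$-element sets is a bijection. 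Hence the right-hand contribution of $B$ equals $\sum_{g\ \text{bijective}}\mon(B,g)$, its left-hand contribution. Summing this equality over all admissible $(k,l,i)$-fixed branching walks $B$ yields the claim.

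There is no genuine obstacle here: the argument is a one-line inclusion--exclusion once one records that (i) $(k,l,i)$-fixedness pins the size of the label set to $i$, (ii) $2^m\equiv 0\pmod 2$ for $m\ge 1$ while $2^0=1$, and (iii) surjections between equinumerous finite sets are bijections. The only thing needing any care is bookkeeping: viewing $\mon(B,\cdot)$ as a function of the index tuple so that the reindexing in the display is valid, and noting that the outer (finite) summation over $B$ commutes with the rearrangement.
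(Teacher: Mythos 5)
Your argument is the same inclusion--exclusion-for-surjections computation over a field of characteristic two that the paper uses, so in spirit the approaches match. The one point that needs care is the ground set over which $X$ ranges: the lemma statement has $X\subseteq\{1,\ldots,k\}$, the paper's displayed proof has $X\subseteq[r]$, and you quietly switch to $X\subseteq\{1,\ldots,i\}$. Yours is in fact the correct choice. Since every $B$ contributing to $P_i$ is $(k,l,i)$-fixed, $|\la(B)|=i$, and the left-hand side sums over bijections $\la(B)\to[i]$; the inclusion--exclusion identity that produces these must therefore use $[i]$ as its ground set. Taking $\{1,\ldots,k\}$ or $[r]$ at face value would not merely be cosmetic: for any $g:\la(B)\to[k]$ one has $|\operatorname{im}(g)|\le i$, so whenever $i<k$ (which happens routinely here) every multiplicity $2^{\,k-|\operatorname{im}(g)|}$ is even, the entire right-hand side collapses to $0$ over characteristic two, and the identity would be false. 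So both $\{1,\ldots,k\}$ in the statement and $[r]$ in the paper's equation appear to be slips for $[i]$.

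The weak spot in your write-up is the sentence ``here we use that $|\la(B)|=i$, so only labels from $\{1,\dots,i\}$ are ever relevant.'' That is not what licenses the restriction: nothing in the definition of $P_i^X$ prevents $\ell$ from using labels larger than $i$, and the terms with such labels do not individually drop out. The restriction to $X\subseteq[i]$ is forced because the target of the bijections on the left-hand side is $[i]$, not because larger labels are inert. Once you state it that way, the remainder of your argument --- reindexing by $g$, counting $2^{\,i-|\operatorname{im}(g)|}$ supersets, and using parity to keep only surjections $=$ bijections --- is exactly what the paper compresses into the phrase ``by the Principle of Inclusion and Exclusion,'' and the final interchange of the $B$- and $X$-sums matches the paper's last line.
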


\begin{proof}
Let us fix an admissible  $(k,l,i)$-fixed branching walk $B=(T,h)$.
Note that a function $\ell:\la(B)\rightarrow [|\la(B)|]$ is bijective if and only if it is surjective, so
 \begin{equation}
 \label{eq:aa}
 \sum_{\substack{\ell:\la(B)\rightarrow [|\la(B)|]\\ \text{$\ell$ bijective}}} \mon(B,\ell) = \sum_{\substack{\ell:\la(B)\rightarrow [|\la(B)|]\\ \text{$\ell$ surjective}}}\mon(B,\ell).
 \end{equation}
Recalling the coefficients of $P_i$ are from a field of characteristic 2, and hence $-1=1$,
we have, by the Principle of Inclusion and Exclusion,
%
%
%
\begin{equation}
\label{eq:bb}
\sum_{\substack{\ell:\la(B)\rightarrow [|\la(B)|]\\ \text{$\ell$ surjective}}}\mon(B,\ell) = \sum_{X\subseteq [r]}\ \sum_{\ell:\la(B)\rightarrow  X}\ \mon(B,\ell)\,.
\end{equation}
From~\eqref{eq:aa} and~\eqref{eq:bb} we immediately obtain
\begin{equation}
\label{eq:cc}
P_i =  \sum_{\substack{B=(T,h)\\ \text{$B$ is admissible}\\ \text{$B$ is $(k,l,i)$-fixed}}}\ \sum_{X\subseteq [r]}\ \sum_{\ell:\la(B)\rightarrow  X}\ \mon(B,\ell)\,.
\end{equation}
The claim follows by changing the order of summation.
\end{proof}

Now we are left with a tedious job of evaluating $P_i^X(\x,\y,\z)$ in polynomial time.
To simplify notation in the running time bounds, let us write $m$ for the number of edges in $G$, and $\mu=O((k+r)\log (k+r)\log\log (k+r))$ for the time needed to multiply or add two elements of $\mathbb{F}_{2^{\ceil{\log((k+r))}+O(1)}}$.

\begin{lemma}
\label{lem:eval:X}
Given a nonempty $X\subseteq [r]$ and
three vectors $\x,\y,\z$ of values in $\mathbb{F}_{2^{\ceil{\log(k)}+O(1)}}$
as input, all the values of $P_i^X(\x,\y,\z)$ for $i=0,\ldots,r$ can be computed by
dynamic programming in time $O(mk^4l^2\mu)$ and space $O(mk^2l)$.
\end{lemma}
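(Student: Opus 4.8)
\emph{Proof plan.} I would evaluate each $P_i^X$ by a bottom-up dynamic program over rooted sub-branching-walks. The first step is to eliminate the labelings: in $P_i^X$ the map $\ell:\la(B)\to X$ ranges over \emph{all} functions, so for a fixed admissible $B=(T,h)$,
\[
\sum_{\ell:\la(B)\to X}\mon(B,\ell)=z_{h(1)}\cdot\!\!\prod_{\substack{\{u,v\}\in E(T)\\ u<v}}\!\!x_{h(u),h(v)}\cdot\prod_{q\in\la(B)}Y_{h(q)},\qquad Y_w:=\sum_{c\in X}y_{w,c}.
\]
After precomputing the at most $n+m$ values $Y_w$ in time $O((n+m)|X|\mu)$, the quantity $P_i^X(\x,\y,\z)$ is exactly $\sum_B w(B)$, the sum being over all admissible, $(k,l,i)$-fixed branching walks $B=(T,h)$, where $w(B):=z_{h(1)}\prod_{e\in E(T)}x_{h(e)}\prod_{q\in\la(B)}Y_{h(q)}$ (image edges counted with multiplicity). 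Since an admissible branching walk is precisely the properly-ordered representative of its isomorphism class (Proposition~\ref{prop:properly}), and since $w(B)$, $|V(T)|$, $|L(T)|$ and $|\la(B)|$ are isomorphism invariants, it suffices to sum $w$ over the isomorphism classes of weakly simple, U-turn-free rooted branching walks.

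Such a branching walk is a root node together with a \emph{set} of child sub-branching-walks whose roots are mapped to pairwise distinct vertices, each adjacent in $G$ to the image of the root and each different from the image of the root's parent (the U-turn-free condition). This suggests a table $A[v,u,s,t,j]$, defined for $\{v,u\}\in E(G)$ (and for a dummy $u=\bot$) as the sum of $w(B)/z_{h(1)}$ over all admissible branching walks $B=(T,h)$ that have $s$ nodes, $t$ leaves, contribute $j$ labellable elements, are rooted at a node mapped to $v$, and have no child of the root mapped to $u$. The base case is $s=1$: the root is a leaf and $A[v,u,1,1,1]=Y_v$. For $s\ge 2$ the root $a$ is internal, and we attach to it a nonempty set of child sub-branching-walks; by condition~$(iii)$ of a proper order, listing the children in increasing order of their images fixes a canonical order, so the inductive step is a $0/1$ knapsack over the neighbours $w\neq u$ of $v$ taken in increasing order of index: for each such $w$ we either skip it or append one child, taking its value from $A[w,v,\cdot,\cdot,\cdot]$, multiplying by $x_{vw}$ and — when $v,w\in V_2$ — also by $Y_{vw}$ while incrementing the labellable-element count; the knapsack convolves in the triple $(s,t,j)$. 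Finally $P_i^X(\x,\y,\z)=\sum_{v\in V}z_v\,A[v,\bot,k,l,i]$ for every $i$, which also explains why a single run produces all of $i=0,\ldots,r$ at once.

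Two points require care. The minor one is efficiency of the knapsack: for a fixed $v$ it must be run \emph{once} over all neighbours of $v$, and the entries $A[v,u,\cdot,\cdot,\cdot]$ for the various $u\sim v$ obtained by \emph{removing} the single neighbour $u$ from the completed knapsack (which is legitimate, as a neighbour contributes at most one child). This keeps the total work to $\sum_v O(\deg(v))$ knapsack merges, each a convolution of two tables of size $O(k^2l)$ and hence costing $O(k^4l^2\mu)$, for $O(mk^4l^2\mu)$ overall; the $O(m)$ tables of size $O(k^2l)$ occupy $O(mk^2l)$ space, matching the stated bounds. The major point — and the one I expect to be the real obstacle — is the bookkeeping of labellable elements: leaves and both-endpoints-in-$V_2$ edges are detected locally and cause no trouble, but $h(I(T))\cap V_1$ consists of \emph{vertices}, each to be counted only once, whereas two internal nodes of $T$ may well map to the same vertex of $V_1$ (e.g.\ through a triangle of $G$), so naively charging every internal node with image in $V_1$ overcounts. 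I would resolve this by charging each vertex of $h(I(T))\cap V_1$ to a canonical internal node mapped to it, and augmenting the DP state with a bounded amount of auxiliary information sufficient to recognise that node — exploiting that weak simplicity and the U-turn-free condition force a node, its parent and its grandparent to have pairwise distinct images — or, failing that, by an inclusion–exclusion over which vertices of $V_1$ occur as images of internal nodes. Pinning down exactly what extra state suffices (and re-checking that it respects the recursive decomposition of ``admissible'') is the technical heart of the argument, in line with the authors' warning that the tree case becomes ``much more delicate'' than the path case.
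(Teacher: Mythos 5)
Your DP is essentially the paper's: the paper also maintains a table $A^X[a,j,k',l',i]$ over rooted sub-branching-walks, with $a$ the image of the root, $j$ stepping through the sorted neighbours of $a$ to impose the canonical sibling order, and $(k',l',i)$ the node/leaf/label counts, and the transition either skips neighbour $a_j$, attaches it as a leaf child, or attaches an internal-child sub-walk $A^X[a_j,1,\cdot]$, convolving in $(k',l',i)$. Your first move of collapsing the non-bijective labelings into $Y_w=\sum_{c\in X}y_{w,c}$ is the paper's too (it appears inside their $f(u,v,b)$), and your time/space count ($O(m)$ vertex--neighbour pairs, a table of size $O(k^2l)$ each, and convolution cost $O(k^2l)$ per entry) matches the stated $O(mk^4l^2\mu)$ and $O(mk^2l)$. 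Your explicit ``forbidden parent'' coordinate $u$ is a clean way to enforce the U-turn-free constraint; the paper's recurrence (using $A^X[a_j,1,\cdot]$) glosses over this.

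The genuine problem is the obstacle you invent at the end. You read $h(I(T))\cap V_1$ as a \emph{set of vertices of $G$} and conclude that a node-additive DP would double-charge a $V_1$-vertex hit by two internal nodes of $T$. That is not the intended semantics, and adopting it would break more than this lemma. The labellable elements from this part of the definition should be understood as \emph{nodes} of $T$ whose image lies in $V_1$, each counted separately: the monomial factor $\prod_{q\in\la(B)} y_{h(q),\ell(q)}$ only type-checks when $q$ is a node or edge of $T$ so that $h(q)$ is defined, and the situation you worry about (two distinct internal nodes of $T$ with the same $V_1$-image) is exactly the ``two elements of $\la(B)$ with equal image'' of Case~1 in Lemma~\ref{lem:canceling}, which is how those non-simple walks are cancelled. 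With the node reading, the count is additive across the recursive decomposition, the transition $i\mapsto i-1-[a,a_j\in V_2]$ (or the corresponding convolution) is correct as is, and no canonical-node bookkeeping or inclusion--exclusion is needed. Drop that final paragraph (and, while you are at it, move the root's $Y$-factor out of the base case so that ``leaf vs.\ internal root'' is decided at the outer summation, as the paper does) and you have the paper's proof.
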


\begin{proof}
W.l.o.g.\ we assume that $V(G)=\{1,2,\ldots,n\}$.
For a vertex $a\in V(G)$, denote the ordered sequence of neighbors of $a$ in $G$ by $a_1<a_2<\cdots<a_{\deg_G(a)}$.
We say that a U-turn-free branching walk $B=(T,h)$ starting in a vertex $a\in V(G)$ is {\em $j$-forced} when for any child $u$ of the root $1$ in $T$ it holds that $h(u)\ge a_j$.


Our objective is to compute a five-dimensional array $A^X$
whose entries are defined by
\[
A^X[a,j,k',l',i] = \sum_{B=(T,h)} \sum_{\ell:\la(B)\setminus\{1\}\rightarrow X} \prod_{\substack{\{u,v\}\in E(T)\\ u<v}} x_{h(u),h(v)} \prod_{q\in \la(B)\setminus\{1\}}y_{h(q),\ell(q)}\,,
\]
where the outer sum is over all branching walks $B$ which are admissible, $(k',l',i)$-fixed, $j$-forced and start from $a$.
The entries of $A^X$ admit the following recurrence.
For $j=\deg_G(a)+1$, or $k'=1$, or $l'=0$, or $i < 0$ we have
\[
A^X[a,j,k',l',i]=
\begin{cases}
1 & \text{if $k'=1$, $l'=0$ and $i=[a\in V_1]$,}\\
0 & \text{otherwise}.
\end{cases}
\]
Define
\[
f(u,v,b)=
\begin{cases}
x_{u,v} \sum_{t_1,t_2\in X} y_{v,t_1} y_{uv, t_2} & \text{if $u,v\in V_2$ and $b=1$,}\\
x_{u,v} \sum_{t\in X} y_{uv, t}                   & \text{else if $u,v\in V_2$,}\\
x_{u,v} \sum_{t\in X} y_{v, t}                    & \text{else if $v\in V_1$ or $b=1$,}\\
x_{u,v}                                           & \text{otherwise}.
\end{cases}
\]
For $1\leq j\leq \deg_G(a)$, $2\leq k' \leq k$, $1\leq l' \leq l$, $0 \le i \le r$ we have
\begin{equation}
\label{eq:dp2}
\begin{aligned}
 A^X[a,j,k',l',i] =\; & A^X[a,j+1,k',l',i] + &&\\
                    & f(a,a_j,1) \cdot A^X[a,j+1,k'-1,l'-1,i-1-[a,a_j\in V_2]] + &&\\
                    & f(a,a_j,0) \cdot \!\!\!\!\!\!\!\! \sum_{\substack{k_1+k_2=k'\\k_1,k_2\ge 1}} \sum_{\substack{l_1+l_2=l'\\l_1 \ge 0\\ l_2\ge 1}} \sum_{\substack{i_1+i_2=i-[a,a_j\in V_2]\\i_1\ge 0\\i_2\ge 1}} \!\!\!\!\!\!\!\!\!\!\!\!\!\! A^X[a,j+1,k_1,l_1,i_1]\cdot A^X[a_j,1,k_2,l_2,i_2].
\end{aligned}
\end{equation}
To see that the recurrence is correct, observe that
the three lines above correspond to branching walks which are admissible, $(k',l',i)$-fixed, $j$-forced and start from $a$, where
either
(a) the root has no child which maps to $a_j$,
or
(b) the root has exactly one child which maps to $a_j$ and the child is a leaf,
or
(c) the root has exactly one child which maps to $a_j$ and the child is an internal node.
(At most one such child may exist because the branching walk is weakly simple.)

To recover the value of the polynomial $P^X(\x,\y,\z)$, we observe that
\begin{equation}
\label{eq:dp1}
P_i^X(\x,\y,\z)=\sum_{v\in V_1} z_v \sum_{t\in X}\ y_{v,t}\cdot A^X[v,1,k,l,i] + \sum_{v\in V_2} z_v \cdot A^X[v,1,k,l,i]\,.
\end{equation}

The time bound follows by noting that the values of $f(u,v)$ can be precomputed and tabulated in $O(k(n+m)\mu)$ time to accelerate the computations for the individual entries of $A^X$.
\end{proof}

From Lemmas~\ref{lem:ie} and \ref{lem:eval:X}, we immediately obtain the following.

\begin{lemma}
\label{lem:eval}
The polynomial $P_{\downarrow r}$ can be evaluated in time $O^*(2^r)$ and polynomial space.
\end{lemma}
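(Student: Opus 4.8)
The plan is to derive Lemma~\ref{lem:eval} directly from the decomposition in Lemma~\ref{lem:ie} together with the single-term evaluation bound of Lemma~\ref{lem:eval:X}. Recall that $P_{\downarrow r}=\sum_{i=2}^r P_i$, so it suffices to evaluate each $P_i$ at the given point $(\x,\y,\z)$ within the claimed budget and sum the $O(r)$ results. By Lemma~\ref{lem:ie} (read with $k$ replaced by $r$, which is the range of the label set), $P_i=\sum_{X\subseteq[r]}P_i^X$; here I would note explicitly that for $X=\emptyset$ there is no labeling $\ell:\la(B)\to\emptyset$ unless $\la(B)=\emptyset$, which cannot happen for a $(k,l,i)$-fixed walk with $i=|\la(B)|\ge 2$, so the empty set contributes $0$ and we may restrict to nonempty $X$.

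The core of the argument is then a counting step: there are $2^r$ subsets $X\subseteq[r]$, and for each nonempty $X$ Lemma~\ref{lem:eval:X} produces all the values $P_i^X(\x,\y,\z)$, $i=0,\ldots,r$, simultaneously in time $O(mk^4l^2\mu)$ and space $O(mk^2l)$ via the dynamic program on the array $A^X$. First I would observe that arithmetic is performed in the field $\mathbb{F}_{2^{\ceil{\log(k+r)}+O(1)}}$, which has size polynomial in the input (since $r\le k$ in all applications, and in any case $r=O(n)$), so each field operation costs time $\mu=O((k+r)\log(k+r)\log\log(k+r))$, which is polynomial. Hence the work per subset $X$ is polynomial, and the total time is $2^r$ times a polynomial, i.e.\ $O^*(2^r)$. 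For the space bound, the key point is that the $2^r$ subproblems are independent and processed one at a time: we iterate over $X\subseteq[r]$, run the dynamic program to obtain the partial sums $P_i^X$, add them into running accumulators $\sum_X P_i^X$ for each $i$, and then discard the array $A^X$ before moving to the next $X$. Thus at any moment we store only one array $A^X$ of size $O(mk^2l)$ plus $O(r)$ accumulators, which is polynomial; we never materialize all $2^r$ tables at once.

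Finally I would assemble: $P_{\downarrow r}(\x,\y,\z)=\sum_{i=2}^r\sum_{\emptyset\ne X\subseteq[r]}P_i^X(\x,\y,\z)$, computed by the nested loop just described, runs in time $O^*(2^r)$ and polynomial space, as claimed. The only mildly delicate points are bookkeeping ones — making sure the index range in Lemma~\ref{lem:ie} is applied with $[r]$ rather than $[k]$ (the statement there is written with $\{1,\dots,k\}$ but the label alphabet for a $(k,l,i)$-fixed walk with $i\le r$ is $[i]\subseteq[r]$, so the inclusion–exclusion runs over subsets of $[r]$), and confirming that reusing the single table $A^X$ across iterations is what keeps the space polynomial rather than exponential. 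I do not expect a genuine obstacle here: this lemma is purely a matter of combining the two preceding lemmas with an explicit loop-order/space argument, and all the real content (the cancellation structure, the recurrence for $A^X$) has already been established.
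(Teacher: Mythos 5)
Your proposal matches the paper's approach exactly: the paper's own proof of Lemma~\ref{lem:eval} is a one-liner invoking Lemmas~\ref{lem:ie} and~\ref{lem:eval:X}, and you have simply unfolded what that entails — iterate over the $2^{r}$ subsets $X$, call the dynamic program of Lemma~\ref{lem:eval:X} for each, and add the results into $O(r)$ accumulators while discarding the table $A^{X}$ between iterations so that only one table plus the accumulators is ever resident. The bookkeeping observations you flag (the $\{1,\ldots,k\}$ versus $[r]$ index typo in the statement of Lemma~\ref{lem:ie}, and that $X=\emptyset$ contributes nothing) are correct and are implicitly assumed by the paper.
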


\subsection{A Single Evaluation Algorithm}

Assume we choose the value of parameter $r$ large enough so that if there is a $(k,l)$-tree $T_G$ in $G$, then $|\la(T_G)|\le r$.
Then, by Lemma~\ref{lem:polynomial}, we can test the existence of a $(k,l)$-tree by testing whether the polynomial $P_{\downarrow r}$ is non-zero.
The latter task can be performed efficiently using a single evaluation of the polynomial $P_{\downarrow r}$.
To show that this is the case we need the Schwartz-Zippel Lemma, shown independently by DeMillo and Lipton~\cite{DeMilloLipton1978}, Schwartz~\cite{schwartz} and Zippel~\cite{zippel}.

\begin{lemma}
\label{lem:schwarz-zippel}
Let $p(x_1, x_2, \ldots, x_n)\in F[x_1,\dots,x_n]$ be a polynomial of degree at most $d$ over a field $F$, and assume $p$ is not identically zero.
Let $S$ be a finite subset of $F$. Sample values $a_1,a_2,\ldots,a_n$ from $S$ uniformly at random.
Then, \[\Pr{p(a_1,a_2,\ldots,a_n)=0}\le d/|S|.\]
\end{lemma}

\begin{lemma}
\label{lem:single-eval-algorithm}
Let $V(G)=V_1 \cup V_2$ be a fixed bipartition of the vertex set of the host graph $G$.
There is an algorithm running in $O^*(2^r)$ time and polynomial space such that
\begin{itemize}
 \item If $G$ does not contain a $(k,l)$-tree then the algorithm always answers NO,
 \item If $G$ contains a $(k,l)$-tree $T_G$ such that $|\la(T_G)|\le r$, then the algorithm answers YES with probability at least $\frac{1}2$.
\end{itemize}
\end{lemma}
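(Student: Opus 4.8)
The plan is to combine the structural characterization from Lemma~\ref{lem:polynomial}, the evaluation bound from Lemma~\ref{lem:eval}, and the Schwartz--Zippel Lemma (Lemma~\ref{lem:schwarz-zippel}) in the now-standard way for polynomial identity testing. First I would fix the working field to be $\mathbb{F}_{2^s}$ for $s = \lceil \log(k+r)\rceil + O(1)$, chosen large enough that $2^s > 2\deg(P_{\downarrow r})$; since each monomial $\mon(B,\ell)$ has degree $1 + (k-1) + r = k+r$ in the combined variable set $(\x,\y,\z)$, the total degree of $P_{\downarrow r}$ is $k+r$, and this is polynomially bounded, so a field of size $O(\mathrm{poly}(k+r))$ suffices while still admitting arithmetic in time $\mu = O((k+r)\log(k+r)\log\log(k+r))$, keeping the per-evaluation cost $O^*(2^r)$ by Lemma~\ref{lem:eval}.

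The algorithm itself is: sample each coordinate of $\x,\y,\z$ independently and uniformly from $\mathbb{F}_{2^s}$, evaluate $P_{\downarrow r}(\x,\y,\z)$ using the dynamic programming of Lemma~\ref{lem:eval} in time $O^*(2^r)$ and polynomial space, answer YES if the result is nonzero and NO otherwise. For correctness, I would argue the two bullet points separately. If $G$ contains no $(k,l)$-tree, then by Lemma~\ref{lem:polynomial} (the ``only if'' direction, noting that $\la(T_G) \le r$ is automatically false for any $(k,l)$-tree when no such tree exists) the polynomial $P_{\downarrow r}$ is identically zero, so every evaluation returns $0$ and the algorithm always answers NO --- this direction is deterministic. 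If $G$ contains a $(k,l)$-tree $T_G$ with $|\la(T_G)| \le r$, then by Lemma~\ref{lem:polynomial} the polynomial $P_{\downarrow r}$ is not identically zero, so by Lemma~\ref{lem:schwarz-zippel} applied with $S = \mathbb{F}_{2^s}$ and $d = k+r$, the probability that a uniformly random evaluation vanishes is at most $(k+r)/2^s \le \tfrac12$ by our choice of $s$; hence the algorithm answers YES with probability at least $\tfrac12$.

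There is essentially no serious obstacle here; the only point requiring a little care is the bookkeeping on the field size and degree bound. One must make sure the degree of $P_{\downarrow r}$ --- which involves a sum over many branching walks but where each individual monomial has degree exactly $k+r$ --- is $k+r$ regardless of the number of terms, and then pick $s$ so that $2^s \ge 2(k+r)$, e.g. $s = \lceil \log(k+r)\rceil + 2$, matching the field used implicitly in Lemma~\ref{lem:eval:X}. One should also note that the ``$\le r$'' slack in the hypothesis is exactly what makes $P_{\downarrow r} = \sum_{i=2}^r P_i$ the right object: the $(k,l)$-tree $T_G$ contributes (after rooting, proper-ordering, and picking a bijective labeling as in the proof of Lemma~\ref{lem:polynomial}) a surviving monomial in $P_i$ for $i = |\la(T_G)| \le r$, so it indeed witnesses $P_{\downarrow r} \not\equiv 0$. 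With these remarks in place the lemma follows immediately, so the write-up is short.
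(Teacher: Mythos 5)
Your proposal is correct and follows exactly the same route as the paper: evaluate $P_{\downarrow r}$ at a random point over $\field{2^{\ceil{\log_2(k+r)}+O(1)}}$ via Lemma~\ref{lem:eval}, use Lemma~\ref{lem:polynomial} for both directions, and apply Schwartz--Zippel with the degree bound $\deg(P_{\downarrow r}) \le k+r$. The degree computation and field-size bookkeeping you spell out are precisely what the paper's proof does more tersely, so there is nothing to reconcile.
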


\begin{proof}
 The algorithm is as follows: using the algorithm from Section~\ref{sec:eval}, we evaluate the polynomial $P_{\downarrow r}$ over the field $\field{2^{\ceil{\log_2 (k+r)}+1}}$, substituting the indeterminates by independently chosen random field elements.
 The time bound follows from Lemma~\ref{lem:eval}.

 If there is no $(k,l)$-tree in the input graph, by Lemma~\ref{lem:polynomial} the evaluation returns 0, so we report the correct answer.

 Now assume there is a $(k,l)$-tree $T_G$ such that $\la(T_G)\le r$.
 Then, by Lemma~\ref{lem:polynomial}, $P$ is a non-zero polynomial.
 Note that $\deg(P_i) = k + i$, hence $\deg(P_{\downarrow r}) \le k + r$.
 Hence, by the Schwartz-Zippel Lemma $P$ evaluates to the zero field element with probability at most $\frac{1}2$.
 This finishes the proof.
\end{proof}

\subsection{The random bipartition algorithm}\label{sec:prob}

In this section we assume that $V=V_1\cup V_2$ is a random bipartition, i.e., every vertex goes to $V_1$ independently with probability $1/2$.
Our plan is to choose the value of parameter $r$ large enough so that if there is a $(k,l)$-tree $T_G$ in $G$, then with high probability $|\la(T_G)|\le r$.
Then, by Lemma~\ref{lem:single-eval-algorithm}, we are done.
Of course, putting $r=k$ would perfectly achieve the above goal, but then we only get the running time of $O^*(2^{k})$, matching that of Zehavi~\cite{zehavi-ipec13}.

A natural choice is to set the value of $r$ close to the expectation of $|\la(T_G)|$.
The following lemma follows from the definition of $\la(T_G)$, by the linearity of expectation.

\begin{lemma}
For every $(k,l)$-tree $T_G$ in $G$, we have $\Expect{|\la(T_G)|} = \frac{3}4k + \frac{1}2l - \frac{1}{4}$.
\end{lemma}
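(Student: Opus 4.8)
The plan is to compute $\Expect{|\la(T_G)|}$ by writing $|\la(T_G)|$ as a sum of indicator random variables, one for each element that could belong to $\la(T_G)$, and then applying linearity of expectation. Recall from Section~\ref{sec:labeling} that $\la(T_G) = L(T_G) \cup (I(T_G) \cap V_1) \cup \{uv \in E(T_G)\ :\ u,v \in V_2\}$, and that these three sets are disjoint (a leaf is never internal, and the third set consists of edges, not vertices). So $|\la(T_G)| = |L(T_G)| + |I(T_G)\cap V_1| + |\{uv\in E(T_G): u,v\in V_2\}|$, and it suffices to take expectations of the three terms separately. I would fix the combinatorial parameters: since $T_G$ is a tree on $k$ nodes with exactly $l$ leaves, it has $k-l$ internal nodes and $k-1$ edges.

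First I would handle the leaves: $|L(T_G)| = l$ is a constant (it does not depend on the random bipartition), so $\Expect{|L(T_G)|} = l$. Second, for the internal vertices, each internal vertex $v$ lands in $V_1$ independently with probability $1/2$, so $\Expect{|I(T_G)\cap V_1|} = \frac12|I(T_G)| = \frac12(k-l)$. Third, for the edges: for a fixed edge $uv$ of $T_G$, the event $\{u,v\in V_2\}$ has probability $1/4$ since the two endpoints are placed independently, so $\Expect{|\{uv\in E(T_G): u,v\in V_2\}|} = \frac14(k-1)$. Summing the three contributions gives
\[
\Expect{|\la(T_G)|} = l + \tfrac12(k-l) + \tfrac14(k-1) = \tfrac34 k + \tfrac12 l - \tfrac14,
\]
as claimed.

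There is no real obstacle here; the only thing to be careful about is confirming that the three sets in the definition of $\la(T_G)$ are genuinely disjoint so that $|\la(T_G)|$ splits as a clean sum of three counts (otherwise one would need inclusion–exclusion), and that the bijective-labeling/structural aspects of $T_G$ play no role — only the sizes $|L(T_G)|=l$, $|I(T_G)|=k-l$, and $|E(T_G)|=k-1$ matter. Since each of the $k+l-1$ relevant indicators has an expectation that is immediate from the independence of the vertex assignments, linearity of expectation finishes the computation directly.
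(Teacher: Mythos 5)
Your proof is correct and matches the paper's approach: the paper simply states that the lemma follows from the definition of $\la(T_G)$ and linearity of expectation, which is exactly the computation you carried out in detail.
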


By the lemma above and Markov's inequality, if we put $r = \frac{3}4k + \frac{1}2l$, then the probability that $|\la(T_G)|\le r$ is $\Omega(\frac{1}{k+l})$. Hence it suffices to repeat the algorithm from Lemma~\ref{lem:single-eval-algorithm} (i.e., evaluate the polynomial $P_{\downarrow r}$) $O(k+l)$ times answering true iff at least one evaluation was non-zero, to get a Monte-Carlo algorithm for testing the existence of a $(k,l)$-tree. The complexity of this algorithm is $O^*(2^{(3k+2l)/4})$.
However, similarly as in~\cite{bjorklund-hamilton,BHKK-narrow-sieves}, we can do better.
The idea is to use a value of $r$ smaller than that appearing in the expectation by an $\Omega(k)$ term.
Then the probability that a $(k,l)$-subtree is admissible is inverse-exponential.
Hence, we need to repeat the algorithm from Lemma~\ref{lem:single-eval-algorithm} exponentially many times, every time for a different random bipartition.
However, it turns out that for carefully selected values of $r$, this pays off.
To find this value, the following lemma is crucial.

\begin{lemma}
\label{lem:prob}
Fix an arbitrary $(k,l)$-tree $T_G$ in $G$.
For any integer $t$ such that $0 \le t \le (k-1) / 2$, we have
\[\Pr{|\la(T_G)| \le k + \tfrac{l}2 - t} \ge \frac{1}{2^{k+1}}{k-1 \choose 2t}.\]
\end{lemma}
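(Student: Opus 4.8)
Fix a $(k,l)$-tree $T_G$ in $G$ and root it arbitrarily, so we work with a fixed rooted tree on $k$ nodes with $l$ leaves. Recall
\[
\la(T_G) = L(T_G) \cup (I(T_G) \cap V_1) \cup \{uv \in E(T_G)\ :\ u,v \in V_2\}.
\]
The $l$ leaves are always in $\la(T_G)$, so $|\la(T_G)| = l + X$, where $X$ counts the contribution of the $k-1$ internal vertices and their incident edges. The plan is to express $X$ as a sum over a carefully chosen set of $k-1$ ``objects'' associated with the internal vertices, each of which lands in $\la(T_G)$ independently with probability $1/2$, and then to show that $|\la(T_G)| \le k + \frac{l}{2} - t$ is implied by the event that at least a certain number of these objects fail to be in $\la(T_G)$; a binomial tail bound then finishes the argument.

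**The right charging scheme.** The key observation is how to charge the $k-1$ edges of $T_G$ and the $k-1$ internal vertices to obtain independent coin flips. For each internal vertex $w$, consider the pair consisting of $w$ itself together with the edge $e_w$ from $w$ to its parent (for the root, which has no parent edge, one charges only $w$, or — more cleanly — one observes the root contributes a half-unit on average and handles the $-\tfrac14$ absorbed constant separately; this is exactly the asymmetry that makes the statement have $k+\tfrac l2 - t$ rather than something symmetric). The cleanest route: for a non-root internal vertex $w$, the pair $(w, e_w)$ contributes to $\la(T_G)$ the indicator $[w \in V_1]$ plus the indicator $[$both endpoints of $e_w$ in $V_2]$. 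Conditioning on the color of $w$: if $w \in V_1$ the pair contributes exactly $1$ with certainty (the vertex counts, the edge cannot); if $w \in V_2$ the pair contributes $[\parent(w) \in V_2]$, which is a fresh fair coin *independent across all such $w$*, since the parent colors are distinct random variables and each is used only once in this role — wait, a parent can have several children, so one must instead charge $e_w$ to the *child* side: the edge $uv$ with $u = \parent(v)$ is in $\la(T_G)$ iff $u,v \in V_2$. So I group the edge $e_w$ with the vertex $w$ at its *lower* endpoint. Then the $k-1$ groups $(w, e_w)$ over non-root internal vertices $w$, plus the $l$ leaves (each a singleton group), plus the root as a singleton, partition $V(T_G)\cup E(T_G)$ minus nothing. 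Each non-root-internal group contributes $1 + Y_w$ where, conditioned on all vertex colors *except possibly needing the parent color*, $Y_w$ is... this still isn't independent because parent colors overlap.

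**Resolving the independence issue — the real plan.** The correct device is the one from the path case in \cite{BHKK-narrow-sieves}, adapted: process the internal vertices and assign to each a single independent fair coin by a *telescoping/peeling* argument down the tree. Concretely, I will define for each of the $k-1$ internal vertices a Bernoulli$(1/2)$ random variable $Z_w$ such that (i) the $Z_w$ are mutually independent, and (ii) $|\la(T_G)| \le l + (k-1) - S$ whenever $\sum_w Z_w \ge S'$ for an appropriate $S'$ related to $S$. The natural candidate: $Z_w = [w \in V_1]$ for internal $w$; these are genuinely i.i.d.\ fair coins. Then $|I(T_G)\cap V_1| = \sum_w Z_w$, and the edge term $\{uv : u,v\in V_2\}$ is at most $\#\{$internal $w$ with $w\in V_2$ and $\parent(w)\in V_2\} \le \#\{w : Z_w = 0, w \ne \treeroot\}$, but also each such $V_2$-$V_2$ edge forces its *lower* endpoint into $V_2$, and distinct edges have distinct lower endpoints, so the edge term is at most $|\{w\in I(T_G) : w\in V_2\}| = (k-1) - \sum Z_w$ minus a correction for the root. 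Putting $m := \sum_{w\in I(T_G)} Z_w$ (number of internal vertices colored $V_1$), we get
\[
|\la(T_G)| \le l + m + \big((k-1) - m\big) = l + k - 1,
\]
which is too weak — I have not used the tree structure tightly enough. The needed refinement: an edge $uv$ is in $\la(T_G)$ only if *both* endpoints are in $V_2$, so among the $V_2$-colored internal vertices, only those whose parent is *also* $V_2$ contribute an edge. The hard combinatorial heart — and the step I expect to be the main obstacle — is to bound, for a fixed tree shape, the number of monochromatic-$V_2$ parent-child pairs in terms of a quantity whose distribution is a clean binomial; this requires pairing up the internal vertices cleverly (e.g. matching each internal vertex to one sibling or to its parent so that the ``bad'' events become $2t$ independent coincidences among $k-1$ coins), mirroring how the $\binom{k-1}{2t}$ arises.

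**Assembling the bound.** Once the charging gives $|\la(T_G)| \le k + \tfrac l2 - t$ as a consequence of ``at least $2t$ of $k-1$ specified independent fair coins came up a prescribed way'' — more precisely, the events being $t$ ``double-successes'' among $k-1$ coins, contributing the $\binom{k-1}{2t}$ factor and the $2^{-(k-1)}$ (sharpened to $2^{-(k+1)}$ by a constant-factor slack that also covers the root and the $-\tfrac14$ rounding) — the proof concludes:
\[
\Pr{|\la(T_G)| \le k + \tfrac l2 - t} \ \ge\ \Pr{\text{the $t$ prescribed double-coincidences occur}} \ =\ \frac{1}{2^{k+1}}\binom{k-1}{2t},
\]
valid for $0 \le t \le (k-1)/2$ so that $\binom{k-1}{2t}$ is a legitimate binomial coefficient. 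I would first nail down the deterministic implication (tree combinatorics, no probability), then verify independence of the chosen coins, then read off the binomial tail; the probabilistic part is routine once the charging scheme in the second-to-last paragraph is set up correctly, so essentially all the difficulty is concentrated in choosing that scheme.
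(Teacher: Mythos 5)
Your setup is correct---decomposing $|\la(T_G)| = l + X_{1,i} + X_{22}$ with $X_{1,i}$ the number of internal vertices in $V_1$ and $X_{22}$ the number of $V_2$--$V_2$ tree edges---and you correctly locate the difficulty, but your sketch never gets across it. You repeatedly try to manufacture $k-1$ mutually independent fair coins whose sum controls $X_{1,i}+X_{22}$, so that a binomial tail produces the $2^{-(k+1)}{k-1 \choose 2t}$ factor; you correctly observe that parent--child dependence blocks this; and you close by deferring to a ``clever pairing of internal vertices'' that you never exhibit. That pairing is the entire lemma, and it is not a pairing of vertices, nor is the final step a tail bound.

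The paper's proof is a direct count of favorable colorings, with no independence argument anywhere. Fix any $S\in{E(T_G) \choose 2t}$ and define two colorings $c_0,c_1$ of $V(T_G)$ by: under $c_a$, place $v$ in $V_1$ iff the number of $S$-edges on the root-to-$v$ path is congruent to $a$ modulo $2$. Every vertex is in $V_1$ under exactly one of $c_0,c_1$, so $X_{1,i}(c_0)+X_{1,i}(c_1)=k-l$; and every tree edge \emph{not} in $S$ is $V_2$--$V_2$ under exactly one of $c_0,c_1$, so $X_{22}(c_0)+X_{22}(c_1)=k-1-2t$. Averaging, the better of $c_0,c_1$ satisfies $X_{1,i}+X_{22}\le \tfrac{(k-l)+(k-1-2t)}{2}$, i.e.\ $|\la(T_G)| < k+\tfrac l2 - t$. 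Distinct $S$ yield distinct colorings (a coloring recovers $S$ as its set of bichromatic tree edges), so at least $\tfrac12{k-1 \choose 2t}$ of the $2^{k}$ colorings of $V(T_G)$ are favorable, which gives the claimed probability. The ${k-1 \choose 2t}$ is a literal count of edge subsets, not a binomial coefficient arising from concentration, and the pairing is between the two parity colorings $c_0,c_1$ induced by each $S$.
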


\begin{proof}
 Root $T_G$ at an arbitrary vertex $r$.
 Let the random variable $X_{22}$ denote the number of edges $uv\in E(T_G)$ such that $u,v\in V_2$.
 Also, let $X_{1,i}$ denote the number of internal vertices in $V_1$.
 Then, by the definition of $\la(T_G)$, we have $|\la(G_T)|  = l + X_{1,i} + X_{22}$.

 Fix a subset of edges $S \in {E(T_G) \choose 2t}$.
 For $a=0,1$, let $c_a : V(T_G) \rightarrow \{1,2\}$ be the assignment of vertices of $T_G$ to sets $V_1$, $V_2$ such that for every $v\in V(T_G)$, we have $c_a(v)=1$ if and only if on the path from $r$ to $v$ in $T_G$ the number of edges from $S$ is congruent to $a$ modulo 2.
 Since every vertex is colored 1 in exactly one of the colorings $c_0$, $c_1$, we infer that
 \[X_{1,i}(c_0) + X_{1,i}(c_1) = k - l.\]
 Similarly, every edge in $E(T_G) \setminus S$ is colored 22 in exactly one of the colorings $c_0$, $c_1$; hence
 \[X_{22}(c_0) + X_{22}(c_1) = k - 1 - 2t.\]
 It follows that
 \[
 \begin{split}
\min\{X_{1,i}(c_0) + X_{22}(c_0), X_{1,i}(c_1) + X_{22}(c_1)\} & \le (k-l + k-1-2t)/2 \\
& < k - \tfrac{l}2 - t.
 \end{split}
\]
 Hence, for at least one of the colorings $c_0$, $c_1$, we have $|\la(T_G)| < k + \tfrac{l}2 - t$.
 For all choices of $S$ there are at least $\tfrac{1}2{k-1 \choose 2t}$ such colorings, so the claim follows.
\end{proof}

The following lemma follows immediately from Stirling's approximation.

\begin{lemma}
\label{lem:binom}
 For any fixed $\alpha$, $0 < \alpha < 1$,
 \[{n \choose \alpha n} = O^*\left(\left(\frac{1}{\alpha^{\alpha}(1-\alpha)^{1-\alpha}}\right)^n\right)\]
\end{lemma}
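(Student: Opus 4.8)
\textbf{Proof proposal for Lemma~\ref{lem:binom}.}
The plan is to expand the binomial coefficient via Stirling's approximation and observe that all sub‑exponential factors are absorbed by the $O^*$ notation. Since $\alpha n$ need not be an integer, interpret ${n \choose \alpha n}$ as ${n \choose a}$ with $a=\floor{\alpha n}$, so that $a=\alpha n-\varepsilon$ for some $\varepsilon\in[0,1)$ and $n-a=(1-\alpha)n+\varepsilon$; in particular $a=\alpha n+O(1)$ and $n-a=(1-\alpha)n+O(1)$.

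First I would apply Stirling's formula $m!=\Theta\!\bigl(\sqrt{m}\,(m/e)^m\bigr)$ to each of the three factorials in ${n\choose a}=\frac{n!}{a!\,(n-a)!}$. Because $a+(n-a)=n$, the powers of $e$ cancel exactly, and the square‑root prefactors combine to $\Theta\!\bigl(\sqrt{n}/\sqrt{a(n-a)}\bigr)=\Theta(1/\sqrt{n})$ (using $0<\alpha<1$, so $a,n-a=\Theta(n)$). Hence
\[
{n\choose a}=\Theta\!\Bigl(\tfrac{1}{\sqrt n}\Bigr)\cdot\frac{n^{n}}{a^{a}(n-a)^{n-a}}.
\]
Next I would take logarithms of the remaining factor: $\ln\frac{n^{n}}{a^{a}(n-a)^{n-a}}=n\ln n-a\ln a-(n-a)\ln(n-a)$. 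Substituting $a=\alpha n+O(1)$ gives $a\ln a=\alpha n\ln n+\alpha n\ln\alpha+O(\log n)$, and similarly $(n-a)\ln(n-a)=(1-\alpha)n\ln n+(1-\alpha)n\ln(1-\alpha)+O(\log n)$; here the $O(1)$ additive error in $a$ only contributes an $O(\log n)$ term to the logarithm. The $n\ln n$ contributions cancel since $\alpha+(1-\alpha)=1$, leaving
\[
\ln\frac{n^{n}}{a^{a}(n-a)^{n-a}}=-n\bigl(\alpha\ln\alpha+(1-\alpha)\ln(1-\alpha)\bigr)+O(\log n),
\]
so that $\frac{n^{n}}{a^{a}(n-a)^{n-a}}=n^{O(1)}\bigl(\alpha^{\alpha}(1-\alpha)^{1-\alpha}\bigr)^{-n}$. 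Multiplying by the $\Theta(1/\sqrt n)$ factor and folding the polynomial factor $n^{O(1)}$ into $O^*$ yields ${n\choose\alpha n}=O^*\!\bigl((\alpha^{\alpha}(1-\alpha)^{1-\alpha})^{-n}\bigr)$, as claimed.

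The only mildly delicate point — not a real obstacle — is bookkeeping the non‑integrality of $\alpha n$: one must check that replacing $\alpha n$ by $\floor{\alpha n}$ perturbs $a$ and $n-a$ only by an additive $O(1)$, which propagates to at most an $O(\log n)$ change in the exponent and hence an $n^{O(1)}$ multiplicative change in the bound, safely hidden by $O^*$. Everything else is a direct Stirling computation.
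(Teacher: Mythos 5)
Your proof is correct and takes exactly the approach the paper implicitly invokes: the paper gives no argument at all, stating only that the lemma ``follows immediately from Stirling's approximation,'' and your write-up is a careful expansion of that one-liner, including the (correct) bookkeeping for the non-integrality of $\alpha n$.
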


\begin{theorem}
\label{th:algorithm-random-partition}
 There is an $O^*(1.66^k 2^{l/2})$-time Monte-Carlo polynomial space algorithm for finding a $(k,l)$-tree which never reports a false positive, and reports false negatives with constant probability.
\end{theorem}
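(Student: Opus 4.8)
The plan is to combine Lemma~\ref{lem:single-eval-algorithm} with Lemma~\ref{lem:prob} by choosing a good value of the parameter $r$ and then boosting the success probability by repetition over independent random bipartitions. Concretely, fix a $(k,l)$-tree $T_G$ in $G$ (if one exists). Set $t = \beta k$ for a constant $\beta \in (0, 1/2)$ to be optimized, and put $r = k + \tfrac{l}{2} - t$. By Lemma~\ref{lem:prob}, for a single random bipartition the probability that $|\la(T_G)| \le r$ is at least $\tfrac{1}{2^{k+1}}\binom{k-1}{2t}$. By Lemma~\ref{lem:binom} with $\alpha = 2\beta$, this binomial coefficient is $\Omega^*\big((2\beta)^{-2\beta}(1-2\beta)^{-(1-2\beta)}\big)^k$, so the success probability per bipartition is $\Omega^*(p^k)$ for an explicit $p = p(\beta) > 1/2$. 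Repeating the single-evaluation algorithm from Lemma~\ref{lem:single-eval-algorithm} on $O^*(p^{-k})$ independent random bipartitions, and answering YES iff at least one run says YES, yields overall success probability $\Omega(1)$ (a constant, boostable to any desired constant), and never a false positive since Lemma~\ref{lem:single-eval-algorithm} never reports a false positive.

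The running time of this procedure is $O^*(p^{-k} \cdot 2^r) = O^*\big((1/p)^k \cdot 2^{k - t + l/2}\big) = O^*\big(2^{l/2} \cdot (2^{1-\beta}/p(\beta))^k\big)$. So the task reduces to minimizing the base $f(\beta) := 2^{1-\beta}/p(\beta) = 2^{1-\beta}\cdot (2\beta)^{2\beta}(1-2\beta)^{1-2\beta}$ over $\beta \in (0,1/2)$. This is the same one-variable optimization that appears in the analogous \probkPath analysis in~\cite{bjorklund-hamilton,BHKK-narrow-sieves}; taking the logarithmic derivative and solving $\tfrac{d}{d\beta}\big[(1-\beta)\ln 2 + 2\beta\ln(2\beta) + (1-2\beta)\ln(1-2\beta)\big] = 0$ gives $-\ln 2 + 2\ln(2\beta) - 2\ln(1-2\beta) = 0$, i.e. $\tfrac{2\beta}{1-2\beta} = \sqrt{2}$, so $\beta = \tfrac{\sqrt 2}{2(1+\sqrt 2)} = \tfrac{\sqrt2(\sqrt2-1)}{2} = 1 - \tfrac{1}{\sqrt2} \approx 0.2929$. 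Plugging this back in gives $f(\beta) < 1.66$, matching the claimed bound $O^*(1.66^k 2^{l/2})$; this numerical check is the only real computation, and it is routine.

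Two small technical points need attention. First, the boosting argument needs the per-bipartition success probability of Lemma~\ref{lem:single-eval-algorithm} to combine correctly with the per-bipartition probability that $|\la(T_G)| \le r$: conditioned on a bipartition with $|\la(T_G)| \le r$, Lemma~\ref{lem:single-eval-algorithm} outputs YES with probability $\ge \tfrac12$, so the joint probability of a correct YES in one trial is $\ge \tfrac12 \cdot \tfrac{1}{2^{k+1}}\binom{k-1}{2t}$, which only changes the repetition count by a polynomial (indeed constant) factor — harmless inside $O^*$. Second, one must ensure $t$ is an integer and $0 \le t \le (k-1)/2$; rounding $\beta k$ to the nearest integer changes the exponent by $O(1)$, again absorbed by $O^*$, and for $\beta < 1/2$ the constraint $2t \le k-1$ holds for all large $k$ (small $k$ are handled trivially in polynomial time). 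I do not anticipate a serious obstacle here; the main (minor) care is just in bookkeeping the polynomial factors and rounding so that the clean $1.66^k 2^{l/2}$ bound comes out, exactly as in the path case.
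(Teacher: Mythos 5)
Your approach is essentially identical to the paper's: set $t\approx\beta k$, put $r=k-t+\lceil l/2\rceil$, repeat the single-evaluation test on $\Theta^*\bigl(2^{k}/\binom{k-1}{2t}\bigr)$ independent random bipartitions, and optimize $\beta$ using Lemma~\ref{lem:binom} — with the paper's $\epsilon$ corresponding to your $\beta-1/4$, and the same optimal $\beta=1-1/\sqrt2\approx 0.2929$. There is one algebraic slip: since $p(\beta)=\tfrac12(2\beta)^{-2\beta}(1-2\beta)^{-(1-2\beta)}$, the base should be $f(\beta)=2^{1-\beta}/p(\beta)=2^{2-\beta}(2\beta)^{2\beta}(1-2\beta)^{1-2\beta}$ (matching the paper's $2^{7/4-\epsilon}(\tfrac12+2\epsilon)^{\tfrac12+2\epsilon}(\tfrac12-2\epsilon)^{\tfrac12-2\epsilon}$), not $2^{1-\beta}(2\beta)^{2\beta}(1-2\beta)^{1-2\beta}$; your expression evaluates to about $0.83<1$ at the optimum, which is implausible for a running-time base, whereas the correct one gives about $1.657<1.66$. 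The missing factor of $2$ does not move the argmin (its derivative is zero), so the optimal $\beta$ and the final claim survive, but the intermediate numerics as written are off by a factor of $2$.
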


\begin{proof}
 Fix an $\epsilon \ge 0$ and let $t= \floor{(\frac{1}4+\epsilon) k}$.
 Put $r=k - t + \ceil{\frac{l}2} = \ceil{(\frac{3}4 - \epsilon)k} + \ceil{\frac{l}2}$.
 We choose a random bipartition $V=V_1\cup V_2$ and apply the algorithm from Lemma~\ref{lem:single-eval-algorithm}.
 We repeat this $\ceil{2^{k+1}/{k-1 \choose 2t}}$ times, and we return YES iff at least one of the executions of the algorithm from Lemma~\ref{lem:single-eval-algorithm} returned YES.
 If there is no $(k,l)$-tree in the input graph, by Lemma~\ref{lem:single-eval-algorithm} we report the correct answer.
 Now assume there is a $(k,l)$-tree $T_G$.
 Call a bipartition {\em nice} if $|\la(T_G)|\le r$.
 By Lemma~\ref{lem:prob}, a random bipartition is nice with probability at least $p=\frac{1}{2^{k+1}}{k-1 \choose 2t}$.
 Hence, at least one of the tried bipartitions is nice with probability at least $1-(1-p)^{1/p} \ge 1-1/e$.
 For such a bipartition, the algorithm from Lemma~\ref{lem:single-eval-algorithm} answers YES with probability at least $\frac{1}2$.
 Hence our algorithm reports a false-negative with probability at most $1/e + \frac{1}2 < 1$.

 By Lemma~\ref{lem:single-eval-algorithm}, the running time is
 \[O^*\left(2^{r+k}/{k-1 \choose 2t}\right) = O^*\left(2^{(7/4-\epsilon)k+l/2} / {k \choose (1/2+2\epsilon)k}\right).\]
 By Lemma~\ref{lem:binom} we can express this by $O^*((f(\epsilon))^k2^{l/2})$, for $f(\epsilon)=2^{7/4-\epsilon}(\frac{1}2+2\epsilon)^{\frac{1}2+2\epsilon}(\frac{1}2-2\epsilon)^{\frac{1}2-2\epsilon}$.
 The function $f$ attains a minimum smaller than $1.65685$ for $\epsilon = 0.042894$. Hence the claim.
\end{proof}

\section{Spanning Trees with Many Internal Vertices}\label{sec:kIST}
In this section we give an improved algorithm for {\sc $k$-IST}. As previous algorithms (see, e.g., \cite{kIOB49k}) for this problem, we depend on the following insight.

\begin{lemma}
A graph $G$ contains a spanning tree with at least $k$ internal vertices iff $G$ contains a subtree with exactly $k$ internal vertices and at most $k$ leaves.
\end{lemma}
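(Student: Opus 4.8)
The statement is an equivalence between two conditions on $G$: (a) $G$ has a spanning tree with $\ge k$ internal vertices, and (b) $G$ has a subtree with exactly $k$ internal vertices and at most $k$ leaves. The plan is to prove both implications directly by manipulating a witnessing tree.

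For the forward direction, suppose $G$ has a spanning tree $T$ with at least $k$ internal vertices. I would repeatedly prune leaves from $T$, one at a time, until the resulting tree has exactly $k$ internal vertices; this is possible because removing a leaf decreases the number of internal vertices by at most one (it drops by one precisely when the leaf's unique neighbour had degree two), so the count of internal vertices passes through every value between its initial value and $k$ as we delete leaves. Let $T'$ be the subtree obtained when the internal count first equals $k$. Then $T'$ is a subtree of $G$ with exactly $k$ internal vertices, and it remains to bound its leaves. The key observation is that in any tree the number of leaves is at most the number of internal vertices plus one (for a tree with at least two vertices), since the leaves can be mapped injectively into... more carefully: a tree with $k$ internal vertices has at most $k+1$ leaves only when it is a path-like/caterpillar-free structure — actually the correct bound is that a tree with $\ge 2$ vertices and $k\ge 1$ internal vertices has at most... hmm, a star $K_{1,m}$ has $1$ internal vertex and $m$ leaves. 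So leaves are \emph{not} bounded by internal vertices in general. The right move instead: after pruning to exactly $k$ internal vertices, if there are more than $k$ leaves, I would further prune. But pruning a leaf may reduce internal count below $k$. So the cleaner approach is: among all subtrees of $G$ with exactly $k$ internal vertices, pick one with the fewest vertices (equivalently fewest leaves); I claim it has at most $k$ leaves, because if some internal vertex $v$ had two or more leaf-children, deleting one such leaf keeps all $k$ internal vertices internal (since $v$ still has that other leaf-child plus its parent, so $\deg v\ge 2$), contradicting minimality — hence every internal vertex has at most one leaf-child, so the number of leaves is at most the number of internal vertices having a leaf-child, which is at most $k$. (Edge case: the root, if we root at an internal vertex, and the degenerate cases $k=0,1$, handled separately.) This gives (b).

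For the reverse direction, suppose $G$ has a subtree $T$ with exactly $k$ internal vertices and at most $k$ leaves. Since $G$ is connected, I would extend $T$ to a spanning tree $\widehat{T}$ of $G$ by adding the remaining vertices of $G$: take any spanning tree of $G$ and, more simply, grow $T$ by repeatedly attaching, via an edge of $G$, a vertex of $G$ not yet in the tree to some vertex already in the tree (possible by connectivity) until all of $V(G)$ is covered. Adding a vertex as a new leaf never turns an internal vertex into a leaf, and it can only turn former leaves into internal vertices — so the internal-vertex count of $\widehat{T}$ is at least $k$. Hence $G$ has a spanning tree with at least $k$ internal vertices, which is (a). (If $G$ is disconnected it has no spanning tree, so the statement is read under the usual convention that $G$ is connected, or both sides are vacuously handled.)

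\textbf{Main obstacle.} The only subtle point is the leaf bound in the forward direction: one cannot simply prune to $k$ internal vertices and hope for few leaves, since stars show leaves are unbounded in terms of internal vertices in general. The fix is to argue via a minimum-size subtree with exactly $k$ internal vertices and exploit that minimality forces every internal vertex to have at most one leaf-child, which then caps the leaves by $k$. I would also take care of the small-parameter edge cases ($k\le 1$) and the rooting convention separately, but these are routine.
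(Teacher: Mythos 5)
The paper itself does not prove this lemma: it is cited as a known insight from Cohen et al.\ (\cite{kIOB49k}), and the text that follows only introduces the notion of a \emph{minimal} witness tree (one from which no leaf can be removed without losing an internal vertex) and notes the easy reverse direction. Your proof is correct and is essentially the standard derivation implicit in that notion. For the forward direction, once the pruning step guarantees the existence of some subtree with exactly $k$ internal vertices, the minimality argument goes through; it is cleaner to state it unrooted, which avoids the special handling of the root that your ``leaf-child'' phrasing forces: in a vertex-minimal subtree with exactly $k$ internal vertices, removing any leaf must drop the internal count, which forces every leaf's unique neighbour to have degree two; for $k\ge 2$ these degree-two neighbours are pairwise distinct (otherwise the tree is a three-vertex path and $k=1$), so the map from leaves to their neighbours is an injection into the internal vertices, giving at most $k$ leaves. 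Your reverse direction is exactly the paper's one-line remark: greedily attach the remaining vertices, and each attachment only turns leaves into internal vertices. The only genuine loose end is the degenerate regime $k\le 1$ (and the single-vertex subtree), where the convention for whether an isolated vertex counts as a leaf or as internal matters; you correctly flag these as edge cases to be handled separately, and they are irrelevant for the {\sc $k$-IST} application.
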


We call such a tree a witness for $k$-IST. A witness tree is \emph{minimal} if one cannot remove any leaf from it without reducing the number of internal nodes.
It is easy to see that such a tree can be extended to a $k$-IST simply by connecting each remaining vertex in the graph to the tree arbitrarily.

We will use two strategies to search for a minimal witness tree. In the first, which we refer to as strategy A,  we search for a $(k+l,l)$-tree for every $l=2,\ldots,\alpha k$, for some $\alpha$ to be specified later, with our algorithm from Theorem~\ref{th:algorithm-random-partition}. Its runtime gets worse the more nodes, and especially leaves, the tree has. Running only strategy A to solve for a $k$-IST (i.e. let $\alpha=1$) results in an $O^*(3.883^k)$ time algorithm using polynomial space, already improving over state of the art.

In our second strategy, strategy B, we search indirectly for a $(k+l,l')$-tree for every $l=\alpha k,\ldots, k$, where $l'\leq l$, by using an observation from Zehavi~\cite{zehavi-arxiv}, that a tree with many leaves can be trimmed to a small tree with few leaves and a disjoint matching. Furthermore, such a disconnected subgraph can easily be combined into a $(k+l,l')$-tree for some $l'\leq l$ by connecting the edges of the matching one-by-one to the tree in an arbitrary order.

For completeness, we state and prove the following lemma implicit in~\cite{zehavi-arxiv}:

\begin{lemma}
Every minimal $(k+l,l)$-tree with $l>(k/2)+1$ has a subgraph consisting of a minimal $(k+l-2,l-1)$-tree and a disjoint edge.
\end{lemma}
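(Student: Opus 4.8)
The plan is to first translate minimality into a rigid structural statement, then to carve off one leaf together with its neighbour. Let $T$ be a minimal $(k+l,l)$-tree; we may assume $k\ge 2$, the cases $k\le 1$ being degenerate (and, as will be clear, $k=2$ turns out to be vacuous under the hypothesis). If $v$ is a leaf of $T$ with unique neighbour $s$, then deleting $v$ may not preserve the count of $k$ internal vertices; since deleting a leaf can lower that count only by turning $s$ into a leaf, minimality forces $\deg_T(s)=2$. Thus in a minimal tree \emph{every leaf has a neighbour of degree exactly $2$}, which I will call its \emph{stem}; a stem, having degree $2$, is the stem of at most one leaf (otherwise $T$ would be a path on three vertices), so the $l$ leaves have $l$ distinct stems, all internal, whence $l\le k$. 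Passing to $T':=T-L(T)$, the tree on the $k$ internal vertices of $T$, minimality also forbids any internal vertex of degree $\ge 3$ from having a leaf child, so $\deg_T(x)=\deg_{T'}(x)$ for every $x\in V(T')$ that is not a stem, while each stem has $T'$-degree $1$ and exactly one leaf child. Hence the stems are precisely the leaves of $T'$, and $L(T')$ is in bijection with $L(T)$.

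Next I would locate the leaf to remove. Suppose for contradiction that every leaf of $T'$ has a neighbour of degree $\le 2$ in $T'$. A leaf of $T'$ with a degree-$1$ neighbour would make $T'$ a single edge, i.e.\ $k=2$; but then each of the two internal vertices is a stem with one leaf child, so $l\le 2$, contradicting $l>k/2+1=2$. Otherwise every leaf $u$ of $T'$ has a neighbour $\sigma(u)$ with $\deg_{T'}(\sigma(u))=2$; such a vertex has only one leaf neighbour, so $u\mapsto\sigma(u)$ is injective with image disjoint from $L(T')$, giving $|V(T')|\ge 2l$, i.e.\ $k\ge 2l$, again contradicting $l>k/2+1$. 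Therefore some leaf $u$ of $T'$ has a $T'$-neighbour $w$ with $\deg_{T'}(w)\ge 3$; in particular $w$ is not a stem, so $\deg_T(w)=\deg_{T'}(w)\ge 3$. Let $v$ be the unique leaf of $T$ attached to the stem $u$.

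Finally I would verify that $H:=T-\{u,v\}$ together with the edge $uv$ is the claimed subgraph. Since $v$ is a leaf of $T$ and then $u$ is a leaf of $T-v$, $H$ is a tree on $(k+l)-2$ vertices. Deleting $u$ and $v$ lowers the degree of exactly one surviving vertex, namely $w$, and $\deg_T(w)\ge 3$, so $w$ stays internal in $H$ and no new leaf is created. Moreover every leaf $v'\ne v$ of $T$ keeps its stem $s'$: $s'\ne u$ because stems are distinct, and the second neighbour of $s'$ is neither $v$ (whose only neighbour is $u$) nor $u$ (for then that neighbour would be $w$, which is not a stem), so $s'$ retains degree $2$ in $H$. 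Hence $H$ has exactly $l-1$ leaves, therefore $k-1$ internal vertices, and each of its leaves still has a degree-$2$ stem, so $H$ is a minimal $(k+l-2,l-1)$-tree; since $u,v\notin V(H)$ and $uv\in E(T)$, the edge $uv$ is the required disjoint edge.

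The only real obstacle is the second step: one must delete a stem $u$ whose $T'$-neighbour has degree at least $3$. If instead that neighbour had degree $2$, it would drop to a leaf upon deletion, and $H$ would have $l$ leaves and only $k-2$ internal vertices rather than the $(l-1,k-1)$ split we need. The counting argument above is exactly where the hypothesis $l>k/2+1$ is used (indeed $l>k/2$ already suffices, the extra unit only removing the degenerate small cases); everything else is routine degree bookkeeping for the removal of two vertices.
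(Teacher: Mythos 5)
Your proof is correct and takes essentially the same approach as the paper: use minimality to show every leaf has a degree-2 ``stem'', invoke the hypothesis $l>k/2+1$ in a counting argument to find a stem whose other neighbour has degree at least~$3$, and delete that leaf together with its stem as the disjoint edge. The paper phrases the count via the paths $p(\ell)$ in the tree rooted at a leaf rather than via the internal tree $T'=T-L(T)$, but this is only a cosmetic difference in bookkeeping (and your verification that the remaining tree is still minimal is spelled out a bit more explicitly than in the paper).
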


\begin{proof}
Let us root the tree at an arbitrary leaf.
Every leaf's parent has only one child, otherwise it will not be minimal since we can remove the leaf.
Associate with a leaf $\ell$  the path $p(\ell)$ of all internal vertices which have only the leaf $\ell$ as their descendant among all
leaves. Note that every internal node is associated with at most one leaf,
and that $p(\ell)$ contains at least the parent (since the tree is minimal). If there is no leaf $\ell$ with
exactly one internal node in $p(\ell)$, we would have at least twice as many
internal nodes as leaves violating the assumption $l>(k/2)+1$. Hence there is a
leaf $\ell$ with $|p(\ell)|=1$, and we can remove $\ell,p(\ell)$ as the disjoint edge from
the tree, leaving a minimal $(k+l-2,l-1)$-tree.
\end{proof}

By iteratively applying the lemma above, we have that
\begin{corollary}
If a $k$-IST instance graph $G$ has a $((1+\beta)k,\beta k)$-tree witness for some $\beta>0.5$, then it also has a minimal $(3(1-\beta)k,(1-\beta)k)$-tree and a disjoint $(2\beta-1)k$ edge matching.
\end{corollary}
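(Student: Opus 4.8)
The plan is to prove the corollary by iterating the preceding lemma. We start from a $((1+\beta)k,\beta k)$-tree witness $T_0$ with $\beta > 0.5$; by the first lemma of this section we may assume it is minimal (if it has more than $k$ leaves we are not in the $k$-IST witness regime, but here we simply note that a $((1+\beta)k,\beta k)$-tree can be taken minimal since dropping leaves only shrinks the parameters in a controlled way, and minimality is exactly what the previous lemma requires as input). Write $n_0 = (1+\beta)k$ for the number of nodes and $\ell_0 = \beta k$ for the number of leaves. As long as the current tree $T_j$ is a minimal $(k+\ell_j, \ell_j)$-tree --- in the notation $k_j := n_j - \ell_j$ for its internal count --- with $\ell_j > (k_j/2) + 1$, the previous lemma lets us peel off a disjoint edge and pass to a minimal $(k_j + \ell_j - 2, \ell_j - 1)$-tree, i.e.\ $n_{j+1} = n_j - 2$, $\ell_{j+1} = \ell_j - 1$, $k_{j+1} = k_j - 1$.

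Next I would track the quantity that controls when the iteration must stop. The hypothesis $\ell_j > (k_j/2)+1$ rewrites as $2\ell_j - k_j > 2$, and one step changes $2\ell_j - k_j$ by $2(\ell_j - 1) - (k_j - 1) - (2\ell_j - k_j) = -1$. So starting from $2\ell_0 - k_0 = 2\beta k - ((1+\beta)k - \beta k) = 2\beta k - k = (2\beta - 1)k$, we can perform the peeling step as long as this quantity stays above $2$, and it drops by exactly $1$ each time; hence we can extract a matching of size $t$ for any $t$ with $(2\beta-1)k - t \geq $ roughly $1$, in particular for $t = (2\beta - 1)k$ (up to the additive constant hidden in the ``$>(k/2)+1$'' threshold, which is absorbed in the asymptotics / rounding). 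After $t = (2\beta-1)k$ steps the remaining tree has $n_t = (1+\beta)k - 2(2\beta-1)k = (3 - 3\beta)k = 3(1-\beta)k$ nodes and $\ell_t = \beta k - (2\beta - 1)k = (1-\beta)k$ leaves, and the $t$ removed edges are pairwise disjoint and disjoint from the remaining tree. That is exactly the claimed minimal $(3(1-\beta)k, (1-\beta)k)$-tree plus a disjoint $(2\beta-1)k$-edge matching.

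The only real subtlety --- and the step I'd be most careful about --- is the bookkeeping at the boundary of the iteration: the lemma requires the \emph{strict} inequality $\ell_j > (k_j/2)+1$, and $2\ell_j - k_j$ decreases in unit steps from $(2\beta-1)k$, so to run exactly $(2\beta-1)k$ iterations one needs $(2\beta - 1)k$ to clear the ``$+2$'' slack, which is fine for all sufficiently large $k$ (or one simply absorbs an $O(1)$ discrepancy into the $\Theta(k)$ parameters, as is standard in this kind of FPT statement). Everything else is just substituting the arithmetic progressions $n_j$, $\ell_j$, $k_j$ into the parameters and checking the endpoint, together with the trivial observation that each peeled edge is disjoint from all later trees (it is removed, never re-added) and from each other (each is removed at its own step). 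I would present it as: ``apply the previous lemma repeatedly; after $i$ applications we have a minimal $((1+\beta)k - 2i,\ \beta k - i)$-tree together with $i$ disjoint edges; the hypothesis of the lemma holds at step $i$ precisely while $\beta k - i > ((1+\beta)k - 2i - (\beta k - i))/2 + 1$, i.e.\ while $i < (2\beta-1)k - 2$; taking $i = (2\beta-1)k$ (valid for $k$ large) yields the claim.''
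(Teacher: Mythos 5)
Your proof is correct and takes exactly the approach the paper intends: the paper states this corollary with no argument beyond ``by iteratively applying the lemma above,'' and you supply the missing induction, tracking $2\ell_j - k_j$, verifying that it decreases by one per step from its initial value $(2\beta-1)k$, and correctly computing the endpoint parameters $3(1-\beta)k$ and $(1-\beta)k$. One small slip in the closing parenthetical: ``taking $i=(2\beta-1)k$ (valid for $k$ large)'' is misleading, since the inequality $i<(2\beta-1)k-2$ that you derived excludes $i=(2\beta-1)k$ for \emph{every} $k$; the correct justification is the one you gave earlier in the paragraph, namely that the additive $O(1)$ discrepancy in the tree/matching parameters is immaterial to the asymptotic $O^*(\cdot)$ running-time analysis in which the corollary is used.
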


In contrast to~\cite{zehavi-arxiv}, that uses representative sets to separate the matching from the tree for {\sc $k$-IOB} (the directed version of the {\sc $k$-IST} problem), we will simply use a random bipartition.
It turns out to be a faster algorithm because in our algorithm for $(k+l,l)$-tree from Theorem~\ref{th:algorithm-random-partition}, the leaves cost much more than internal nodes, unlike in the directed counterpart from~\cite{zehavi-ipec13} where all nodes cost the same.

For each $l=\alpha k,\ldots, k$, we let $\gamma=l/k$ and repeat the following $r(\gamma)$ times.
We partition the vertices in $V_M\cup V_T=V$. With probability $p(\gamma)$ we put a vertex in $V_M$, otherwise we put it in $V_T$.
First we compute in polynomial time a maximum matching in the induced graph $G[V_M]$. If we find one with at least $(2\gamma-1)k$ edges, then we also search for a $(3(1-\gamma)k,(1-\gamma)k)$-tree in the induced graph $G[V_T]$ with the algorithm from Theorem~\ref{th:algorithm-random-partition}.
If we also find a tree, we combine the found matching and the tree into a $(k+l,l')$-tree witness and output {\em yes}.
If all runs for all $l$ fail we output {\em no}.
We set the parameters $r(\gamma)$ and $p(\gamma)$ as
\[
r(\gamma)=\left(\frac{1+\gamma}{4\gamma-2}\right)^{(4\gamma-2)k}\left(\frac{1+\gamma}{3(1-\gamma)}\right)^{3(1-\gamma)k}, \,\,\, p(\gamma)=\frac{4\gamma-2}{1+\gamma}.
\]
The probability that none of the $r(\gamma)$ tries succeeds in separating the tree and the matching is at most
\[
\left(1 - \left(\frac{4\gamma-2}{1+\gamma}\right)^{(4\gamma-2)k}\left(\frac{3(1-\gamma)}{1+\gamma}\right)^{3(1-\gamma)k} \right)^{r(\gamma)} = (1-r(\gamma)^{-1})^{r(\gamma)}<e^{-1}.
\]
We get that with probability at least $1-e^{-1}$ we successfully separate the matching and the tree in the bipartition, and hence strategy B works.
The running time for a fixed ratio $\gamma$ is $O^*(r(\gamma)1.657^{3(1-\gamma)k}2^{(1-\gamma)k/2})$.
One can check that this is a decreasing function for $\gamma \in [0.68,1]$.
Hence, when $\alpha \in [0.68,1]$, the total time of strategy B is $O^*(r(\alpha)1.657^{3(1-\alpha)k}2^{(1-\alpha)k/2})$.
On the other hand, strategy A takes time $O^*(1.657^{(1+\alpha)k}2^{\alpha k/2})$, which increases with $\alpha$.
The best trade-off between strategy A and B is obtained for $\alpha=0.8627...$, which gives a total running time of $O^*(3.455^k)$.

For $\alpha=0.8627$, in both strategies, the worst running time bound is obtained when $l=\alpha k$ {\em and} $k+l=n$; indeed, if $k+l>n$, it is clear that there is no solution. Therefore, the worst running time bound is obtained when $k=n/(1+\alpha)$, which makes our algorithm run in time $O^*(3.455^{n/(1+\alpha)})=O^*(1.946^n)$. Thus, we have proved the following result.

\begin{theorem}
There is an $O^*(\min\{3.455^k,1.946^n\})$-time Monte-Carlo polynomial space algorithm for {\sc $k$-IST} in general graphs.
\end{theorem}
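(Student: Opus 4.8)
The plan is to combine the two strategies sketched above into a single algorithm and to optimize over the threshold $\alpha$ at which we switch from one to the other. The starting point is the characterization lemma: $G$ has a spanning tree with at least $k$ internal vertices if and only if $G$ contains a minimal witness tree, i.e.\ a subtree with exactly $k$ internal nodes and some number $l\le k$ of leaves that cannot be shrunk further; any such tree extends to a $k$-IST by attaching the remaining vertices arbitrarily. So it suffices, for each $l=2,\dots,k$, to test whether $G$ has a $(k+l,l)$-tree witness, and to report \emph{yes} as soon as one is found.

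For the range $l\le\alpha k$ I would run \emph{strategy A}: for each such $l$, invoke the $(k,l)$-tree algorithm of Theorem~\ref{th:algorithm-random-partition} on the pair $(k+l,l)$. Its running time $O^*\!\big(1.657^{(1+l/k)k}2^{l/2}\big)$ is increasing in $l$, so over this range strategy A costs $O^*\!\big(1.657^{(1+\alpha)k}2^{\alpha k/2}\big)$, an increasing function of $\alpha$. For the range $l\ge\alpha k$ I would run \emph{strategy B}, which rests on the Corollary: a $((1+\gamma)k,\gamma k)$-tree witness with $\gamma>\tfrac12$ contains, as a subgraph, a minimal $(3(1-\gamma)k,(1-\gamma)k)$-tree together with a vertex-disjoint matching of $(2\gamma-1)k$ edges, and conversely any such disconnected subgraph can be reassembled into a $(k+l,l')$-tree witness with $l'\le l$ by re-attaching the matching edges one at a time. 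To detect this decomposed object I take a random bipartition $V=V_M\cup V_T$, each vertex going to $V_M$ independently with probability $p(\gamma)=\tfrac{4\gamma-2}{1+\gamma}$, compute a maximum matching in $G[V_M]$ in polynomial time, and, if it has at least $(2\gamma-1)k$ edges, search for a $(3(1-\gamma)k,(1-\gamma)k)$-tree in $G[V_T]$ via Theorem~\ref{th:algorithm-random-partition}; repeating this $r(\gamma)$ times drives the failure probability below $e^{-1}$, where $r(\gamma)$ is the reciprocal of the probability $p(\gamma)^{(4\gamma-2)k}(1-p(\gamma))^{3(1-\gamma)k}$ that the bipartition simultaneously places all $(4\gamma-2)k$ matching vertices in $V_M$ and the whole small tree in $V_T$ (and $p(\gamma)$ is exactly the value maximizing this probability). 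The per-$\gamma$ cost is $O^*\!\big(r(\gamma)\,1.657^{3(1-\gamma)k}2^{(1-\gamma)k/2}\big)$, which I would check is decreasing for $\gamma\in[0.68,1]$, so over this range strategy B costs $O^*\!\big(r(\alpha)\,1.657^{3(1-\alpha)k}2^{(1-\alpha)k/2}\big)$.

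Since the strategy-A bound increases and the strategy-B bound decreases with $\alpha$, I would equate the two exponents and solve numerically; the crossover sits at $\alpha=0.8627\ldots$, where both run in $O^*(3.455^k)$, giving the parameterized bound. For the bound in $n$, I would note that the worst case over all relevant $(k,l)$ is attained when the hidden tree spans the whole graph, i.e.\ $k+l=n$ with $l=\alpha k$ (if $k+l>n$ there is trivially no solution); then $k=n/(1+\alpha)$ and $3.455^{n/(1+\alpha)}=1.946^n$.

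I expect the main work to be the probabilistic analysis of strategy B: verifying that $p(\gamma)$ as above is the optimal split probability, that the $r(\gamma)$ repetitions indeed yield success probability $\ge 1-e^{-1}$ (to be composed with the $\tfrac12$ success probability of the call to Theorem~\ref{th:algorithm-random-partition}), and that $r(\gamma)\,1.657^{3(1-\gamma)k}2^{(1-\gamma)k/2}$ is monotone on $[0.68,1]$ --- together with the final one-variable numerical optimization pinning down $\alpha$ and the constants $3.455$ and $1.946$.
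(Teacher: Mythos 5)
Your proposal reproduces the paper's proof essentially step for step: the same split into strategy~A (for $l\le\alpha k$, via Theorem~\ref{th:algorithm-random-partition}) and strategy~B (for $l\ge\alpha k$, via the matching-plus-small-tree decomposition detected through a random bipartition with success probability $p(\gamma)^{(4\gamma-2)k}(1-p(\gamma))^{3(1-\gamma)k}$ boosted by $r(\gamma)$ repetitions), the same choice $p(\gamma)=\frac{4\gamma-2}{1+\gamma}$, the same monotonicity observations, the same crossover $\alpha\approx 0.8627$ giving $O^*(3.455^k)$, and the same reduction to the worst case $k+l=n$, $l=\alpha k$ to obtain the $O^*(1.946^n)$ bound. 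It is correct and takes the same approach as the paper.
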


\section{Colored Graphs}

In this section we improve the running times of algorithms from Section~\ref{sec:general} in restricted settings.
This is done by adjusting the partition $V=V_1 \cup V_2$ to particular graph classes where vertex colorings guide us in making the partition. We will consider three ways of coloring the vertices. The first is ordinary proper vertex coloring of the graph, i.e. color the vertices so that no edge has both its endpoints colored by the same color. The least number of colors needed is denoted by $\chi_G$.
The second way is fractional vertex coloring, that assigns a subset of $b$ colors to each vertex from a palette of $a$ colors so that the endpoints of each edge receive disjoint subsets of colors.
The smallest possible ratio $a/b$ is denoted by $\chi_f(G)$. The third way is vector coloring that assigns unit length vectors to the vertices so the minimum angle $\alpha$ between every edge's endpoints' vectors is as large as possible. The smallest possible value of $1+cos^{-1}(\alpha)$ is denoted by $\chi_v(G)$.

The following chain of inequalities holds (see e.g.~\cite{GvozdenovicL08}), where $\omega(G)$ is the clique number, i.e., the size of the largest clique in $G$.

\begin{theorem}
\label{th:chain}
 For any graph $G$,
 \[\omega(G) \le \chi_v(G) \le \chi_f(G) \le \chi(G).\]
\end{theorem}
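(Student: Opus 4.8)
The plan is to establish the chain $\omega(G) \le \chi_v(G) \le \chi_f(G) \le \chi(G)$ by proving each of the three inequalities separately, moving from right to left since the later inequalities are the more routine ones.

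\textbf{Step 1: $\chi_f(G) \le \chi(G)$.} I would observe that an ordinary proper coloring with $\chi(G)$ colors is already a fractional coloring: take the palette of size $a = \chi(G)$, assign each vertex the singleton $b=1$ color set given by the proper coloring, and note that adjacent vertices get disjoint (singleton) sets precisely because the coloring is proper. This yields ratio $a/b = \chi(G)$, so $\chi_f(G) \le \chi(G)$ by minimality of $\chi_f$.

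\textbf{Step 2: $\chi_v(G) \le \chi_f(G)$.} Here I would start from an optimal fractional coloring assigning to each vertex $v$ a $b$-subset $S_v$ of an $a$-element palette with $a/b = \chi_f(G)$, and turn it into a vector coloring. Identify the palette with the standard basis $e_1,\dots,e_a \in \mathbb{R}^a$ and map $v$ to the unit vector $u_v = \frac{1}{\sqrt{b}}\sum_{i \in S_v} e_i$. For an edge $uv$, disjointness of $S_u$ and $S_v$ gives $\langle u_v, u_w\rangle = 0 \le -\frac{1}{a/b - 1}$ exactly when $a/b \ge 1$, which holds for any graph with an edge (and trivially otherwise); more care is needed to phrase the ``$1 + \cos^{-1}(\alpha)$'' bound in the excerpt's notation, but the essential point is that orthogonal vectors certify a vector coloring of value at most $a/b$. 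I would then conclude $\chi_v(G) \le \chi_f(G)$. (This step requires checking the exact definition the paper uses for vector chromatic number so that the arithmetic matches; that is the one genuinely fiddly bookkeeping item.)

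\textbf{Step 3: $\omega(G) \le \chi_v(G)$.} I would take a clique $K$ of size $\omega = \omega(G)$ and an optimal vector coloring $\{u_v\}$, restrict to the unit vectors $u_{v_1},\dots,u_{v_\omega}$ assigned to the clique vertices, and use the fact that for any real unit vectors $\left\| \sum_{i=1}^{\omega} u_{v_i} \right\|^2 \ge 0$. Expanding, $0 \le \omega + \sum_{i \ne j} \langle u_{v_i}, u_{v_j}\rangle$; since every pair in $K$ is an edge, each inner product is at most the ``angle'' bound, and a short rearrangement forces the vector-coloring value to be at least $\omega$. The main obstacle, such as it is, is simply pinning down the precise quantitative form of the vector-coloring definition (the $1 + \cos^{-1}(\alpha)$ expression) so that Steps 2 and 3 produce exactly the stated inequalities rather than off-by-a-constant variants; once the definitions are fixed, each inequality is a one-line linear-algebra or combinatorial observation. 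Since the excerpt explicitly says ``see e.g.~\cite{GvozdenovicL08},'' an acceptable alternative is to cite that reference for the standard chain and only sketch the constructions above.
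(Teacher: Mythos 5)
The paper itself does not prove this theorem; it is stated as a known fact with a pointer to the literature. Your Steps~1 and~3 are correct and standard: a proper $\chi(G)$-coloring is an $(\chi(G):1)$-fractional coloring, and summing the Gram identity $0\le\|\sum_i \mathbf{v}_{u_i}\|^2 = \omega + \sum_{i\ne j}\mathbf{v}_{u_i}\cdot\mathbf{v}_{u_j}\le \omega - \omega(\omega-1)/(r-1)$ over a clique of size $\omega$ forces $r\ge\omega$.

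Step~2, however, has a genuine gap that is not merely a definitional bookkeeping issue. Your construction $u_v = \frac{1}{\sqrt b}\sum_{i\in S_v} e_i$ yields $\langle u_v,u_w\rangle = 0$ on edges, but the paper's definition requires $\mathbf{v}_u\cdot\mathbf{v}_v\le -1/(r-1)$ with $r>1$, and for $r>1$ the right-hand side is strictly negative. So orthogonal vectors certify no finite vector-coloring value at all, and the inequality $0\le -\frac{1}{a/b-1}$ you write is false whenever $a/b>1$. The standard fix is to center the indicator vectors before normalizing: set $\mathbf{w}_v = \bigl(\chi_{S_v} - \frac{b}{a}\mathbf{1}\bigr)\big/\bigl\|\chi_{S_v} - \frac{b}{a}\mathbf{1}\bigr\|$. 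A short calculation gives $\|\chi_{S_v}-\frac{b}{a}\mathbf{1}\|^2 = b(a-b)/a$ for every vertex, and for adjacent $u,v$ (disjoint $S_u,S_v$) the unnormalized inner product is $-b^2/a$, so $\mathbf{w}_u\cdot\mathbf{w}_v = -\frac{b}{a-b} = -\frac{1}{a/b-1}$, which is exactly the threshold for a vector coloring of value $a/b$. Taking the infimum over fractional colorings then gives $\chi_v(G)\le\chi_f(G)$. With that replacement your argument is complete; as written, Step~2 does not go through.
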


\subsection{The Ordinary Vertex Coloring}\label{sec:proper}

Consider a proper $d$-coloring $c:V\rightarrow\{1,\ldots,d\}$ of the host graph $G$.
Fix a number $t\in \{0,\ldots,d\}$.
Our idea is to define $V_1$ as the union of $t$ color classes, and $V_2$ as the remaining vertices.
Clearly, for some choices of the colors the set $\la(T_G)$ for a solution $T_G$ can be large, and then by Lemma~\ref{lem:polynomial} we need to set the parameter $r$ high, which makes the running time large.
However, if we try {\em all} the possible choices of $t$ colors, in at least one of them the set $\la(T_G)$ will be small enough.
This is stated in the following lemma.

\begin{lemma}
\label{lem:color-bound}
 Let $c$ be a given $d$-coloring of $G$.
 Let $T_G$ be a fixed $(k,l)$-tree in $G$.
 There is a choice of $t$ color classes $c_1, \ldots, c_t$ such that if we put $V_1=\bigcup_{i=1}^tc^{-1}(i)$, then
 \[la(T_G)\leq \left( 1 - \frac{x(d-x)}{d(d-1)}\right) k + \left(1-\frac{x}{d}\right)l,\]
 where $x=\lfloor\frac{d+(l/k)(d-1)}{2}\rceil$. In particular, $\displaystyle{|\la(T_G)| \le \left(1 - \frac{\floor{\frac{d}2} \ceil{\frac{d}2}}{d(d-1)}\right)k + \frac{l}2}$.
\end{lemma}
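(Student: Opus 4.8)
The plan is to mimic the proof of Lemma~\ref{lem:prob}, but instead of splitting the tree edges according to parity of a random edge subset, I will split the \emph{color classes} of $c$ into two groups and count how many vertices of $T_G$, and how many edges of $T_G$, land in the ``bad'' part. First I would set $\gamma=l/k$ and $x=\lfloor\frac{d+\gamma(d-1)}{2}\rceil$, and argue that it suffices to show the bound in expectation over a random choice of $x$ color classes forming $V_1$ (then some concrete choice does at least as well, which is exactly the ``try all choices'' statement in the lemma's preamble). So fix $T_G$, and pick a uniformly random $x$-subset $C\subseteq\{1,\dots,d\}$; put $V_1=\bigcup_{i\in C}c^{-1}(i)$.

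Next I would compute the expectation of $|\la(T_G)|=|L(T_G)|+|I(T_G)\cap V_1|+|\{uv\in E(T_G):u,v\in V_2\}|$ term by term. The leaf count contributes exactly $l$. For the internal-vertex term: each internal vertex has some fixed color, and the probability its color is chosen into $C$ is $x/d$, so by linearity this term contributes $\frac{x}{d}(k-l)$. For the edge term: for an edge $uv$ of $T_G$, since $c$ is a \emph{proper} coloring, $c(u)\neq c(v)$, so the probability that neither color is in $C$ is $\binom{d-2}{x}/\binom{d}{x}=\frac{(d-x)(d-x-1)}{d(d-1)}$; with $k-1$ edges this contributes $\frac{(d-x)(d-x-1)}{d(d-1)}(k-1)$. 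Adding up and simplifying (dropping the harmless $-1$ in $k-1$, and rewriting $\frac{(d-x)(d-x-1)}{d(d-1)}=1-\frac{x(d-x)}{d(d-1)}-\text{(lower order)}$, or more cleanly bounding it by $\frac{(d-x)^2}{d(d-1)}$ style manipulations) should yield
\[
\Expect{|\la(T_G)|}\le \Bigl(1-\frac{x(d-x)}{d(d-1)}\Bigr)k+\Bigl(1-\frac{x}{d}\Bigr)l .
\]
Then I would pick $C$ achieving at most the expectation; this gives the first displayed inequality. For the ``in particular'' clause, I would substitute the worst case $x$ only in the leaf coefficient (take $x\ge d/2$ so $1-x/d\le 1/2$, using $\gamma\ge 0$ to get $x\ge\lceil d/2\rceil$) while noting $x(d-x)$ is minimized at the extreme of the admissible range but is always at least $\lfloor d/2\rfloor\lceil d/2\rceil$ for $x$ in the relevant interval — actually one checks $x(d-x)\ge\lfloor d/2\rfloor\lceil d/2\rceil$ holds for the chosen $x$ because $x$ rounds $\frac{d+\gamma(d-1)}{2}$, which lies in $[d/2,d-1/2]$, and on integers in that range the product $x(d-x)$ is minimized at $x=\lceil d/2\rceil$ (value $\lfloor d/2\rfloor\lceil d/2\rceil$) — giving the coefficient $1-\frac{\lfloor d/2\rfloor\lceil d/2\rceil}{d(d-1)}$ on $k$ and $\frac12$ on $l$.

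The main obstacle I expect is the bookkeeping around the floor/rounding of $x$: I have to verify that the rounded value $x=\lfloor\frac{d+\gamma(d-1)}{2}\rceil$ actually lies in $\{0,\dots,d\}$ (it does, since $0\le\gamma\le 1$ forces the fraction into $[d/2,d-1/2]\subseteq[0,d]$) and, more delicately, that rounding to an integer does not break the bound — since the function $x\mapsto(1-\frac{x(d-x)}{d(d-1)})k+(1-\frac{x}{d})l$ is convex in $x$ for fixed $k,l$ and the unrounded optimum of a related quantity sits near $\frac{d+\gamma(d-1)}{2}$, rounding costs only a lower-order term that is absorbed (or one argues directly that the claimed inequality, being stated with this specific integer $x$, is exactly what the expectation computation yields after plugging in). I would also need to be slightly careful that the edge term uses $k-1$ rather than $k$, which only helps, so replacing it by $k$ is safe. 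Modulo these routine estimates, the proof is a direct linearity-of-expectation argument exploiting properness of $c$ to make the two endpoint colors of each tree edge distinct.
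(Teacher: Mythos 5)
Your derivation of the main bound is essentially the paper's proof: the paper averages $|\la(T_G)|$ over all $\binom{d}{t}$ choices of $t$ color classes and gets $\frac{\binom{d-1}{t-1}k+\binom{d-1}{t}l+\binom{d-2}{t}(k-1)}{\binom{d}{t}}$, which is exactly your linearity-of-expectation computation (the single-vertex term $t/d$, the edge term $\binom{d-2}{t}/\binom{d}{t}=\frac{(d-t)(d-t-1)}{d(d-1)}$, properness guaranteeing the two endpoint colors are distinct); dropping the $-1$ in $k-1$ and simplifying $\frac{t}{d}+\frac{(d-t)(d-t-1)}{d(d-1)}=1-\frac{t(d-t)}{d(d-1)}$ gives the claimed bound with $t=x$, matching the paper.

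However, your argument for the ``in particular'' clause has a genuine error. You claim that $x(d-x)\ge\lfloor d/2\rfloor\lceil d/2\rceil$ because $x(d-x)$ ``is minimized at $x=\lceil d/2\rceil$'' on the integer range lying in $[d/2,\,d-\tfrac12]$. This is backwards: $t\mapsto t(d-t)$ is a downward parabola \emph{maximized} near $t=d/2$, so on $\{\lceil d/2\rceil,\dots,d-1\}$ it is decreasing, and for $x$ near $d-1$ (which happens when $l/k$ is close to $1$) the product $x(d-x)$ is much \emph{smaller} than $\lfloor d/2\rfloor\lceil d/2\rceil$. So the coefficient of $k$ does not termwise dominate, and your route breaks down. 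The fix is simple and is what the paper does: apply the general averaged bound $\left(1-\frac{t(d-t)}{d(d-1)}\right)k+\frac{d-t}{d}l$ once more with the \emph{different} choice $t=\lceil d/2\rceil$, which directly gives $\left(1-\frac{\lfloor d/2\rfloor\lceil d/2\rceil}{d(d-1)}\right)k+\left(1-\frac{\lceil d/2\rceil}{d}\right)l\le\left(1-\frac{\lfloor d/2\rfloor\lceil d/2\rceil}{d(d-1)}\right)k+\frac{l}{2}$. Alternatively, since $x$ is (by your own convexity remark) the integer minimizer of $B(t)=\left(1-\frac{t(d-t)}{d(d-1)}\right)k+\left(1-\frac td\right)l$, you get $B(x)\le B(\lceil d/2\rceil)\le$ the ``in particular'' bound without any coefficientwise comparison; but the coefficientwise claim you wrote is simply false.
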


\begin{proof}
 For $i=1,\ldots,d$, let $k_i$ denote the number of nodes of $T_G$ colored by $i$.
 Similarly, let $l_i$ denote the number of leaves of $T_G$ colored by $i$.
 Finally, for $i,j = 1,\ldots,d$, let $k_{i,j}$ denote the number of edges in $T_G$ with one endpoint colored with $i$ and the other colored with $j$.
 Note that

 \begin{eqnarray}
  \sum_{i=1}^d k_i = k\\
  \sum_{i=1}^d l_i = l\\
  \sum_{1\le i < j\le d} k_{i,j} = k-1.
 \end{eqnarray}

 Fix $t=0,\ldots,d$.
 It follows that the average size of $|\la(T_G)|$, over all possible choices of the set $S$ of $t$ colors, equals

 \[
 \begin{split}
    & \frac{1}{{d \choose t}} \sum_{S\in {[d]\choose t}} \left( \sum_{i\in S}k_i + \sum_{i\not\in S}l_i + \sum_{\{i,j\}\cap S = \emptyset}k_{i,j} \right) = \\
    & \frac{{d-1 \choose t - 1} k + {d-1 \choose t}l + {d-2 \choose t} (k-1)}{{d \choose t}} \le \\
    & \left( 1 - \frac{t(d-t)}{d(d-1)}\right) k + \frac{d-t}{d}l.
   \end{split}
\]
For $t=x$ and $t=\ceil{d/2}$, we get the claimed bounds.\footnote{We chose $t=x$ in order to minimize the expression $\left( 1 - \frac{t(d-t)}{d(d-1)}\right) k + \frac{d-t}{d}l$. The choise $t=\ceil{d/2}$ gives a simpler bound, but, if $l$ is large,  it is greater than the one with $x$.}
\end{proof}

By combining Lemma~\ref{lem:single-eval-algorithm} with Lemma~\ref{lem:color-bound} we get the following theorem.

\begin{theorem}
\label{th:algorithm-color}
 Assume we are given a proper $d$-coloring of the host graph, for some fixed $d$.
 Then, there is a Monte-Carlo polynomial space algorithm for finding a $(k,l)$-tree running in time $\displaystyle{O^*\left(2^{\left( 1 - \frac{x(d-x)}{d(d-1)}\right) k + \left(1-\frac{x}{d}\right)l}\right)}$, where $x=\lfloor\frac{d+(l/k)(d-1)}{2}\rceil$. In particular, the running time can be bounded by $\displaystyle{O^*\left(2^{\left(1 - \frac{\floor{\frac{d}2} \ceil{\frac{d}2}}{d(d-1)}\right)k + \frac{l}2}\right)}$. The algorithm never reports a false positive, and reports false negatives with constant probability.
\end{theorem}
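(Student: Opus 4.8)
The plan is to reduce to Lemma~\ref{lem:single-eval-algorithm} by letting the given proper $d$-colouring $c$ guide the choice of the bipartition $V=V_1\cup V_2$, and to compensate for not knowing the hidden solution by trying \emph{all} ways of picking which colour classes go into $V_1$.

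Concretely, I would first fix $t=x=\lfloor\frac{d+(l/k)(d-1)}{2}\rceil$ (for the second, simpler bound in the statement I would instead take $t=\ceil{d/2}$) and set $r=\bigl\lceil\bigl(1-\frac{x(d-x)}{d(d-1)}\bigr)k+\bigl(1-\frac{x}{d}\bigr)l\bigr\rceil$. Here one should check that $x$ is an admissible value of $t$, i.e.\ $0\le x\le d$; this follows from $0\le l\le k$. Then, for each $t$-element set $S\subseteq\{1,\ldots,d\}$ of colours, I would run the algorithm of Lemma~\ref{lem:single-eval-algorithm} on the bipartition $V_1=\bigcup_{i\in S}c^{-1}(i)$, $V_2=V\setminus V_1$ with this value of $r$, and output \textsc{yes} iff at least one of the ${d\choose t}$ runs outputs \textsc{yes}.

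For correctness: if $G$ contains no $(k,l)$-tree then, by Lemma~\ref{lem:single-eval-algorithm}, every run outputs \textsc{no} irrespective of the bipartition, so the algorithm never produces a false positive. If $G$ contains a $(k,l)$-tree $T_G$, then Lemma~\ref{lem:color-bound}, applied with this $t$, asserts that for at least one choice $S$ of $t$ colour classes the resulting $V_1$ satisfies $|\la(T_G)|\le r$; for that $S$ the corresponding run outputs \textsc{yes} with probability at least $\tfrac12$, hence so does the whole procedure. The ``in particular'' bound is obtained the same way using $t=\ceil{d/2}$ and the second inequality of Lemma~\ref{lem:color-bound}. For the running time, I would note that ${d\choose t}\le 2^d$ is a constant since $d$ is fixed, each run costs $O^*(2^r)$ time and polynomial space by Lemma~\ref{lem:single-eval-algorithm}, and replacing $r$ by its ceiling only contributes a polynomial factor; multiplying out gives the stated $O^*\!\bigl(2^{(1-x(d-x)/(d(d-1)))k+(1-x/d)l}\bigr)$ bound with polynomial space.

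I do not expect a real obstacle, since the substance is carried entirely by the two cited lemmas; the only things that need a moment's attention are that $x\in\{0,\ldots,d\}$ so that the enumeration over colour subsets actually reaches the good choice, that rounding $r$ up is harmless, and that a single sweep over all ${d\choose t}$ subsets already yields constant success probability, so no explicit amplification loop is required (though one could of course add one to push the error below any prescribed constant).
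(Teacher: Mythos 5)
Your proposal is correct and follows exactly the same route as the paper: combine Lemma~\ref{lem:color-bound} with Lemma~\ref{lem:single-eval-algorithm} by trying all $\binom{d}{t}$ choices of $t$ colour classes for $V_1$ with $t=x$ (resp.\ $t=\ceil{d/2}$) and $r$ set to the bound of Lemma~\ref{lem:color-bound}. The small sanity checks you flag (that $x\in\{0,\ldots,d\}$, that rounding $r$ up is harmless, and that $\binom{d}{t}\le 2^d$ is absorbed into the $O^*$ for fixed $d$) are exactly the details left implicit by the paper's one-line proof.
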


\subsection{Fractional Vertex Coloring}\label{sec:fractional}

There is a considerable gap between the running time of the $(k,l)$-tree algorithm for 2-colorable graphs and 3-colorable graphs: $O^*(1.415^{k+l})$ vs $O^*(1.588^k1.415^l)$.
It is natural to ask whether for graphs with chromatic number 3, which are in some sense {\em almost 2-colorable}, one can do better than just run the algorithm for 3-colorable graphs.
To address this and related questions in this section, we consider a relaxed concept of graph coloring, called the fractional chromatic number.
We say that a graph $G$ has an $(a:b)$-coloring if, to each vertex of $G$, one can assign a $b$-element subset of $\{1, 2, 3, \dots , a\}$ in such a way that adjacent vertices are assigned disjoint subsets.
The {\em fractional chromatic number} is defined as $\chi_f (G) = \inf \{\frac{a}{b}\ |\ \text{$G$ can be $(a:b)$-colored}\}$.
To avoid confusion let us clarify that for $\frac{a}{b}=\frac{c}{d}$, the fact that $G$ is $(a:b)$-colorable does not imply that it is $(c:d)$-colorable (for an example consider Kneser graph $K_{6:2}$ which is $(6:2)$-colorable but not 3-colorable, see e.g. the monograph of Scheinerman and Ullman~\cite{ScheinermanUllman}). In particular, not every $(a:b)$-colorable graph is $\ceil{\frac{a}{b}}$-colorable.
Indeed, the gap between the fractional chromatic number and the chromatic number can be arbitrarily large.

Now we show an analog of Lemma~\ref{lem:color-bound}.

\begin{lemma}
\label{lem:fract-color-bound}
 Let $c:V(G)\rightarrow {[a] \choose b}$ be a given $(a:b)$-coloring of $G$.
 Let $T_G$ be a fixed $(k,l)$-tree in $G$.
 There is a set of $t$ colors $S\in{[a]\choose t}$ such that if we put $V_1=\{v\in V(G)\ |\ c(v)\cap S = \emptyset\}$ and $V_2=V(G)\setminus V_1$, then
 \[|\la(T_G)| \le \left(1 - \frac{{{a-b}\choose t}-{{a-2b}\choose t}}{{a \choose t}}\right)k + \left(1 - \frac{{{a-b}\choose t}}{{a \choose t}}\right)l.\]
 In particular, there is a color $r\in[a]$ such that if we put $V_1=\{v\in V(G)\ |\ r\not\in c(v)\}$ and $V_2=V(G)\setminus V_1$, then
 \[|\la(T_G)| \le \left(1 - \frac{b}{a}\right)k + \frac{b}{a}l.\]
\end{lemma}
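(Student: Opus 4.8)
The plan is to mimic the averaging argument from the proof of Lemma~\ref{lem:color-bound}, but now over choices of a size-$t$ subset $S$ of the palette $[a]$ rather than over subsets of color classes. Fix the $(a:b)$-coloring $c$ and the $(k,l)$-tree $T_G$. For a subset $S\in\binom{[a]}{t}$ we place a vertex $v$ into $V_1$ exactly when $c(v)\cap S=\emptyset$, so $V_2$ consists of the vertices whose $b$-set meets $S$. The three quantities that determine $|\la(T_G)|=|L(T_G)| + |I(T_G)\cap V_1| + |\{uv\in E(T_G):u,v\in V_2\}|$ are then controlled as follows: a leaf always contributes $1$; an internal node $v$ contributes to $I(T_G)\cap V_1$ precisely when $c(v)\cap S=\emptyset$; and an edge $uv$ contributes to the last set precisely when both $c(u)\cap S\ne\emptyset$ and $c(v)\cap S\ne\emptyset$.

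Next I would compute the expectation of $|\la(T_G)|$ when $S$ is chosen uniformly at random from $\binom{[a]}{t}$. By linearity of expectation this splits into a sum over the $l$ leaves, the $k-l$ internal nodes, and the $k-1$ edges. For a fixed internal node $v$, $\Pr{c(v)\cap S=\emptyset}=\binom{a-b}{t}/\binom{a}{t}$, since we must choose all $t$ elements of $S$ from the $a-b$ colors outside $c(v)$. For a fixed edge $uv$, since $c(u)$ and $c(v)$ are disjoint $b$-sets, $\Pr{c(u)\cap S=\emptyset}=\Pr{c(v)\cap S=\emptyset}=\binom{a-b}{t}/\binom{a}{t}$ and $\Pr{c(u)\cap S=\emptyset \text{ and } c(v)\cap S=\emptyset}=\binom{a-2b}{t}/\binom{a}{t}$; by inclusion-exclusion the probability that neither is empty, i.e.\ the edge lands in $V_2\times V_2$, equals $1-2\binom{a-b}{t}/\binom{a}{t}+\binom{a-2b}{t}/\binom{a}{t}$. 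Summing, $\Expect{|\la(T_G)|} = l + (k-l)\binom{a-b}{t}/\binom{a}{t} + (k-1)\bigl(1-2\binom{a-b}{t}/\binom{a}{t}+\binom{a-2b}{t}/\binom{a}{t}\bigr)$, and a routine rearrangement (dropping the harmless $-1$ and collecting the coefficients of $k$ and $l$) bounds this by $\bigl(1-(\binom{a-b}{t}-\binom{a-2b}{t})/\binom{a}{t}\bigr)k + \bigl(1-\binom{a-b}{t}/\binom{a}{t}\bigr)l$. Since the average over all $S$ is at most this value, some particular $S\in\binom{[a]}{t}$ achieves it, which gives the first displayed bound.

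For the ``in particular'' statement I would specialize to $t=1$, so $S=\{r\}$ for a single color $r\in[a]$, and $V_1=\{v : r\notin c(v)\}$. Then $\binom{a-b}{1}/\binom{a}{1}=(a-b)/a=1-b/a$ and $\binom{a-2b}{1}/\binom{a}{1}=(a-2b)/a=1-2b/a$, so the coefficient of $k$ becomes $1-((a-b)-(a-2b))/a = 1-b/a$ and the coefficient of $l$ becomes $1-(a-b)/a = b/a$, yielding $|\la(T_G)|\le (1-b/a)k + (b/a)l$ as claimed. One small technical point worth flagging: the binomial coefficients $\binom{a-2b}{t}$ should be read as $0$ when $a-2b<t$ (in particular when $a<2b$), which is consistent with the probability interpretation and does not affect the argument. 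I do not expect a genuine obstacle here — the only mildly delicate part is keeping the inclusion-exclusion for the edge term straight and verifying that the rearrangement of the averaged expression is valid as an upper bound (it is, because replacing $k-1$ by $k$ only increases a nonnegative coefficient and then the constant can be dropped); everything else parallels Lemma~\ref{lem:color-bound} verbatim.
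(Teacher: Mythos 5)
Your proposal is correct and follows essentially the same route as the paper: average $|\la(T_G)|$ over a uniformly random $t$-subset $S$ of the palette and invoke the probabilistic method. The only cosmetic difference is that the paper first rewrites $|\la(T_G)|$ via the tree identity $|E^S_{22}|=|V_2^S|-|E^S_{12}|-[\treeroot(T_G)\in V_2^S]$ and averages $k-|E^S_{12}|+l-|L^S_1|$, whereas you average the original decomposition directly via inclusion--exclusion on the edge term; both simplify (after dropping the harmless $-1$, valid since $1-2\binom{a-b}{t}/\binom{a}{t}+\binom{a-2b}{t}/\binom{a}{t}\ge 0$) to the same stated bound, and the $t=1$ specialization is handled identically.
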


\begin{proof}
 For a given set of $t$ colors $S$, let $V_1^S=\{v\in V(G)\ |\ c(v)\cap S = \emptyset\}$ and $V_2^S=V\setminus V_1^S$.
 For two indices $i,j\in\{1,2\}$ define the set
 \[E^S_{ij} = \{uv \in E(T_G)\ |\ \text{$u$ is the parent of $v$, $u\in V_i^S$ and $v\in V_j^S$}\}.\]
 Also, let $I^S_{i}$ be the set of internal vertices of $T_G$ in $V_i$.
 Similarly, let $L^S_{i}$ be the set of leaves of $T_G$ in $V_i$.
 Then, by the definition of $\la(T_G)$,
\[
\begin{split}
|\la(T_G)| =\ & l + |I^S_1| + |E^S_{22}| = \\
                & l + |I^S_1| + |V_2^S| - |E^S_{12}| - [\treeroot(T_G)\in V_2^S] \le \\
                & l + k -|L^S_1| - |E^S_{12}|.
\end{split}
\]
 For $I\in {[a]\choose b}$, let $k_I$ denote the number of nodes of $T_G$ colored by set $I$.
 Similarly, let $l_I$ denote the number of leaves of $T_G$ colored by $I$.
 Finally, for $I,J\in{[a]\choose b}$, let $k_{I,J}$ denote the number of edges in $T_G$ with the parent colored with $I$ and the child colored with $J$.
 Note that

 \begin{eqnarray}
  \sum_{I\in{[a]\choose b}} k_I = k\\
  \sum_{I\in{[a]\choose b}} l_I = l\\
  \sum_{I,J\in{[a]\choose b}} k_{I,J} = k-1.
 \end{eqnarray}

 It follows that the average size of $|\la(T_G)|$, over all possible choices of the set $S$ of $t$ colors, equals

 \[
 \begin{split}
    & \frac{1}{{a \choose t}} \sum_{S\in {[a]\choose t}} \left(k - |E^S_{12}| + l - |L^S_1|\right) = \\
    & \frac{1}{{a \choose t}} \sum_{S\in {[a]\choose t}} \Big(k - \sum_{\substack{I,J\in{[a]\choose b}\\I\cap S = \emptyset\\ J\cap S \ne\emptyset}}k_{I,J} + l - \sum_{\substack{I\in{[a]\choose b}\\I\cap S = \emptyset}}l_I \Big) = \\
    & \frac{1}{{a \choose t}} \sum_{S\in {[a]\choose t}} \Big(k - \sum_{\substack{I,J\in{[a]\choose b}\\I\cap S = \emptyset}}k_{I,J} + \sum_{\substack{I,J\in{[a]\choose b}\\I\cap S = J\cap S=\emptyset}}k_{I,J} + l - \sum_{\substack{I\in{[a]\choose b}\\I\cap S = \emptyset}}l_I \Big) = \\
    & \left(1 - \frac{{{a-b}\choose t}-{{a-2b}\choose t}}{{a \choose t}}\right)k + \left(1 - \frac{{{a-b}\choose t}}{{a \choose t}}\right)l.
   \end{split}
\]
This implies the first part of the claim. The second part follows after putting $t=1$.
\end{proof}

By combining Lemma~\ref{lem:single-eval-algorithm} with Lemma~\ref{lem:fract-color-bound} we get the following theorem.

\begin{theorem}
\label{th:algorithm-fract-color}
 Assume we are given a proper $(a:b)$-coloring of the host graph, for some fixed $a,b$.
 Then, for any $t=1,\ldots,b$ there is a Monte-Carlo polynomial space algorithm for finding a $(k,l)$-tree running in time
 \[O^*\Bigg(2^{\Big(1 - \frac{{{a-b}\choose t}-{{a-2b}\choose t}}{{a \choose t}}\Big)k + \Big(1 - \frac{{{a-b}\choose t}}{{a \choose t}}\Big)l}\Bigg)\]
 which never reports a false positive, and reports false negatives with constant probability.
\end{theorem}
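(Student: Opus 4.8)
The plan is to combine the coloring insight from Lemma~\ref{lem:fract-color-bound} with the single-evaluation routine of Lemma~\ref{lem:single-eval-algorithm}, exactly in the same spirit as Theorem~\ref{th:algorithm-color} follows from Lemma~\ref{lem:color-bound}. Fix the given $(a:b)$-coloring $c:V(G)\rightarrow\binom{[a]}{b}$ and the target parameters $k,l$. First I would set $r=\big(1 - \frac{\binom{a-b}{t}-\binom{a-2b}{t}}{\binom{a}{t}}\big)k + \big(1 - \frac{\binom{a-b}{t}}{\binom{a}{t}}\big)l$ (rounding up to the nearest integer), so that $r$ is exactly the bound from Lemma~\ref{lem:fract-color-bound}. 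Since the coloring is fixed and $a,b$ are constants, $r=\Theta(k+l)$ and $2^r$ is the desired running time.

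The algorithm then enumerates all $t$-element subsets $S\in\binom{[a]}{t}$ (there are $\binom{a}{t}=O(1)$ of them), and for each $S$ builds the bipartition $V_1^S=\{v:c(v)\cap S=\emptyset\}$, $V_2^S=V\setminus V_1^S$ and runs the algorithm of Lemma~\ref{lem:single-eval-algorithm} on $(G,V_1^S,V_2^S,k,l,r)$ a constant number of times (say, enough times to boost the success probability). It answers YES iff some invocation answers YES. For correctness: if $G$ has no $(k,l)$-tree, then by Lemma~\ref{lem:single-eval-algorithm} every invocation answers NO, so the algorithm is always correct on no-instances. If $G$ contains a $(k,l)$-tree $T_G$, then Lemma~\ref{lem:fract-color-bound} guarantees there is at least one subset $S$ for which $|\la(T_G)|\le r$ with respect to the bipartition $(V_1^S,V_2^S)$; for that $S$, Lemma~\ref{lem:single-eval-algorithm} answers YES with probability at least $\tfrac12$ per invocation, hence with constant probability overall. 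Repeating $O(1)$ times makes the false-negative probability an arbitrarily small constant, and no false positive is ever reported. The running time is $O(1)\cdot O^*(2^r)=O^*(2^r)$, matching the claimed bound, and the space stays polynomial because Lemma~\ref{lem:single-eval-algorithm} uses polynomial space and we only iterate over a constant-size family of subsets.

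There is really no serious obstacle here: the only point requiring a word of care is that Lemma~\ref{lem:fract-color-bound} is stated as an averaging result over $S\in\binom{[a]}{t}$, so one must invoke it to conclude that the minimum over $S$ is at most the average, which is precisely the value $r$ we picked — this is why trying \emph{all} $S$ is necessary rather than a single clever choice. The in-particular clause (the $t=1$ special case with bound $(1-\tfrac{b}{a})k+\tfrac{b}{a}l$) is just the specialization already recorded in Lemma~\ref{lem:fract-color-bound}, so it needs no separate argument. I would therefore keep the proof to a few lines, referencing Lemmas~\ref{lem:fract-color-bound} and~\ref{lem:single-eval-algorithm} and noting the constant blow-up from iterating over $\binom{[a]}{t}$.
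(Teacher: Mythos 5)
Your proposal is correct and matches the paper's approach exactly: the paper does not spell out a separate proof but simply states that the theorem follows by combining Lemma~\ref{lem:single-eval-algorithm} with Lemma~\ref{lem:fract-color-bound}, which is precisely the combination you describe (set $r$ to the bound from Lemma~\ref{lem:fract-color-bound}, try all $\binom{a}{t}=O(1)$ choices of $S$, and run the single-evaluation algorithm for each). The only cosmetic point is that the theorem statement itself does not contain the $t=1$ ``in particular'' clause — that specialization appears in the surrounding text — but this does not affect correctness.
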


Note that $t=1$ gives time of $O^*(2^{\left(1 - \frac{b}{a}\right)k + \frac{b}{a}l})$.
For fixed values of $a$ and $b$, it is easy to find optimal value of $t$ in Theorem~\ref{th:algorithm-fract-color}.

The following equivalence is well-known (see e.g.~\cite{kral}).

\begin{proposition}[folklore]
\label{prop:fractional-sampling}
Graph $G$ has fractional chromatic number $\chi_f (G) \le r$ iff there exists a probability distribution $\pi$ on the independent sets of $G$
such that for each vertex $v$, the probability that $v$ is contained in a random independent set (with respect to $\pi$) is at least $1/r$.
\end{proposition}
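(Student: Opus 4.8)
The plan is to pass through the equivalent linear-programming (clique-cover) description of $\chi_f$, after which both directions become a one-line rescaling. Recall the standard fact (see e.g.~\cite{ScheinermanUllman,kral}) that $\chi_f(G)$ equals the optimum value of the linear program $\min \sum_{I} x_I$ subject to $\sum_{I \ni v} x_I \ge 1$ for every $v \in V(G)$ and $x_I \ge 0$, where the sums range over the (finitely many) independent sets $I$ of $G$; since the program is finite, the optimum is attained, at a rational point corresponding to an optimal $(a:b)$-coloring. I would open the proof by recalling this equivalence, because the definition of $\chi_f$ used in the paper is phrased in terms of $(a:b)$-colorings whereas the argument is cleanest on the LP side; the identification of the two descriptions is the only nontrivial ingredient, and it is classical.

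For the forward direction, assume $\chi_f(G) \le r$ and fix an optimal solution $\{x_I\}$ of the above LP, so that $\lambda := \sum_I x_I = \chi_f(G) \le r$. Put $\pi(I) := x_I/\lambda$; this is a probability distribution on independent sets. If $S$ denotes a random independent set distributed according to $\pi$, then for every vertex $v$,
\[
\Pr{v \in S} \;=\; \sum_{I \ni v} \pi(I) \;=\; \frac1\lambda \sum_{I \ni v} x_I \;\ge\; \frac1\lambda \;\ge\; \frac1r ,
\]
using the covering constraint and $\lambda \le r$. Concretely, one may instead take an optimal $(a:b)$-coloring $c:V \to \binom{[a]}{b}$ with $a/b = \chi_f(G)$, let $C_i = \{v : i \in c(v)\}$ be the $i$-th color class (which is an independent set), and let $\pi$ be the law of $C_i$ for $i$ uniform in $[a]$; then $\Pr{v \in C_i} = |c(v)|/a = b/a \ge 1/r$.

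For the converse, let $\pi$ be a probability distribution on independent sets such that $\Pr{v \in S} \ge 1/r$ for every vertex $v$, where $S \sim \pi$; as there are only finitely many independent sets we may assume $\pi$ has finite support. Set $x_I := r\,\pi(I)$. Then $\sum_I x_I = r$, and for each $v$ we have $\sum_{I \ni v} x_I = r \cdot \Pr{v \in S} \ge 1$, so $\{x_I\}$ is feasible for the LP with objective value $r$; hence $\chi_f(G)$, being the optimum of that LP, is at most $r$. (If one prefers to stay entirely within the $(a:b)$-coloring formulation, approximate each rational $\pi(I)$ from below by fractions over a common denominator $b$ to manufacture, for every $\epsilon>0$, an $(a:b)$-coloring with $a/b \le r+\epsilon$, and let $\epsilon\to 0$.) The main point requiring care throughout is the direction of the covering inequality ($\ge 1$ rather than $=1$) when passing between LP solutions, colorings, and distributions, together with the finiteness of the family of independent sets, which is exactly what lets the infimum in the definition of $\chi_f$ be realized as a minimum.
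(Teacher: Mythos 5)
The paper does not prove this proposition; it simply records it as folklore and cites~\cite{kral}, so there is no internal argument to compare yours against. Your proof via the LP (fractional cover by independent sets) characterization of $\chi_f$ is correct and is in fact the standard argument: normalize an optimal LP solution by its value to obtain the desired distribution in one direction, and scale a distribution by $r$ to produce a feasible LP solution of value $r$ in the other. One small wording slip worth fixing in your parenthetical remark at the end of the converse: the weights $\pi(I)$ are real numbers, not necessarily rational, so you should write ``approximate each $\pi(I)$ from below by rationals with a common denominator'' rather than ``approximate each rational $\pi(I)$''; with that change, the detour back to the $(a:b)$-coloring formulation is also sound.
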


Proposition~\ref{prop:fractional-sampling} is interesting because an upper bound $B$ on the fractional chromatic number can be proven by providing a polynomial-time randomized algorithm which outputs an independent set $I$ such that, for every vertex $v$, the probability that $v\in I$ is at least $1/B$. This strategy was used e.g.\ by Ferguson et al.~\cite{kral}.
It is not hard to show that having such an algorithm one can build in randomized polynomial time an $(a:b)$-coloring of $G$ such that $\frac{a}{b} = B+\epsilon$ for arbitrary fixed $\epsilon>0$.
However, for our applications in $(k,l)$-tree this is not needed as observed in the following lemma (the main observation behind this lemma traces back to Bj\"orklund\cite{bjorklund-hamilton}, Theorem 2).

\begin{theorem}
\label{thm:fractional-sampling}
 Assume there is a polynomial-time randomized algorithm $A$ which outputs an independent set $I$ such that, for every vertex $v$ of $G$, the probability that $v\in I$ is at least $p$, for some $p\in(0,1)$.
 Then, there is a Monte-Carlo polynomial space algorithm for finding a $(k,l)$-tree running in time
 \[O^*(2^{\left(1 - p\right)k + pl})\]
 which never reports a false positive, and reports false negatives with constant probability.
\end{theorem}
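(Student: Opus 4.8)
The plan is to feed the algorithm of Lemma~\ref{lem:single-eval-algorithm} a bipartition built from the output of $A$. In one \emph{round} I would run $A$ once to obtain an independent set $I$, put $V_2=I$ and $V_1=V(G)\setminus I$, fix the threshold $r=\ceil{(1-p)k+pl}$, and invoke the algorithm of Lemma~\ref{lem:single-eval-algorithm} with this bipartition and this $r$.

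The reason for taking $V_2$ to be the independent set is that it kills the edge contribution to $\la$: for any $(k,l)$-tree $T_G\subseteq G$, no edge of $T_G$ has both endpoints in $V_2=I$, so $\la(T_G)=L(T_G)\cup(I(T_G)\cap V_1)$ and hence $|\la(T_G)|=l+|I(T_G)\cap V_1|$. Since every vertex lies in $I$ with probability at least $p$, every internal node of $T_G$ lies in $V_1$ with probability at most $1-p$, so by linearity of expectation $\Expect{|\la(T_G)|}\le l+(1-p)(k-l)=(1-p)k+pl\le r$. Applying Markov's inequality to the nonnegative random variable $|\la(T_G)|$ then shows $\Pr{|\la(T_G)|\le r}$ is $\Omega(1/(k+l))$; call the bipartition \emph{nice} when this event happens.

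To finish I would amplify in the usual way: repeat the round $\Theta(k+l)$ times with fresh randomness each time, and answer YES iff some round does. When $G$ has no $(k,l)$-tree, Lemma~\ref{lem:single-eval-algorithm} never reports a false positive, so neither does the combined algorithm. When $G$ has a $(k,l)$-tree $T_G$, each round produces a nice bipartition with probability $\Omega(1/(k+l))$ and, conditioned on that, Lemma~\ref{lem:single-eval-algorithm} says YES with probability at least $1/2$; thus a single round succeeds with probability $\Omega(1/(k+l))$ and $\Theta(k+l)$ independent rounds succeed with constant probability. Each round costs one (polynomial-time) call to $A$ plus one evaluation of cost $O^*(2^r)=O^*(2^{(1-p)k+pl})$ and polynomial space, and there are polynomially many rounds, giving the claimed running time and polynomial space bound.

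I do not anticipate a real obstacle. The only points that require care are (i) choosing $V_2=I$ rather than $V_1=I$, so that the independence of $I$ is actually exploited to annihilate the $\{uv\in E(T):u,v\in V_2\}$ term of $\la$, and (ii) noting that only the marginal guarantees of $A$ are used --- the linearity-of-expectation and Markov argument needs no independence among the events $\{v\in I\}$, which is essential since $A$ offers none. The rounding $r=\ceil{(1-p)k+pl}$ costs only a constant factor $2$, which is absorbed by $O^*$.
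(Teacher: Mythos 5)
Your proof is correct and follows essentially the same route as the paper's (sketch) proof: set $V_2$ to be the independent set so the $\{uv \in E(T): u,v \in V_2\}$ term of $\la$ vanishes, bound $\Expect{|\la(T_G)|}$ by $(1-p)k + pl$ via linearity of expectation, pick $r$ at the ceiling of this bound, invoke Markov to get a $\Omega(1/\mathrm{poly})$ probability that the bipartition is nice, and amplify by polynomially many repetitions. The only cosmetic differences are that the paper sets $r = \ceil{(1-p)k+pl}+1$ rather than your $\ceil{(1-p)k+pl}$ (both give $O^*$-equivalent running times and an $\Omega(1/\mathrm{poly})$ success probability per round), and your expectation computation, written as $l + (1-p)(k-l)$, is actually slightly cleaner than the paper's decomposition since the hypothesis only gives a one-sided bound $\Pr{v\in I}\ge p$.
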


\begin{proof}(Sketch.)
Put $r=\ceil{\left(1 - p\right)k + pl}+1$.
Let $T_G$ be a $(k,l)$-tree in $G$.
Run the algorithm $A$ and let $I$ be the resulting independent set.
Put $V_1=V\setminus I$ and $V_2=I$.
It is easy to see that $\Expect{|\la(T_G)|} = |V(T_G)\cap V_1| + |L(T_G)\cap V_2| = (1-p)k + pl \le r-1$.
By Markov's inequality, $\Pr{|\la(T_G)| \le r} = \Omega(\frac{1}{r}) = \Omega(\frac{1}{k})$.
We run the algorithm from Lemma~\ref{lem:single-eval-algorithm} and it finds a $(k,l)$-tree with probability $\Omega(\frac{1}{k})$.
By repeating the whole procedure $O(k)$ times, we can increase the success probability to a positive constant.
\end{proof}

%

\subsection{Vector Coloring}\label{sec:vector}

A \emph{vector coloring} of a graph $G$ on $n$ vertices of value $r>1$ is an assignment of unit length vectors $\mathbf{v}_u$ in $\mathbb{R}^n$ for each vertex $u$ so that for every edge $uv$, $\mathbf{v}_u \cdot \mathbf{v}_v \leq -1/(r-1)$.  The \emph{vector chromatic number} $\chi_v(G)$ is the smallest real $r$ admitting a vector coloring.

Once we have a vector coloring, we can use a random hyperplane through the origin with normal $\mathbf{h}$ chosen uniformly over the surface of the unit hypersphere to divide the vertices in $V_1$ and $V_2$: For vertex $u$, we put $u$ in $V_1$ if $\mathbf{v}_u \cdot \mathbf{h} \geq 0$; otherwise, we put it in $V_2$.

By the linearity of expectation, we can upper bound the expected number of labels as the expected number of vertices put in $V_1$, which is half of them, plus an upper bound of the expected number of edges in the tree that have both endpoints in $V_2$, which depend on the angle between these two vertices' vectors. The vector coloring value gives us a lower bound on the size of this angle. Paraphrasing Lemma 3.2 in Goemans and Williamson~\cite{GW}, we have
\begin{lemma}
For a random hyperplane normal $\mathbf{h}$ and two vectors $\mathbf{v}_u$ and $\mathbf{v}_v$, we have
\[\operatorname{Pr}(\operatorname{sign}(\mathbf{v}_u\cdot \mathbf{h})\neq \operatorname{sign}(\mathbf{v}_v\cdot \mathbf{h})) = \frac{1}{\pi}\operatorname{arccos}(\mathbf{v}_u\cdot \mathbf{v}_v).
\]
\end{lemma}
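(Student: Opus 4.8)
The plan is to reduce the event to a two-dimensional problem and then compute a simple arc-length ratio. First I would dispose of the degenerate cases: if $\mathbf{v}_u=\pm\mathbf{v}_v$ the claimed probability is $0$ or $1$ and the formula holds trivially, so assume $\mathbf{v}_u$ and $\mathbf{v}_v$ are linearly independent and let $\Pi$ be the $2$-dimensional plane they span. The key observation is that the event $\{\operatorname{sign}(\mathbf{v}_u\cdot\mathbf{h})\neq\operatorname{sign}(\mathbf{v}_v\cdot\mathbf{h})\}$ depends on $\mathbf{h}$ only through its orthogonal projection $\mathbf{h}_\Pi$ onto $\Pi$, since $\mathbf{v}_u\cdot\mathbf{h}=\mathbf{v}_u\cdot\mathbf{h}_\Pi$ and likewise for $\mathbf{v}_v$ (the component of $\mathbf{h}$ orthogonal to $\Pi$ is orthogonal to both $\mathbf{v}_u$ and $\mathbf{v}_v$). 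Moreover, on the event $\mathbf{h}_\Pi\neq 0$ (which happens with probability $1$), whether the signs differ depends only on the \emph{direction} of $\mathbf{h}_\Pi$, not its length.

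Next I would argue that the direction of $\mathbf{h}_\Pi$ is uniformly distributed on the unit circle of $\Pi$. This is a consequence of the rotational invariance of the uniform distribution on the unit sphere in $\mathbb{R}^n$: for any rotation $R$ fixing the orthogonal complement of $\Pi$ and acting as an arbitrary planar rotation on $\Pi$, the vector $R\mathbf{h}$ has the same distribution as $\mathbf{h}$, hence $(R\mathbf{h})_\Pi=R(\mathbf{h}_\Pi)$ has the same distribution as $\mathbf{h}_\Pi$; thus the law of the direction of $\mathbf{h}_\Pi$ is invariant under all planar rotations, and therefore uniform on the circle. (Equivalently, one may take $\mathbf{h}$ to be a normalized vector of i.i.d.\ standard Gaussians, whose projection onto any two orthonormal directions is again a pair of i.i.d.\ Gaussians, giving a rotationally symmetric $\mathbf{h}_\Pi$.)

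It then remains a planar computation. Let $\theta\in(0,\pi)$ be the angle between $\mathbf{v}_u$ and $\mathbf{v}_v$; since both are unit vectors, $\cos\theta=\mathbf{v}_u\cdot\mathbf{v}_v$, i.e.\ $\theta=\arccos(\mathbf{v}_u\cdot\mathbf{v}_v)$. Fixing a unit normal direction $\mathbf{g}$ in $\Pi$, the hyperplane with normal $\mathbf{g}$ separates $\mathbf{v}_u$ from $\mathbf{v}_v$ exactly when $\mathbf{g}$ lies strictly between the directions orthogonal to $\mathbf{v}_u$ and orthogonal to $\mathbf{v}_v$; these two ``separating'' regions form two antipodal arcs of the circle, each of angular width $\theta$, for a total measure $2\theta$ out of $2\pi$. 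Since the direction of $\mathbf{h}_\Pi$ is uniform, the probability of separation equals $2\theta/(2\pi)=\theta/\pi=\tfrac1\pi\arccos(\mathbf{v}_u\cdot\mathbf{v}_v)$, as claimed.

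I do not expect a serious obstacle here; the only point requiring a little care is the reduction to the plane together with the uniformity of $\mathbf{h}_\Pi$, which rests on the rotational symmetry of the sphere, and the handling of the measure-zero event $\mathbf{h}_\Pi=0$ (on which $\operatorname{sign}$ of both inner products is $0$ and the formula's boundary case is irrelevant).
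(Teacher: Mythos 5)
Your proof is correct, and it is essentially the standard argument from Goemans and Williamson; the paper itself does not prove this lemma but simply cites it as Lemma 3.2 of \cite{GW}, so there is nothing to compare against within the paper. The reduction to the plane spanned by $\mathbf{v}_u,\mathbf{v}_v$, the rotational-invariance argument for uniformity of the projected direction, and the arc-length count of $2\theta/2\pi$ are exactly the steps in the original reference, and your handling of the degenerate and measure-zero cases is fine.
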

Hence we have
\[
\mathbf{E}(|\la(T_G)|)\leq \frac{k+l}{2}+\left(1-\frac{\operatorname{arccos}(-1/(r-1))}{\pi}\right)\frac{k-1}{2}.
\]
By Markov's inequality and by using that $|\la(T_G)|$ is an integer quantity, we see that such a random hyperplane gives us at most the expected number of labels with probability at least $1/(\mathbf{E}(|\la(T_G)|)+1)$.
If we pick $\mathbf{E}(|\la(T_G)|)+1)$ hyperplanes independently of each other, with probability at least $1-e^{-1}$ one of them will use at most $\mathbf{E}(|\la(T_G)|)$ labels. Since $\mathbf{E}(|\la(T_G)|)$ is a polynomial in $k$, we can run Lemma~\ref{lem:single-eval-algorithm} for each hyperplane to obtain the following algorithm guarantee.
\begin{theorem}
\label{th:algorithm-colorv}
 Assume we are given a vector coloring of the host graph of some value $r$.
 Then there is a Monte-Carlo polynomial space algorithm for finding a $(k,l)$-tree running in time
 \[O^*\left(2^{\left( \frac{k+l}{2}+\left(1-\frac{\operatorname{arccos}(-1/(r-1))}{\pi}\right)\frac{k-1}{2}\right)} \right)\]
 which never reports a false positive, and reports false negatives with constant probability.
\end{theorem}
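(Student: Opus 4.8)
The plan is to convert the displayed expectation bound derived just above the statement into an algorithm in exactly the way Lemma~\ref{lem:single-eval-algorithm} was turned into Theorem~\ref{th:algorithm-random-partition}, the only new ingredient being that the bipartition $V=V_1\cup V_2$ now comes from a random hyperplane rather than from independent coin flips. Fix, for the analysis only, a $(k,l)$-tree $T_G$ in $G$, and set
\[\rho := \left\lceil \frac{k+l}{2}+\left(1-\frac{\arccos(-1/(r-1))}{\pi}\right)\frac{k-1}{2}\right\rceil,\]
a quantity that depends only on $k$, $l$ and the given vector-coloring value $r$, hence is computable and is $O(k+l)$. The algorithm repeats the following $N:=\rho+1$ times: sample a unit vector $\mathbf h\in\mathbb R^n$ uniformly at random, put $u\in V_1$ iff $\mathbf v_u\cdot\mathbf h\ge 0$ and $u\in V_2$ otherwise, and run the algorithm of Lemma~\ref{lem:single-eval-algorithm} on this bipartition with parameter $\rho$. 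It outputs YES iff some call returned YES; finally the success probability is boosted to any desired constant by $O(1)$ independent repetitions of the whole procedure.

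The ``no false positive'' guarantee is immediate: the algorithm of Lemma~\ref{lem:single-eval-algorithm} never reports YES when $G$ has no $(k,l)$-tree, regardless of the bipartition, so neither does ours. For the false-negative bound I would carry out the (already sketched) expectation computation: by linearity of expectation $|\la(T_G)| = l + |I(T_G)\cap V_1| + |\{uv\in E(T_G):u,v\in V_2\}|$, each internal node lies in $V_1$ with probability $\tfrac12$, and — by the Goemans--Williamson lemma quoted above together with $\mathbf v_u\cdot\mathbf v_v\le -1/(r-1)$ and monotonicity of $\arccos$ — each edge $uv$ has both endpoints in $V_2$ with probability $\tfrac12\bigl(1-\tfrac1\pi\arccos(\mathbf v_u\cdot\mathbf v_v)\bigr)\le\tfrac12\bigl(1-\tfrac1\pi\arccos(-1/(r-1))\bigr)$. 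Summing over the $k-l$ internal nodes and $k-1$ edges yields $\Expect{|\la(T_G)|}\le\frac{k+l}{2}+\bigl(1-\tfrac{\arccos(-1/(r-1))}{\pi}\bigr)\frac{k-1}{2}\le\rho$. Since $|\la(T_G)|$ is a nonnegative integer, Markov's inequality gives $\Pr{|\la(T_G)|\le\rho}\ge\frac1{\rho+1}=\frac1N$, so with probability at least $1-(1-\tfrac1N)^N\ge 1-e^{-1}$ at least one of the $N$ sampled bipartitions is ``nice'', i.e.\ satisfies $|\la(T_G)|\le\rho$. Conditioned on a nice bipartition — and here I would note that the hyperplane randomness and the internal randomness of Lemma~\ref{lem:single-eval-algorithm} are independent — the corresponding call returns YES with probability at least $\tfrac12$, so the procedure answers YES with probability at least $(1-e^{-1})/2$, a positive constant, amplified as stated.

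For the running time: the algorithm makes $N=O^*(1)$ calls to Lemma~\ref{lem:single-eval-algorithm}, each running in $O^*(2^\rho)$ time and polynomial space, for a total of $O^*(2^\rho)=O^*\bigl(2^{\frac{k+l}{2}+(1-\arccos(-1/(r-1))/\pi)(k-1)/2}\bigr)$ time and polynomial space, as claimed.

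The main obstacle here, such as it is, is purely bookkeeping rather than conceptual: one must be careful with the integrality/Markov step (ensuring the $\Omega(1/\mathrm{poly})$ hit probability is honest and that rounding the expectation up to $\rho$ costs only an $O^*(1)$ factor), and with stating the clean separation of the two sources of randomness so that Lemma~\ref{lem:single-eval-algorithm}, which is phrased for a \emph{fixed} bipartition, may legitimately be applied to the random one. Everything else reduces to the Goemans--Williamson probability formula and Lemma~\ref{lem:single-eval-algorithm}, both already in hand.
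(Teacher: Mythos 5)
Your proof is correct and follows essentially the same route as the paper: compute the expectation of $|\la(T_G)|$ under a Goemans--Williamson random hyperplane cut, apply Markov's inequality together with integrality of $|\la(T_G)|$ to get an $\Omega(1/\mathrm{poly})$ probability of a ``nice'' bipartition, repeat $O(\mathrm{poly})$ times, and invoke Lemma~\ref{lem:single-eval-algorithm} on each bipartition. The only (welcome) difference is that you explicitly round the expectation bound up to the integer $\rho$ before applying both Markov and Lemma~\ref{lem:single-eval-algorithm}, which makes the Markov step cleaner than the paper's slightly looser phrasing, at the cost of only an $O^*(1)$ factor in the running time.
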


\section{Corollaries}

\subsection{Solving {\sc $(k,l)$-Tree} and {\sc $k$-IST} in Bounded Degree Graphs}\label{sec:bounddeg}

For a graph of bounded degree $\Delta$ that is neither complete nor an odd cycle, it is possible to construct a proper $\Delta$-coloring in linear time (e.g., by Lov\'asz's proof of Brooks' theorem \cite{deltacol}). Note that, given a graph that is either complete or an odd cycle, {\sc $(k,l)$-Tree} can clearly  be solved in polynomial time. In conjunction with Theorem \ref{th:algorithm-color}, this yields the following result.

\begin{corollary}
\label{cor:(k,l)-tree-bouded-degree}
There is a Monte-Carlo polynomial space algorithm for {\sc $(k,l)$-Tree} in graphs of bounded degree $\Delta$ which runs in time
$\displaystyle{O^*\left(2^{\left( 1 - \frac{x(\Delta-x)}{\Delta(\Delta-1)}\right) k + \left(1-\frac{x}{\Delta}\right)l}\right)}$, where $x=\lfloor\frac{\Delta+(l/k)(\Delta-1)}{2}\rceil$. In particular, the running time can be bounded by $\displaystyle{O^*\left(2^{\left(1 - \frac{\floor{\frac{\Delta}2} \ceil{\frac{\Delta}2}}{\Delta(\Delta-1)}\right)k + \frac{l}2}\right)}$.
With $l=2$, the same result holds for \probkPath, and with $k=n$, for \probHamilton.
\end{corollary}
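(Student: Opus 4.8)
The plan is to read off Corollary~\ref{cor:(k,l)-tree-bouded-degree} directly from Theorem~\ref{th:algorithm-color} together with the classical fact that a connected graph of maximum degree $\Delta$ which is neither a complete graph nor an odd cycle admits a proper $\Delta$-coloring computable in linear time, by Lov\'asz's constructive proof of Brooks' theorem~\cite{deltacol}. So the first step is to handle the two exceptional cases: if the input graph $G$ (or one of its connected components) is a complete graph $K_{\Delta+1}$ or an odd cycle, then a $(k,l)$-tree inside that component can be detected in polynomial time by brute force, since such a component has bounded size (in the $K_{\Delta+1}$ case) or extremely restricted structure (in the odd cycle case, the only subtrees are paths, so only $l=2$ is relevant). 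Since $\Delta$ is fixed, we can first decompose $G$ into connected components, deal with the bad components separately, and thereby assume from now on that $G$ has a proper $\Delta$-coloring in hand.

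Next I would simply invoke Theorem~\ref{th:algorithm-color} with $d=\Delta$: feeding the proper $\Delta$-coloring into that algorithm yields a Monte-Carlo polynomial-space algorithm for $(k,l)$-Tree with the claimed running time
$O^*\!\left(2^{\left(1-\frac{x(\Delta-x)}{\Delta(\Delta-1)}\right)k+\left(1-\frac{x}{\Delta}\right)l}\right)$ with $x=\lfloor\frac{\Delta+(l/k)(\Delta-1)}{2}\rceil$, and in particular the simpler bound $O^*\!\left(2^{\left(1-\frac{\floor{\Delta/2}\ceil{\Delta/2}}{\Delta(\Delta-1)}\right)k+\frac{l}{2}}\right)$, with one-sided error exactly as stated. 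A small bookkeeping point: if the graph has several components one must run the detection component-by-component (a $(k,l)$-tree lives in a single component), but this only multiplies the running time by the number of components, which is absorbed into $O^*$. The specialization to \probkPath is the observation already noted in the introduction that $(k,2)$-Tree and \probkPath coincide, so plugging $l=2$ gives the \probkPath bound; and plugging $k=n$ forces the tree to be spanning, and a spanning tree with exactly $2$ leaves is a Hamiltonian path, giving the \probHamilton bound. (For \probHamilton the exceptional-case discussion is likewise trivial: $K_{\Delta+1}$ with $n=\Delta+1$ and odd cycles are handled directly.)

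There is essentially no mathematical obstacle here — the corollary is a packaging result. The only point requiring a little care is making the exceptional-case carve-out airtight: one must be sure that "neither complete nor an odd cycle'' is the precise hypothesis of Brooks' theorem and that both exceptions genuinely admit a polynomial-time (indeed trivial) algorithm for $(k,l)$-Tree, including the subtlety that for an odd cycle no subtree has more than two leaves so the theorem statement's "$l$ leaves'' is only nonempty for $l=2$, where it reduces to asking whether the cycle has a path on $k$ vertices. Once that is dispatched, the rest is a direct application of Theorem~\ref{th:algorithm-color}, and I would expect the write-up to be only a few lines.
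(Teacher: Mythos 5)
Your proposal is correct and follows essentially the same route as the paper: invoke Lov\'asz's constructive Brooks' theorem to obtain a proper $\Delta$-coloring in linear time, dispatch the exceptional complete-graph and odd-cycle cases directly, and then plug $d=\Delta$ into Theorem~\ref{th:algorithm-color}, specializing $l=2$ for \probkPath and $k=n$ for \probHamilton. The only difference is that you spell out the component-by-component handling and the odd-cycle degeneracy ($l=2$ only) a bit more explicitly than the paper does, which is a harmless elaboration.
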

In particular, if $\Delta < 13$ we get \probkPath and \probHamilton algorithms in $O^*(1.655^k)$ and $O^*(1.655^n)$ time, respectively, that are faster than the algorithms from~\cite{BHKK-narrow-sieves, bjorklund-hamilton}.

As shown in \cite{zehavi-ipec13}, a graph $G$ of bounded degree $\Delta$ has a spanning tree with at least $k$ internal vertices {\em iff} it contains a tree with exactly $k$ internal vertices and at most $k - \frac{k-2}{\Delta-1} = \frac{\Delta-2}{\Delta-1}k+O(1)$ leaves.\footnote{For graphs with a {\em small} bounded degree, this bound is better than the one of Strategy A in Section \ref{sec:kIST} (i.e., $l\leq 0.8628 k$). Indeed, for $\Delta\leq 8$, this bound implies that $l\leq 0.8572 k$.}
Therefore, by the above corollary,\footnote{Note that now, in the context of $k$-IST, $k$ is the number of internal vertices, where in Corollary \ref{cor:(k,l)-tree-bouded-degree}, it is the total number of vertices.} we can solve {\sc $k$-IST} in time $\displaystyle{O^*\left(2^{\left( 1 - \frac{x(\Delta-x)}{\Delta(\Delta-1)}\right)\left(1+ \frac{\Delta-2}{\Delta-1}\right)k + \left(1-\frac{x}{\Delta}\right)\left(\frac{\Delta-2}{\Delta-1}\right)k}\right)}$, where $x=\lfloor\frac{3\Delta^2-6\Delta+2}{2(2\Delta-3)}\rceil$.
That is, we can solve {\sc $k$-IST} in time $\displaystyle{O^*\left(2^{\left[(\frac{3\Delta-5}{\Delta-1})-(\frac{x(3\Delta^2+3x+2-6\Delta-2x\Delta)}{\Delta(\Delta-1)^2})\right]k}\right)}$.
Moreover, the worse running time is obtained when $k+l=n$ {\em and} $l=\frac{\Delta-2}{\Delta-1}k$. Therefore, the worse running time is obtained when $k=\frac{\Delta-1}{2\Delta-3}n$, which makes our algorithm run in time $\displaystyle{O^*\left(2^{\left[(\frac{3\Delta-5}{\Delta-1})-(\frac{x(3\Delta^2+3x+2-6\Delta-2x\Delta)}{\Delta(\Delta-1)^2})\right]\left[\frac{\Delta-1}{2\Delta-3}\right]n}\right)}$, which is equal to $\displaystyle{O^*\left(2^{\left[(\frac{3\Delta-5}{2\Delta-3})-(\frac{x(3\Delta^2+3x+2-6\Delta-2x\Delta)}{\Delta(\Delta-1)(2\Delta-3)})\right]n}\right)}$. Thus, we also have the following result.

\begin{corollary}\label{cor:kISTbounded}
There is a Monte-Carlo polynomial space algorithm for {\sc $k$-IST} in graphs of bounded degree $\Delta$ which runs in time
\[\displaystyle{O^*(\min\left\{2^{\left[(\frac{3\Delta-5}{\Delta-1})-(\frac{x(3\Delta^2+3x+2-6\Delta-2x\Delta)}{\Delta(\Delta-1)^2})\right]k},2^{\left[(\frac{3\Delta-5}{2\Delta-3})-(\frac{x(3\Delta^2+3x+2-6\Delta-2x\Delta)}{\Delta(\Delta-1)(2\Delta-3)})\right]n}\right\})},\]
where $x=\lfloor\frac{3\Delta^2-6\Delta+2}{2(2\Delta-3)}\rceil$.
\end{corollary}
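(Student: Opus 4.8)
The plan is to combine the bounded-degree {\sc $(k,l)$-Tree} algorithm of Corollary~\ref{cor:(k,l)-tree-bouded-degree} with the sharpened witness characterisation of Zehavi~\cite{zehavi-ipec13} quoted just above. By that characterisation, a graph $G$ of maximum degree $\Delta$ has a spanning tree with at least $k$ internal vertices iff for some $l$ with $2\le l\le\frac{\Delta-2}{\Delta-1}k+O(1)$ it contains a subtree with exactly $k$ internal vertices and exactly $l$ leaves; such a subtree has $k+l$ nodes and $l$ leaves, i.e.\ it is a $(k+l,l)$-tree. So the algorithm loops over these $O(k)$ values of $l$ and for each calls Corollary~\ref{cor:(k,l)-tree-bouded-degree} with node parameter $k+l$ and leaf parameter $l$ (that corollary already disposes of the complete-graph and odd-cycle cases in polynomial time), reporting \emph{yes} iff some call does. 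One-sided error is preserved: the subroutine never outputs a false positive, and if a witness with $l_0$ leaves exists then the call for $l=l_0$ outputs \emph{yes} with probability bounded below by a constant. A point to keep straight is the notational clash --- $k$ is the total number of nodes inside Corollary~\ref{cor:(k,l)-tree-bouded-degree}, but the number of internal nodes here.

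Next I would locate the worst $l$. By Lemma~\ref{lem:color-bound}, the running-time exponent of the call with node parameter $k+l$ and $l$ leaves can be written as $\min_{t\in\{0,\dots,\Delta\}}\left[\left(1-\frac{t(\Delta-t)}{\Delta(\Delta-1)}\right)(k+l)+\left(1-\frac t\Delta\right)l\right]$, the minimum over this (convex-in-$t$) objective being attained at the integer $t=x$ with $x=\lfloor\frac{\Delta+(l/(k+l))(\Delta-1)}{2}\rceil$. For each fixed $t$ this is an affine function of $l$ whose coefficient of $l$ equals $\left(1-\frac{t(\Delta-t)}{\Delta(\Delta-1)}\right)+\left(1-\frac t\Delta\right)>0$, so it is strictly increasing in $l$; hence the minimum over $t$ is increasing in $l$ as well, and the running time of the whole loop is dominated by the term $l=\frac{\Delta-2}{\Delta-1}k$ (the additive $O(1)$ slack changes the exponent by only $O(1)$, which $O^*(\cdot)$ absorbs). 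At this $l$ we get $\frac{l}{k+l}=\frac{\Delta-2}{2\Delta-3}$, whence $x=\lfloor\frac{\Delta+\frac{\Delta-2}{2\Delta-3}(\Delta-1)}{2}\rceil=\lfloor\frac{3\Delta^2-6\Delta+2}{2(2\Delta-3)}\rceil$, precisely the $x$ in the statement.

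It then remains to substitute and simplify. Plugging $k+l=\frac{2\Delta-3}{\Delta-1}k$ and $l=\frac{\Delta-2}{\Delta-1}k$ into the exponent $\left(1-\frac{x(\Delta-x)}{\Delta(\Delta-1)}\right)(k+l)+\left(1-\frac x\Delta\right)l$ and collecting terms --- using $(\Delta-x)(2\Delta-3)+(\Delta-2)(\Delta-1)=3\Delta^2+3x+2-6\Delta-2x\Delta$ --- yields the $k$-exponent coefficient $\frac{3\Delta-5}{\Delta-1}-\frac{x\bigl(3\Delta^2+3x+2-6\Delta-2x\Delta\bigr)}{\Delta(\Delta-1)^2}$, which is purely mechanical. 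For the moderately-exponential bound I would observe that a witness tree uses $k+l$ of the $n$ vertices, so any nontrivial instance satisfies $k+l\le n$; with $l$ at its worst value this forces $k\le\frac{\Delta-1}{2\Delta-3}n$, and substituting $k=\frac{\Delta-1}{2\Delta-3}n$ multiplies the $k$-exponent coefficient by $\frac{\Delta-1}{2\Delta-3}$, giving the stated $n$-exponent; taking the minimum of the two bounds completes the proof. There is no deep obstacle here --- the result is essentially an instantiation of Corollary~\ref{cor:(k,l)-tree-bouded-degree} --- the only care needed being the monotonicity argument that justifies evaluating everything at the extreme $l$, the two meanings of $k$, and verifying that the rounding defining $x$ together with the $O(1)$ leaf-slack do not disturb the leading exponent.
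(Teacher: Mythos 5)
Your proposal is correct and follows essentially the same route as the paper: apply Zehavi's bounded-degree witness characterisation, plug the resulting leaf bound $l\le\frac{\Delta-2}{\Delta-1}k+O(1)$ into Corollary~\ref{cor:(k,l)-tree-bouded-degree}, and specialise to $k+l\le n$ for the $n$-exponent. The one thing you add that the paper leaves implicit is the explicit monotonicity-in-$l$ argument (each fixed-$t$ exponent is affine increasing in $l$, hence so is the pointwise minimum over $t$) justifying that the running time is dominated by the extreme $l=\frac{\Delta-2}{\Delta-1}k$; the paper simply asserts this.
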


\begin{table}[center]
\centering
\begin{tabular}{|l|c|c|c|c|c|}
	\hline
	Reference                         & Variant    & $\Delta=3$       & $\Delta=4$       & $\Delta=5$       & $\Delta=6$       \\\hline\hline		
	Zehavi~\cite{zehavi-ipec13}       & Directed   & $O^*(2.51985^k)$ & $O^*(2.99662^k)$ & $O^*(3.24901^k)$ & $O^*(3.40267^k)$ \\\hline
	Corollary \ref{cor:kISTbounded}   & Undirected & $O^*(2.24493^k)$ & $O^*(2.66968^k)$ & $O^*(2.87787^k)$ & $O^*(3.00355^k)$ \\\hline\hline		
	Raible et al.~\cite{kISPbounddeg} & Undirected & $O^*(1.96799^n)$ & $O^*(1.98735^n)$ & $O^*(1.99376^n)$ & $O^*(1.99777^n)$ \\\hline	
	Corollary \ref{cor:kISTbounded}   & Undirected & $O^*(1.71449^n)$ & $O^*(1.80251^n)$ & $O^*(1.82948^n)$ & $O^*(1.84227^n)$ \\\hline		
\end{tabular}\smallskip
\caption{Some concrete figures for running times of algorithms for {\sc $k$-IST} in bounded degree graphs.}
\label{tab:knownresults3}
\end{table}

\subsection{Subcubic Graphs of Large Girth}
\label{sec:subcubic-large girth}

Fractional chromatic number has been intensively studied for graphs of maximum degree 3 (aka subcubic graphs) of large girth.
Perhaps the most important result in this field is due to Dvo\v{r}\'{a}k, Sereni and Volec~\cite{dvorak}, stating that every subcubic triangle-free graph has fractional chromatic number at most $14/5$.
Unfortunately, their proof does not provide a polynomial-time algorithm for finding such a coloring.
By Lemma~\ref{lem:fract-color-bound}, such an algorithm would imply an algorithm for $(k,l)$-tree running in time
$O^*(2^{\frac{9}{14}k + \frac{5}{14}l})=O^*(1.562^k1.281^l)$.
However, we may use an earlier result of Liu~\cite{liu}, which gives a bound of $\frac{43}{15}$.
Liu's proof is inductive and can be turned to a polynomial-time algorithm which finds a $(516:180)$-coloring~\cite{liu-personal}.
By the second claim of Lemma~\ref{lem:fract-color-bound}, we get the following corollary.

\begin{corollary}
  There is a $O^*(1.571^k 1.274^{l})$-time Monte-Carlo polynomial space algorithm for finding a $(k,l)$-tree in a subcubic triangle-free graph.
\end{corollary}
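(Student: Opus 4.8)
The plan is to apply the second part of Lemma~\ref{lem:fract-color-bound} (with $t=1$), which says that if $G$ has an $(a:b)$-coloring, then there is a single color $r\in[a]$ such that putting $V_1=\{v : r\notin c(v)\}$ and $V_2=V(G)\setminus V_1$ yields a $(k,l)$-tree $T_G$ with $|\la(T_G)| \le (1-\tfrac{b}{a})k + \tfrac{b}{a}l$. Here I would invoke Liu's result~\cite{liu}, as made algorithmic in~\cite{liu-personal}, to obtain in polynomial time a $(516:180)$-coloring of the given subcubic triangle-free graph; so $\tfrac{b}{a} = \tfrac{180}{516} = \tfrac{15}{43}$ and $1-\tfrac{b}{a} = \tfrac{28}{43}$.

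The next step is to feed this fixed bipartition into the single-evaluation machinery. By Lemma~\ref{lem:color-bound}'s fractional analog there is a good choice of the single color, but since we do not know which color that is, we simply try all $a=516$ candidate colors (a polynomial amount of work), and for each resulting bipartition $V=V_1\cup V_2$ we run the algorithm of Lemma~\ref{lem:single-eval-algorithm} with $r = \ceil{(1-\tfrac{b}{a})k + \tfrac{b}{a}l}$. For the correct color, $|\la(T_G)|\le r$, so by Lemma~\ref{lem:single-eval-algorithm} that run answers YES with probability at least $\tfrac12$; amplifying with a constant number of repetitions drives the error probability below any constant. If no $(k,l)$-tree exists, every run answers NO. This is exactly the content of Theorem~\ref{th:algorithm-fract-color} specialized to $t=1$, $a=516$, $b=180$, giving running time $O^*(2^{\frac{28}{43}k + \frac{15}{43}l})$.

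The last step is just the numerical bookkeeping: $2^{28/43} = 1.5706\ldots < 1.571$ and $2^{15/43} = 1.2735\ldots < 1.274$, so the running time is $O^*(1.571^k\,1.274^l)$ as claimed, with polynomial space (inherited from Lemma~\ref{lem:single-eval-algorithm}) and one-sided (false-negative) error. I do not expect any real obstacle here — the corollary is an immediate instantiation of Theorem~\ref{th:algorithm-fract-color}; the only genuinely external ingredient is the existence of a polynomial-time algorithm producing the $(516:180)$-coloring, which the paper cites from~\cite{liu,liu-personal}. The mildest care needed is to confirm that the $t=1$ bound $(1-\tfrac{b}{a})k+\tfrac{b}{a}l$ is the one being used (rather than the more complicated optimal-$t$ bound), which is the right choice here precisely because one cannot compute a $(14/5)$-type coloring in polynomial time and the $(516:180)$-coloring is what is actually available.
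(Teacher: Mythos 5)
Your proposal matches the paper's own proof: both invoke Liu's $(516:180)$-coloring of subcubic triangle-free graphs (made algorithmic via~\cite{liu-personal}), then apply the second claim of Lemma~\ref{lem:fract-color-bound} (i.e., Theorem~\ref{th:algorithm-fract-color} with $t=1$) to obtain the running time $O^*(2^{\frac{28}{43}k+\frac{15}{43}l}) = O^*(1.571^k\,1.274^l)$. The reasoning and numerical bookkeeping are correct, and the construction of trying all $a$ colors with the single-evaluation subroutine is exactly what the paper's combination of Lemma~\ref{lem:single-eval-algorithm} and Lemma~\ref{lem:fract-color-bound} yields.
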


Hatami and Zhu~\cite{hatami} analyzed the following simple algorithm for sampling an independent set $I$: choose a random bijection $\pi:[n]\rightarrow V(G)$, and for every $i=1,\ldots,n$, include $\pi(i)$ in the independent set $I$ if none of its neighbors is included in $I$.
They show lower bounds $p_k$ for the probability that any vertex $v$ is included in $I$, assuming that $G$ is subcubic and has girth at least $2k+1$.
In particular, they proved that $p_3=1/2.78571\ge 0.3589$, $p_5=1/2.67215\ge 0.3742$ and $p_7=1/2.66681 \ge 0.3749$.
These bounds, together with Lemma~\ref{thm:fractional-sampling} imply Monte-Carlo algorithms for finding a $(k,l)$-tree in a subcubic graphs of girth at least 7, 11 and 15 running in times $O^*(1.5595^k1.283^l)$, $O^*(1.5431^k1.2962^l)$ and $O^*(1.5423^k1.2969^l)$, respectively.

\subsection{{\sc $k$-Path} in Undirected Subcubic Graphs}
\label{sec:triangle-elim}

In this section we reduce the {\sc $k$-Path} problem in subcubic graphs to vertex-weighted {\sc $k$-Path} in triangle-free subcubic graphs.
This is inspired by a similar technique used by Feder, Motwani and Subi~\cite{FederMS02} and Eppstein~\cite{Eppstein07} for finding long cycles, where triangles are removed at a cost of introducing weights at edges.
By plugging in results from Section~\ref{sec:subcubic-large girth}, we get an $O(1.571^k)$-time algorithm.
This is faster than the algorithm running in time $O(2^{2/3k})=O(1.588^k)$ that follows from Corollary~\ref{cor:(k,l)-tree-bouded-degree}.
The improvement we get is admittedly modest, but at least it shows that the natural barrier of $O(2^{2/3k})$ can be broken.

Let $G$ be a graph and let $w:V\rightarrow\mathbb{N}$ be a weight function.
Let vertices $a$, $b$ and $c$ form a triangle in $G$.
By $G/abc$ we denote the graph obtained by contracting vertices $a$, $b$ and $c$ into a single vertex $t$, without introducing multiple edges or loops (this is sometimes called a $\Delta-Y$ transform).
Let $w/abc:V(G/abc)\rightarrow\mathbb{N}$ be a weight function defined by
\[
w/abc(x) = \begin{cases}
w(x) & \mbox{if } x\ne t, \\
w(a)+w(b)+w(c) & \mbox{otherwise}.\\
\end{cases}
\]

\begin{lemma}
\label{lem:triangle-elim}
 Let $(G,w)$ be an undirected vertex-weighted subcubic graph, and let vertices $a$, $b$ and $c$ form a triangle in $G$.
 If $(G,w)$ contains a path of weight $k$ and length $d$, then $(G/abc,w/abc)$ contains a path of weight at least $k$ and length at most $d$.
 Moreover, if $(G/abc,w/abc)$ contains a path of weight $k$, then $(G,w)$ contains a path of weight $k$.
\end{lemma}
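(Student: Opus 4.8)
The statement has two directions, and the plan is to handle them separately since they encode different requirements. For the first (``only if'') direction, suppose $(G,w)$ contains a path $P$ of weight $k$ and length $d$. I would do a case analysis on how $P$ interacts with the triangle $\{a,b,c\}$: since $G$ is subcubic, each of $a,b,c$ has at most one neighbour outside the triangle, so the intersection $P \cap \{a,b,c\}$ is quite constrained. The cases are: (i) $P$ avoids all of $a,b,c$ — then $P$ is untouched in $G/abc$ and has the same weight and length; (ii) $P$ passes through exactly one of them, say $a$ — then replace $a$ by $t$; the new vertex $t$ has weight $w(a)+w(b)+w(c) \ge w(a)$, so the weight does not decrease, and the length is unchanged; (iii) $P$ uses two of them, say $a$ and $b$ (necessarily consecutively, via the triangle edge $ab$, or via $a$–$c$–$b$) — then the relevant subpath through the triangle contracts to the single vertex $t$ of weight $w(a)+w(b)+w(c)$, which dominates the weight of whatever portion of the triangle $P$ used, while the length can only go down; (iv) $P$ uses all three — similar, with the whole sub-walk through the triangle collapsing to $t$. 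In every case the resulting walk in $G/abc$ is still a simple path (contraction of a connected subpath cannot create a repeated vertex), has length $\le d$, and weight $\ge k$. The one point requiring care is that after contraction a path ``entering'' the triangle at one outside-neighbour and ``leaving'' at another is still a valid path through $t$ — this uses subcubicity to ensure no two of the outside-neighbours coincide in a way that would force a repeated vertex.

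For the second (``if'') direction, suppose $Q$ is a path of weight $k$ in $G/abc$. If $Q$ does not use $t$, then $Q$ is already a path in $G$ of the same weight and we are done. If $Q$ uses $t$, it enters and leaves $t$ along at most two edges; each such edge corresponds in $G$ to an edge from some outside-neighbour to one of $a,b,c$. The plan is to ``uncontract'' $t$: replace the single occurrence of $t$ in $Q$ by a subpath inside $G[\{a,b,c\}]$ that connects the appropriate endpoints and uses \emph{all three} of $a,b,c$ (possible because $\{a,b,c\}$ is a triangle — any two of its vertices are joined by a path through the third). This subpath has total weight $w(a)+w(b)+w(c) = (w/abc)(t)$, so the resulting walk in $G$ has weight exactly $k$; it is simple because the inserted vertices $a,b,c$ are new (they were contracted away, hence absent from $Q$) and the triangle-path visits each at most once. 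If $t$ was an endpoint of $Q$, or $Q = (t)$ alone, the same construction works with a one- or two-vertex piece of the triangle, again accounting for the full weight by choosing to include all of $a,b,c$ when the degree constraints permit, or just the needed vertices otherwise — note that here we only claim weight exactly $k$, not control on the length, so we have freedom to pad with extra triangle vertices.

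The main obstacle I expect is the bookkeeping in the ``only if'' direction: verifying in each case that contracting the triangle never \emph{increases} the length and never \emph{decreases} the weight, and in particular checking that the portion of $P$ lying inside the triangle contributes weight at most $w(a)+w(b)+w(c)$ (immediate, since that portion is a sub-multiset of $\{w(a),w(b),w(c)\}$) while the portion of $P$ lying strictly outside is carried over verbatim. Subcubicity is used precisely to keep the number of cases finite and to guarantee that an outside-neighbour of the triangle is adjacent to only one of $a,b,c$, so there is never ambiguity about where an incident edge of $P$ reattaches after contraction. Everything else is a routine check that ``contract a connected subpath of a simple path'' yields a simple path, which I would state once and reuse.
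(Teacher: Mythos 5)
Your second direction is correct and follows the paper's route. The first direction has a genuine gap in cases (iii) and (iv): you assert that the triangle vertices used by $P$ appear consecutively in $P$, but this need not hold. Concretely, if a vertex $a'$ outside the triangle is adjacent to both $a$ and $b$ (perfectly legal in a subcubic graph), then $P = a\,a'\,b$ is a simple path whose two triangle vertices are not $P$-adjacent; naively contracting the triangle turns it into $t\,a'\,t$, which revisits $t$ and is not a simple path. More generally, when $a,b\in V(P)$ and $c$ is not on the $P$-stretch between them, subcubicity only forces one of $a,b$ to be an endpoint of $P$ --- it does not force them to be adjacent --- and with all three on $P$ the same kind of scattering can occur. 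Your remark that ``contraction of a connected subpath cannot create a repeated vertex'' is true but inapplicable here, since the triangle vertices on $P$ need not form a connected subpath; and, incidentally, the claim that an outside-neighbour of the triangle is adjacent to only one of $a,b,c$ is also not a consequence of subcubicity (though that particular observation is harmless).

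What is missing is a re-routing step, which the paper performs explicitly before contracting. If the triangle vertices of $P$ are not consecutive, pick two of them, say $a$ and $b$, such that the $P$-subpath $a\,v_1\cdots v_p\,b$ between them avoids $c$. Since each of $a,b$ has at most one neighbour outside the triangle, and neither is $P$-adjacent to the other, the degree bound forces one of $a,b$ to be an endpoint of $P$; say $a$ is. Replacing the edge $bv_p$ of $P$ by the triangle edge $ab$ gives a path $P_1$ on the same vertex set (hence same weight) and of the same length, in which $a$ and $b$ are adjacent; a further degree-bound check shows that $c$, if on $P_1$, must be $P_1$-adjacent to $b$. Only now do the triangle vertices occupy a contiguous block, and contracting that block to $t$ yields a genuine simple path in $G/abc$ of length at most $d$ and weight at least $k$. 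Your argument, lacking this step, establishes the conclusion only when the triangle vertices already happen to be consecutive on $P$.
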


\begin{proof}
 Let $P$ be a path of largest weight in $(G,w)$ among paths of length at most $d$. 
 If $\{a,b,c\}\cap V(P) = \emptyset$ then $P$ is a path of weight $w(P)$ in $(G/abc,w/abc)$.
 Hence assume $\{a,b,c\}\cap V(P) \ne \emptyset$.
 We claim that there is a path $P_1$ of length at most $d$ in $(G,w)$ such that vertices $\{a,b,c\}\cap V(P)$ appear consecutively in $P$.
 Indeed, if $\{a,b,c\}\cap V(P)$ do not appear consecutively in $P$ then for two of them, say for $a$ and $b$, for some integer $p\ge 1$, path $P$ contains a subpath $a v_1 \dots v_p b$, such that $c\not\in\{v_1,\ldots,v_p\}$.
 Since $G$ is subcubic, one of the vertices $a$, $b$ is an endpoint of $P$; assume w.l.o.g.\ that $a$ is an endpoint of $P$.
 Then we set $P_1 = P - \{bv_p\} \cup \{ab\}$. Note that $P_1$ has the same length as $P$ and if $c\in V(P_1)$, then $c$ is a neighbor of $b$ in $P_1$.
 Hence $P_1$ is the claimed path.
 Let $t$ be the vertex of $G/abc$ obtained by contracting triangle $abc$.
 After replacing $cba$ in $P_1$ by $t$, we get a path of weight at least $w(P)$ in $(G/abc,w/abc)$.

 Consider a path $P'$ in $(G/abc,w/abc)$.
 If $t\not\in V(P')$, then $P'$ is a path of weight $w/abc(P')$ in $(G,w)$.
 Otherwise it is easy to see that we can obtain a path of weight $w/abc(P')$ in $(G,w)$ by replacing $t$ in $P'$ by one of the paths $abc$, $bca$ or $cab$.
\end{proof}

\begin{corollary}
\label{cor:triangle-elim}
 For a given an undirected subcubic graph $G$ one can build in $O(n)$ time a weighted subcubic triangle-free graph $(G',w)$ such that
 $G$ contains a path of length $k$ iff $G'$ contains a path of weight at least $k$ and length at most $k$.
\end{corollary}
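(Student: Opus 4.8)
The plan is to iterate the triangle-contraction operation of Lemma~\ref{lem:triangle-elim} until no triangle remains, starting from the given subcubic graph $G$ equipped with the all-ones weight function $w\equiv 1$. Since each $\Delta$-$Y$ transform replaces three vertices by one, it strictly decreases the number of vertices, so after at most $n$ steps we reach a triangle-free weighted subcubic graph $(G',w')$. The first thing to check is that the class of subcubic graphs is closed under the operation $G\mapsto G/abc$: contracting a triangle $abc$ into a single vertex $t$, while suppressing multiple edges and loops, produces a vertex $t$ whose neighbours are exactly the vertices outside $\{a,b,c\}$ adjacent to at least one of $a,b,c$; because each of $a,b,c$ already has a neighbour inside the triangle, each contributes at most one such outside neighbour, so $\deg(t)\le 3$, and no other vertex's degree increases. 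Hence the operation may legitimately be repeated and the invariant ``subcubic'' is maintained throughout.

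Next I would establish the path-length/weight correspondence by composing the two directions of Lemma~\ref{lem:triangle-elim} along the sequence of contractions. For the forward direction: if $G$ has a path of length $k$, then (taking $d=k$ and $w\equiv 1$, so the weight equals the length $k$) the lemma gives, after one contraction, a path of weight at least $k$ and length at most $k$ in the contracted graph; applying the lemma again with this new path and $d$ equal to its current length (which is $\le k$) preserves ``weight $\ge k$'' and ``length $\le$ current length $\le k$''. Iterating through all contractions yields a path in $G'$ of weight at least $k$ and length at most $k$. For the reverse direction: the ``moreover'' clause of the lemma lets us lift a path of weight exactly $k$ in the contracted graph back to a path of weight exactly $k$ one level down, and iterating lifts a weight-$k$ path in $G'$ all the way back to a weight-$k$ path in $G$; since $w\equiv 1$ in $G$, a path of weight $k$ there is a path of length $k$, which is what we want. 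Combining the two directions gives the stated ``iff.''

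Finally I would address the running time. A single triangle can be found and contracted in $O(1)$ amortized time if we maintain, say, for each vertex the (constant-size) neighbourhood and a list of currently present triangles, updating it locally after each contraction since only the new vertex $t$ and its $\le 3$ neighbours are affected; alternatively one simply notes that in a subcubic graph the total number of triangles is $O(n)$ and each contraction can be processed in constant time with the right bookkeeping, so the whole process is $O(n)$. The main obstacle, and the only genuinely delicate point, is the bookkeeping in the forward direction: one must be careful that after a contraction the resulting object is still a \emph{simple} path (the lemma's proof already handles the rerouting that makes the three triangle vertices consecutive before collapsing them), and that the ``length at most $k$'' bound is genuinely monotone under iteration rather than merely holding at the first step — which it is, because each application only shortens the path or keeps its length, never lengthens it.
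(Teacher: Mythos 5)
Your proposal is correct and follows the same approach as the paper's proof, which is simply: assign weight $1$ to every vertex and repeatedly apply Lemma~\ref{lem:triangle-elim} until the graph is triangle-free. You spell out the details the paper leaves implicit (closure of subcubic graphs under the $\Delta$-$Y$ transform, monotonicity of the weight and length bounds under iteration, and the $O(n)$ bookkeeping), all of which are sound.
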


\begin{proof}
Assign weight 1 to every vertex and apply Lemma~\ref{lem:triangle-elim} as long as the graph contains a triangle.
\end{proof}

Now we observe that Lemma~\ref{lem:single-eval-algorithm} generalizes to vertex weighted graphs, provided that the weights are integral, positive and not too big, similarly as shown in~\cite{bjorklund-hamilton}.

\begin{lemma}
\label{lem:single-eval-algorithm-weighted}
Let $W$ be a positive integer.
Let $G$ be an undirected graph with a vertex-weight function $w:V(G)\rightarrow [W]$.
Let $V(G)=V_1 \cup V_2$ be a fixed bipartition of the vertex set of the host graph $G$.
There is an algorithm running in $O^*(2^rW+W\log W)$ time and polynomial space such that
\begin{itemize}
 \item If $G$ does not contain a $(k,l)$-tree of weight at least $W$, then the algorithm always answers NO,
 \item If $G$ contains a $(k,l)$-tree $T_G$ of weight at least $W$ such that $|\la(T_G)|\le r$, then the algorithm answers YES with probability at least $\frac{1}2$.
\end{itemize}
\end{lemma}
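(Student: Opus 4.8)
The plan is to reuse the polynomial-identity-testing scheme of Lemma~\ref{lem:single-eval-algorithm}, enriched by one extra formal variable $\zeta$ that records the total vertex weight of a branching walk, exactly as Bj\"orklund does for weighted \probHamilton in~\cite{bjorklund-hamilton}. For a branching walk $B=(T,h)$ set $w(B)=\sum_{x\in V(T)}w(h(x))$, and for a bijective labeling $\ell:\la(B)\to[|\la(B)|]$ define the weighted monomial $\mon(B,\ell)\cdot\zeta^{w(B)}$. Summing these over all admissible $(k,l,i)$-fixed branching walks and all bijective $\ell$, exactly as in Section~\ref{sec:polynomial}, yields weighted polynomials $\widetilde P_i$ and $\widetilde P_{\downarrow r}=\sum_{i=2}^{r}\widetilde P_i$; each $\widetilde P_i$ has degree $\le k+i\le k+r$ in the variables $\x,\y,\z$ and degree between $k$ and $kW$ in $\zeta$ (each of the $k$ nodes contributes one factor $\zeta^{w}$ with $1\le w\le W$). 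I would first prove weighted refinements of Lemmas~\ref{lem:canceling} and~\ref{lem:polynomial}.

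The cancellation argument of Lemma~\ref{lem:canceling} carries over verbatim, because each of its three re-pairing operations leaves the multiset $\{h(x):x\in V(T)\}$ unchanged, hence leaves $w(B)$ unchanged: in Case~1 the branching walk $B$ is not modified at all, and in Cases~2 and~3 the new walk is $(T'',h'')$ with $h''=h\circ\pi^{-1}$ for a bijection $\pi$ between the node set of $T$ (which equals the node set of the re-rooted / re-wired tree $T'$) and that of $T''$, so $h''$ has the same image multiset as $h$. Consequently paired monomials pick up the same factor $\zeta^{w(B)}$ and still cancel over a field of characteristic two. Restricting the reconstruction argument of Lemma~\ref{lem:polynomial} to branching walks of a fixed total weight $m$ then gives: for every integer $m$, the coefficient of $\zeta^{m}$ in $\widetilde P_{\downarrow r}$ is not identically zero (as a polynomial in $\x,\y,\z$) if and only if $G$ contains a $(k,l)$-tree $T_G$ of weight exactly $m$ with $\la(T_G)\le r$. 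In particular, $\widetilde P_{\downarrow r}$ has a monomial of $\zeta$-degree at least $W$ iff $G$ contains a $(k,l)$-tree of weight $\ge W$ with $\la(T_G)\le r$, and if $G$ has no $(k,l)$-tree of weight $\ge W$ at all then every coefficient of $\zeta^{m}$ with $m\ge W$ vanishes identically.

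The algorithm fixes a field $\field{2^s}$ with $2^s\ge\max\{2(k+r),\,kW+1\}$, substitutes independent uniformly random field elements for $\x,\y,\z$, keeps $\zeta$ formal, computes the resulting univariate polynomial $g\in\field{2^s}[\zeta]$ of degree $\le kW$, and answers \emph{yes} iff $g$ has a nonzero coefficient in some degree $\ge W$. To compute $g$ I would use the inclusion--exclusion split of Lemma~\ref{lem:ie} together with the dynamic program of Lemma~\ref{lem:eval:X}, modified in the obvious way so that each node $x$ of the branching walk contributes an extra factor $\zeta^{w(h(x))}$ (equivalently, each table cell now stores a polynomial in $\zeta$ recording how the accumulated monomial splits according to total weight). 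Every table operation becomes an addition or multiplication of polynomials of $\zeta$-degree $\le kW$ over $\field{2^s}$, costing $\widetilde O(kW)$ field operations via FFT; running this for all $2^{r}$ sets $X$ costs $O^*(2^{r}W)$, plus $O(W\log W)$ for field and FFT setup, i.e. $O^*(2^{r}W+W\log W)$ in total and polynomial space. (Alternatively one may evaluate $g$ at $kW+1$ distinct values of $\zeta$ by running the unmodified $O^*(2^{r})$-time weighted algorithm and then interpolate.) For correctness: if $G$ has no $(k,l)$-tree of weight $\ge W$, then each coefficient of $\zeta^{m}$ with $m\ge W$ is the zero polynomial, so $g$ has no term of degree $\ge W$ and the algorithm always says \emph{no}; if $G$ has a $(k,l)$-tree $T_G$ of weight $W_0\ge W$ with $|\la(T_G)|\le r$, then the coefficient of $\zeta^{W_0}$ in $\widetilde P_{\downarrow r}$ is a nonzero polynomial of degree $\le k+r$, so by the Schwartz--Zippel Lemma~\ref{lem:schwarz-zippel} it evaluates to a nonzero field element with probability $\ge 1-(k+r)/2^{s}\ge\tfrac12$, and then $g$ has a nonzero coefficient of degree $W_0\ge W$ and the algorithm says \emph{yes}.

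The only genuinely new point is the observation in the second paragraph that the three pairing operations in the proof of Lemma~\ref{lem:canceling} are weight-preserving; everything else is routine bookkeeping following the unweighted development: the weight-refinement of Lemma~\ref{lem:polynomial}, the extension of the dynamic program of Lemma~\ref{lem:eval:X} to carry a weight-generating polynomial in $\zeta$ at each cell, and the passage from ``weight exactly $W$'' to ``weight at least $W$'' (handled simply by scanning the top coefficients of $g$).
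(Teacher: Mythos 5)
Your proposal is correct and follows essentially the same route as the paper's own sketch: introduce an extra weight-tracking indeterminate (you call it $\zeta$, the paper uses $\eta$), note that the pairing in Lemma~\ref{lem:canceling} preserves the node image multiset and hence the weight, group terms by the exponent of the new variable, and recover the largest non-vanishing weight coefficient by interpolation after random substitution of the other indeterminates. The only cosmetic difference is that you spell out explicitly why the three pairing operations are weight-preserving and offer, as a primary route, a $\zeta$-polynomial-valued dynamic program in place of pointwise evaluation plus interpolation; the paper simply asserts the cancellation carries over and goes directly to the interpolation route.
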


\begin{proof}
 (Sketch.)
 We add an additional indeterminate $\eta$ in such a way that a vertex $x$ of weight $w(x)$ contributes $\eta^{w(x)}$ to the monomial.
 More precisely, for a branching walk $B$ and labelling $\ell$, we redefine the corresponding monomial as follows.
\[\mon(B,\ell) =z_{h(1)}\prod_{\substack{\{u,v\}\in E(T)\\ u<v}} x_{h(u),h(v)} \prod_{q\in \la(B)}y_{h(q),\ell(q)} \prod_{x\in V(T)} \eta^{w(h(x))}\,,\]
 It is straightforward to verify that the proof of Lemma~\ref{lem:canceling} is still valid in this setting and the polynomial evaluation algorithm can be easily adapted. Let us group the terms of $P_{\downarrow r}$ according to the exponent at $\eta$.
\[P_{\downarrow r} = \sum_{i\ge 0} R_i \eta^i.\]
 Our task reduces now to determine whether there is an integer $i^*\ge W$ such that the polynomial $R_{i^*}$ is non-zero.
 This is done in a standard way, by sampling the values of all indeterminates except for $\eta$.
 Then, by Schwartz-Zippel Lemma, $R_{i^*}(\x,\y,\z)\ne 0$ with probability at least $1/2$, provided that we use a field and of size at least $6k$ (and still of characteristic two).
 The value of the largest integer $i$ such that $R_{i}(\x,\y,\z)\ne 0$ can be determined using interpolation in $O(W\log W)$ time after $W$ evaluations of the polynomial, see e.g.~\cite{bjorklund-hamilton} or~\cite{BKK-motif-stacs} for details.
\end{proof}

By combining Corollary~\ref{cor:triangle-elim}, Lemma~\ref{lem:single-eval-algorithm-weighted} (for $l=2$), Lemma~\ref{lem:fract-color-bound} and Liu's bound~\cite{liu} similarly as in Section~\ref{sec:subcubic-large girth}, we get the following corollary.

\begin{corollary}
\label{cor:k-path-subcubic}
  There is a $O^*(1.571^k)$-time Monte-Carlo polynomial space algorithm for finding a $k$-path in an undirected subcubic graph.
\end{corollary}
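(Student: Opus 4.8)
The plan is to chain together four ingredients already developed in the paper, so that the only real work is bookkeeping. First I would apply Corollary~\ref{cor:triangle-elim} to the input subcubic graph $G$: in $O(n)$ time we obtain a vertex-weighted subcubic triangle-free graph $(G',w)$ with the property that $G$ has a $k$-path if and only if $G'$ has a path of weight at least $k$ and length at most $k$. Since we want to search only for paths with at most $k$ vertices, we may restrict attention to $(k',2)$-trees in $G'$ for $k'=1,\ldots,k$; a path of weight $\ge k$ and length $\le k$ in $(G',w)$ corresponds, for the right $k'\le k$, to a $(k',2)$-tree of weight at least $k$. So it suffices to test, for each $k'\le k$, whether $G'$ contains a $(k',2)$-tree of weight $\ge k$, and answer yes iff one of these tests succeeds.

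Next I would bring in the fractional coloring machinery. By Liu's bound~\cite{liu} together with the algorithmic remark~\cite{liu-personal}, in polynomial time we can construct a $(516:180)$-coloring of the triangle-free subcubic graph $G'$, i.e.\ an $(a:b)$-coloring with $b/a=180/516=15/43$. By the second part of Lemma~\ref{lem:fract-color-bound} (the $t=1$ case), there is a single color $r^\star\in[a]$ such that setting $V_1=\{v : r^\star\notin c(v)\}$ and $V_2=V\setminus V_1$ yields, for the hidden $(k',2)$-tree $T_{G'}$, the bound $|\la(T_{G'})| \le (1-\tfrac{b}{a})k' + \tfrac{b}{a}\cdot 2 \le (1-\tfrac{15}{43})k + O(1) = \tfrac{28}{43}k + O(1)$. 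We do not know which color $r^\star$ works, so we simply try all $a=O(1)$ of them.

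For each candidate bipartition I would then invoke the weighted single-evaluation algorithm of Lemma~\ref{lem:single-eval-algorithm-weighted} with $r=\ceil{\tfrac{28}{43}k}+O(1)$, $l=2$, and $W=k$ (the weights are positive integers bounded by $n$, and after contractions by $n$, so the $W$-factor is polynomial and the running time is $O^*(2^r)$). If the true bipartition (the one for the color $r^\star$ guaranteed by Lemma~\ref{lem:fract-color-bound}) is used, the algorithm reports YES with probability at least $1/2$; repeating $O(k)$ times boosts this to a constant, and taking the disjunction over all colors $r^\star$, all $k'\le k$, and all repetitions preserves one-sided error. The overall running time is $O^*(2^{28k/43})$, and since $2^{28/43}<1.571$, this gives the claimed $O^*(1.571^k)$ bound with polynomial space.

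The only mildly delicate point—and the one I would expect to be the main obstacle—is verifying that Lemma~\ref{lem:single-eval-algorithm-weighted} correctly handles the "weight at least $W$" semantics here, i.e.\ that after the $\Delta$-$Y$ contractions the resulting weights are genuinely positive integers (they are sums of original unit weights, hence $\ge 1$) and polynomially bounded, so that the interpolation over the exponent of $\eta$ costs only $O(W\log W)=O^*(1)$ and the field size $O(k)$ still suffices for the Schwartz-Zippel bound. Once that is checked, everything else is a direct composition of the cited results, and the arithmetic $2^{28/43}<1.571$ is routine.
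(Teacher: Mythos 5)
Your proposal is correct and follows essentially the same route as the paper's one-line proof, which simply cites Corollary~\ref{cor:triangle-elim}, Lemma~\ref{lem:single-eval-algorithm-weighted} (with $l=2$), Lemma~\ref{lem:fract-color-bound}, and Liu's $(516:180)$-coloring bound. You have correctly unpacked the ``similarly as in Section~\ref{sec:subcubic-large girth}'' shorthand: iterate over $k'\le k$, try all $a=516=O(1)$ colors for the $t=1$ case of Lemma~\ref{lem:fract-color-bound} to get $r\le(28/43)k+O(1)$, and verify $2^{28/43}<1.571$. Your attention to the weighted interpolation details (weights stay positive, are bounded by $n$ after $\Delta$-$Y$ contractions, so the $W$-factor stays polynomial) is exactly the bookkeeping the paper leaves implicit.
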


We note, similarly as we did in Section~\ref{sec:subcubic-large girth}, that if there is a (possibly randomized) polynomial-time algorithm that finds an $a:b$ coloring for $a/b=14/5$, then the above bound improves to $O^*(1.562^k)$.

\ignore{

Now we observe that Lemma~\ref{lem:single-eval-algorithm} generalizes to vertex weighted graphs, provided that the weights are integral, positive and not too big, similarly as in~\cite{bjorklund-hamilton}.
Let us give some details specific to our setting.
This is achieved by adding an additional indeterminate $\eta$ in such a way that a vertex $x$ of weight $w(x)$ contributes $\eta^{w(x)}$ to the monomial.
More precisely, let $w:V(G)\rightarrow \mathbb{N}_+$ be a weight function.
{\em Weight} of a branching walk $B=(T,h)$ is defined as $w(B)=\sum_{x\in V(T)} w(h(x))$.
Similarly, for a subtree $T_G\subseteq G$, weight of $T_G$ is defined as $w(T_G)=\sum_{x\in V(T_G)} w(x)$.
The following weighted version of Lemma~\ref{lem:canceling} holds, by exactly the same proof.

\begin{lemma}
\label{lem:canceling-weighted}
For any integer $W$, the set of pairs of the form $(B,\ell)$, where $B$ is a non-simple, admissible and $(k,l,i)$-fixed branching walk of weight $W$ and $\ell:\la(B)\rightarrow[|\la(B)|]$ is a bijection can be partitioned into pairs, and the two monomials corresponding to each pair are identical.
\end{lemma}

For a branching walk $B$ and labeling $\ell$ we redefine the corresponding monomial as follows.
\[\monW(B,\ell) =z_{h(1)}\prod_{\substack{\{u,v\}\in E(T)\\ u<v}} x_{h(u),h(v)} \prod_{q\in \la(B)}y_{h(q),\ell(q)} \prod_{x\in V(T)} \eta^{w(h(x))}\,,\]
We define $Q_i$ and $Q_{\downarrow r}$ as $P_i$ and $P_{\downarrow r}$ by replacing $\mon$ by $\monW$.
Let us group the terms of $Q_{\downarrow r}$ according to the exponent at $\eta$.
\[Q_{\downarrow r} = \sum_{i\ge 0} R_i \eta^i.\]
Using Lemma~\ref{lem:canceling-weighted} it is easy to extend the proof of Lemma~\ref{lem:polynomial} to the following weighted version.

\begin{lemma}
\label{lem:polynomial-weighted}
For any integer $W$, the polynomial $R_W$ is non-zero iff the input graph contains a subtree $T_G$ of weight $W$ with $k$ nodes and $l$ leaves, and such that $\la(T_G)\le r$.
\end{lemma}

It is obvious that $Q_{\downarrow r}$ can be evaluated by a minor modification of the algorithm that evaluates $P_{\downarrow r}$, without changing the running time.
It follows that one can find the largest $W$ such that $R_W$ is non-zero using interpolation, as described by Bj\"{o}rklund~\cite{bjorklund-hamilton} (Section 5).

}

\subsection{Solving {\sc $(k,l)$-Tree} in Graphs with Small Vector Chromatic Number}\label{sec:vectorap}
One of the most striking powers of semidefinite programming is that it is possible to obtain a vector coloring of value $r=\chi_v(G)+\epsilon$ for any fixed $\epsilon>0$ in polynomial time, see for instance~\cite{Gartner}. Combining such a vector coloring algorithm with Lemma~\ref{th:algorithm-colorv}, we have the following corollary.

\begin{corollary}
\label{cor:(k,l)-tree-vector-coloring}
There is a Monte-Carlo polynomial space algorithm for {\sc $(k,l)$-Tree} that for every fixed constant $\epsilon>0$ runs in time
 \[O^*\left(2^{\left( \frac{k+l}{2}+\left(1-\frac{\operatorname{arccos}(-1/(\chi_v(G)+\epsilon-1))}{\pi}\right)\frac{k-1}{2}\right)} \right).\]
With $l=2$ the same result holds for \probkPath and with $k=n$ for \probHamilton.
\end{corollary}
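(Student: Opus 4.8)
The plan is to obtain the vector coloring algorithmically and then invoke Theorem~\ref{th:algorithm-colorv} as a black box. First I would apply the polynomial-time algorithm based on semidefinite programming (see~\cite{Gartner}): for any fixed $\epsilon>0$ it produces a vector coloring of $G$ of value $r := \chi_v(G)+\epsilon$. We may assume $G$ has at least one edge, since otherwise {\sc $(k,l)$-Tree} is trivially decidable ($k\le 1$ is immediate and $k\ge 2$ is a no-instance); then Theorem~\ref{th:chain} gives $\chi_v(G)\ge\omega(G)\ge 2$, so $r>1$ and ``vector coloring of value $r$'' is well-defined, matching the hypothesis of Theorem~\ref{th:algorithm-colorv}.

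Next I would run the algorithm of Theorem~\ref{th:algorithm-colorv} on $G$ together with this coloring. That theorem guarantees a Monte-Carlo polynomial-space procedure that never reports a false positive, reports false negatives with constant probability, and runs in time $O^*(2^{(\frac{k+l}{2}+(1-\frac{\operatorname{arccos}(-1/(r-1))}{\pi})\frac{k-1}{2})})$. Substituting $r=\chi_v(G)+\epsilon$ yields precisely the stated bound; no optimization over $r$ is required because the target bound already carries the additive $\epsilon$. The specializations $l=2$ (for \probkPath) and $k=n$ (for \probHamilton) then follow by plugging those values in.

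I do not expect a genuine obstacle here; the work is bookkeeping. The points to verify are: the coloring step runs in polynomial time (given); its output is a legal input for Theorem~\ref{th:algorithm-colorv}; the composition preserves polynomial space and the one-sided error guarantee (it does, as both ingredients do); and, for context rather than correctness, that the exponent is monotone increasing in $r$ --- as $r$ grows, $-1/(r-1)$ increases toward $0$, hence $\operatorname{arccos}(-1/(r-1))$ decreases and $1-\frac{\operatorname{arccos}(-1/(r-1))}{\pi}$ increases --- which is why replacing the exact value $\chi_v(G)$ by the approximate $\chi_v(G)+\epsilon$ only mildly weakens the bound, exactly as reflected in the statement.
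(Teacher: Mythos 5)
Your approach matches the paper's exactly: compute a vector coloring of value $\chi_v(G)+\epsilon$ via the SDP-based polynomial-time algorithm of~\cite{Gartner}, then invoke Theorem~\ref{th:algorithm-colorv} as a black box and specialize $l=2$ and $k=n$. The extra sanity checks you add (handling edgeless graphs, ensuring $r>1$ via $\omega(G)\ge 2$, and noting the monotonicity of the exponent in $r$) are correct and harmless elaborations of what the paper leaves implicit.
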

In particular, if $\chi_v(G)\leq 8$, we get \probkPath and \probHamilton algorithms in $O^*(1.655^k)$ and $O^*(1.655^n)$ time, respectively, that are faster than the previous algorithms \cite{BHKK-narrow-sieves, bjorklund-hamilton}.

Let us observe that thanks to Theorem~\ref{th:chain}, we can apply Corollary~\ref{cor:(k,l)-tree-vector-coloring} whenever we know that a chromatic number or the fractional chromatic number is low, even if no efficient algorithm for finding the coloring is known.
For example, since $\chi_f \le 14/5$ for triangle-free subcubic graphs~\cite{dvorak}, by plugging it into Corollary~\ref{cor:(k,l)-tree-vector-coloring} and using the triangle-removing trick from Section~\ref{sec:triangle-elim}, one gets an $O^*(1.576^k)$-time algorithm for \probkPath in subcubic graphs (which is slightly worse than the bound of Corollary~\ref{cor:k-path-subcubic}, but avoids the use of the complicated Liu's algorithm~\cite{liu}).

\subsection*{Acknowledgments}

We thank Zdenek Dvo\v{r}\'{a}k and Dan Kr\'{a}l' for helpful discussions on fractional chromatic number.

\bibliographystyle{abbrv}
\bibliography{trees}

\end{document}